\documentclass{article}

\usepackage{amsmath}
\usepackage{amsthm}
\usepackage{appendix}
\usepackage{bbold}

\usepackage{tikz}
\usepackage{pgfplots}
\usepackage{pgf}
\usetikzlibrary{patterns}
\usepackage{float}
\usepackage{graphicx}
\usepackage{xcolor}
\usepackage{hyperref}

 \theoremstyle{plain}
  \newtheorem{thm}{Theorem}
  
  \newtheorem{lemma}[thm]{Lemma}
  \newtheorem{cor}[thm]{Corollary}
  
  \newtheorem*{thm*}{Theorem}
  \newtheorem*{prop*}{Proposition}
  \newtheorem*{lemma*}{Lemma}
  \newtheorem*{cor*}{Corollary}
  \newtheorem*{remark*}{Remark}

\theoremstyle{definition}
\newtheorem{definition}[thm]{Definition}
 \newtheorem*{conj*}{Conjecture}

\usepackage{fullpage}

\usepackage{todonotes}

\newenvironment{customlemma}[1]
  {\innercustomthm}
  {\endinnercustomthm}

\usepackage{authblk}

\def\norm #1{\Vert #1\Vert}

\def\bra #1{\langle #1\vert}
\def\ket #1{\vert #1\rangle}

\def\ketbra #1#2{\vert #1\rangle \langle #2\vert}
\def\kettbra#1{\ketbra{#1}{#1}}
\def\tr{{\rm Tr}}



\def\P{{\mathbb P}}




\newcommand*{\eps}{\varepsilon}

\newcommand*{\half}{\frac{1}{2}}

\newcommand*{\Ext}{\mathrm{Ext}}


\usepackage{ifthen}

\newcommand{\proj}[1]{\ket{#1}\bra{#1}}
\newcommand{\HminOp}{H_{\text{min}}}

\newcommand{\sHmin}[2][@]{\ensuremath{\HminOp(#2)\ifthenelse{\equal{#1}{@}}{}{_{#1}}}}
\newcommand{\lHmin}[2][@]{\ensuremath{\HminOp\left(#2\right)\ifthenelse{\equal{#1}{@}}{}{_{#1}}}}
\newcommand{\Hmin}[2][@]{\if@display\lHmin[#1]{#2}\else\sHmin[#1]{#2}\fi}

\newcommand{\sHminSmooth}[3][@]{\ensuremath{\HminOp^{#2}(#3)\ifthenelse{\equal{#1}{@}}{}{_{#1}}}}
\newcommand{\lHminSmooth}[3][@]{\ensuremath{\HminOp^{#2}\left(#3\right)\ifthenelse{\equal{#1}{@}}{}{_{#1}}}}
\newcommand{\HminSmooth}[3][@]{\if@display\lHminSmooth[#1]{#2}{#3}\else\sHminSmooth[#1]{#2}{#3}\fi}

\newcommand{\snorm}[1]{\ensuremath{\|#1\|}}
\newcommand{\lnorm}[1]{\ensuremath{\left\|#1\right\|}}
\newcommand{\trnorm}[1]{\if@display\lnorm{#1}\else\snorm{#1}\fi}

\begin{document}

\title{Quantum-proof multi-source randomness extractors \\ in the Markov model}

\author[1]{Rotem Arnon-Friedman}
\author[1]{Christopher Portmann}
\author[1,2]{Volkher B. Scholz}
\affil[1]{Institute for Theoretical Physics, ETH-Z\"urich, CH-8093, Z\"urich, Switzerland}
\affil[2]{Department of Physics and Astronomy, Ghent University, Krijgslaan 281, S9, B-9000 Ghent, Belgium}

\date{\today}

\maketitle

\begin{abstract}
	Randomness extractors, widely used in classical and quantum cryptography and other fields of computer science, e.g., derandomization, are functions which generate almost uniform randomness from weak sources of randomness. In the quantum setting one must take into account the quantum side information held by an adversary which might be used to break the security of the extractor.  
	In the case of seeded extractors the presence of quantum side information has been extensively studied. For multi-source extractors one can easily see that high conditional min-entropy is not sufficient to guarantee security against arbitrary side information, even in the classical case.
	Hence, the interesting question is under which models of (both quantum and classical) side information multi-source extractors remain secure. 
	In this work we suggest a natural model of side information, which we call the Markov model, and prove that any multi-source extractor remains secure in the presence of quantum side information of this type (albeit with weaker parameters). This improves on previous results in which more restricted models were considered or the security of only some types of extractors was shown.
\end{abstract}

\section{Introduction}\label{sec:intro}

Randomness extractors are of great importance in many applications in computer science, such as derandomization and cryptography.  The goal of a randomness extractor is to generate (almost) uniform randomness from weak sources of randomness. A weak source is usually modelled as a distribution $X$ over $\{0,1\}^n$ such that the min-entropy of $X$ is lower bounded by $k$: $H_{\text{min}}(X) \geq k$. That is, the source is defined via a probability distribution for which the probability of any string $x\in\{0,1\}^n$ is at most $2^{-k}$. The idea is then to apply a randomness extractor to the weak source, such that the output source $Y$ is indistinguishable from a uniformly random source. 

Unfortunately, no deterministic function can extract the randomness from all sources with a given min-entropy, even for sources with high min-entropy~\cite{santha1986generating,shaltiel2002recent}. The most common ways to avoid this problem are to consider seeded extractors and multi-source extractors. In the case of seeded extractors one uses an additional truly uniform (but short) and independent seed, together with the weak source, as the input to the extractor (see, e.g.,~\cite{impagliazzo1989pseudo,trevisan2001extractors,shaltiel2002recent}).

Alternatively, and of special importance in applications where a uniform seed is not available (e.g., in quantum randomness amplification protocols, see Section~\ref{sec:randomness_amplification}), multi-source randomness extractors can be used. In the multi-source case, instead of starting with one weak source $X$, one considers several \emph{independent} weak sources $X_1,X_2,\dotsc,X_l$ for some $l\geq 2$, with $H_{\text{min}}(X_i) \geq k_i$ for $i\in[l]$, as the input to the extractor (see, for example,~\cite{vazirani1987strong,dodis2004improved,raz05extractors, rao2009extractors,chattopadhyay2015explicit}). 

In all types of extractors the randomness present in the weak sources must be lower bounded for the extractor to work (i.e., a bound on the min-entropy is given as a promise). However, this randomness inherently depends on the information one has about the weak sources, or to put differently, on the \emph{side information} about the sources. For example, extractors are widely used for privacy amplification in cryptographic tasks. There, the starting point is that an adversary holds some side information $C$ about the source such that the \emph{conditional} min-entropy is bounded: $H_{\text{min}}(X|C)\geq k$. The extractor is then used to transform $X$ to a key $Y$, which should be close to uniform even conditioned on the side information $C$. If the extractor fulfils this requirement it is said to be secure. 

Depending on the application one can consider adversaries with classical or quantum side information and ask whether an extractor remains secure even in the presence of such side information (with slightly weaker parameters). For seeded extractors this question has been extensively studied. In the presence of classical adversaries the side information about $X$ can be translated to a decrease in the min-entropy and the extractor remains secure~\cite{konig2008bounded}. In the quantum case, it was further shown in~\cite{konig2008bounded} that all one-bit output extractors remain secure. It is still unknown whether all multi-bit output extractors remain secure (although the results of~\cite{berta2014quantum,quantumbilinear} go in this direction\footnote{Note that there is no contradiction between the results of~\cite{berta2014quantum,quantumbilinear} and the famous counter example of a seeded extractor which breaks in the presence of quantum side information given in~\cite{gavinsky2007exponential}; for details see~\cite{berta2014quantum,quantumbilinear}.}), but several constructions of seeded extractors with good parameters were shown to work also in the presence of quantum side information~\cite{renner2005universally,tomamichel2011extractor,de2012trevisan,hayashi2013dualhashing}.

When considering multi-source extractors things get more complicated, even in the classical case. To see this, consider any one-bit output two-source extractor and let the adversary hold as side information the output of the extractor $Y=\mathrm{Ext}\left( X_1,X_2 \right)$. As this is just one bit, $H_{\text{min}}\left(X_1|Y\right)\geq k_1-1$ and  $H_{\text{min}}\left(X_2|Y\right)\geq k_2 -1$. Furthermore, as the sources are independent even $H_{\text{min}}\left(X_1|YX_2\right)$ and $H_{\text{min}}\left(X_2|YX_1\right)$ remain high. Nevertheless, the extractor obviously fails to produce an output which looks uniform given the side information. 
In~\cite{kasher2010two} several more examples are given in which a small amount of classical side information breaks the extractor completely.
 
This implies that one cannot expect to have multi-source extractors which are secure against any classical or quantum side information and thus raises the question: 
\emph{under which assumptions on the structure of the sources and the side information $X_1\dotsi X_lC$ do multi-source extractors remain secure even in the presence of~$C$?} The main objective of this work is to answer this question. In particular, we define a natural condition on the sources and the side information for which \emph{all} multi-source extractors remain secure in the presence of both classical and quantum side information, but with an increase in the error of the extractor---the distance from uniform of the output.

\subsection{Results and contributions}

Our first contribution is a new definition of a quantum-proof multi-source extractor, which is simpler than previous proposals~\cite{kasher2010two,chung2014multi} and yet sufficient to extract from these models. The original classical extractor definition requires the sources to be independent, i.e., in the two-source case one must have $I(X_1:X_2) = 0$, where $I(\cdot:\cdot)$ denotes the mutual information. If an adversary is present and holds some side information $C$, the definition we introduce requires that the two sources be independent from the point of view of this adversary, i.e., $I(X_1:X_2|C) = 0$. This definition is valid for both classical and quantum side information $C$. This means that the sources and the side information should form a \emph{Markov chain} $X_1 \leftrightarrow C \leftrightarrow X_2$. For the case of more than two sources a similar Markov-type condition can be defined and we say that the sources and the side information are in the \emph{Markov model}. The formal definitions are given in Section~\ref{sec:markov_chain_extractors}.

Compared to previous definitions of quantum-proof multi-source extractors, this has several advantages. Firstly, it is a natural generalization of the original classical extractor definition and the extension to quantum side information from \cite{kasher2010two}, and it connects to the model of \cite{chung2014multi} in the following sense: any function satisfying our definition of a strong\footnote{An extractor is said to be strong in a set of its sources if even conditioned on all the sources in this set the output cannot be distinguished from uniform (see formal definition in Section~\ref{sec:markov_chain_extractors}).} extractor is also an extractor in the model of \cite{chung2014multi}---a more precise comparison to previous work is given in Section~\ref{sec:intro.related}. Secondly, we consider it much more natural to put a requirement on the structure of the global state $\rho_{X_1X_2C}$, instead of describing permissible adversarial strategies that generate the side information $C$, as in~\cite{kasher2010two,chung2014multi}. Thirdly, Markov chains arise naturally in certain applications. For example, in realisations of quantum randomness amplification protocols one can sometimes assume that the devices on which the experiment is being preformed have a Markov chain structure (for further details see Section~\ref{sec:randomness_amplification}).

We also show that extractors in the Markov model can be used to extract randomness from a larger set of states. We prove that a bound on the \emph{smooth} min-entropy~\cite{renner2008security} suffices for randomness extraction. This can be seen as a robustness property of the model, since in many applications one can only bound the smooth min-entropy rather than the min-entropy itself. In addition, we prove that any CPTP map performed on the side information---which might delete information and thus destroy the Markov property---cannot decrease the security of an extractor, hence extractors in the Markov model are also extractors for such non-Markov states\footnote{This includes, in particular, states constructed according to the model of~\cite{chung2014multi}.}.

Our second contribution is to prove that \emph{all} extractors (weak and strong) remain secure in this model, both in the classical and quantum case, albeit with weaker parameters. In the classical case the proof is pretty trivial and standard (and the result is indeed not surprising). Nevertheless, as we could not find it anywhere else in the literature, we give it in this work for completeness and as comparison to the quantum case.  
More specifically, for classical side information we prove the following theorem:

\begin{thm}\label{thm:l-sources.mc}
	Any $(k_1,\dotsc,k_l,\varepsilon)$-[strong] $l$-source extractor is a  $\left(k_1 + \log \frac{1}{\varepsilon},\dotsc, k_l + \log\frac{1}{\varepsilon},(l+1)\varepsilon\right)$-[strong] classical-proof $l$-source extractor in the Markov model.
\end{thm}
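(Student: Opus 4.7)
The plan is to condition on each value $c$ of the classical side information $C$, reducing the task to the no-side-information setting. For classical $C$, the Markov condition $I(X_i : X_{-i} \mid C) = 0$ is equivalent to saying that the conditional distribution $P_{X_1 \cdots X_l \mid C = c}$ is a product distribution of independent sources for every $c$. Hence, for each $c$, the premises of the original $l$-source extractor are met whenever the conditional min-entropies $H_{\text{min}}(X_i \mid C = c)$ are at least $k_i$.

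To show that this holds for most $c$, I would use the classical identity
$$2^{-H_{\text{min}}(X_i \mid C)} = \sum_c P(c)\, 2^{-H_{\text{min}}(X_i \mid C = c)}$$
together with Markov's inequality applied to the nonnegative random variable $2^{-H_{\text{min}}(X_i \mid C = c)}$. This gives
$$\Pr_c\!\left[ H_{\text{min}}(X_i \mid C = c) < H_{\text{min}}(X_i \mid C) - \log \tfrac{1}{\varepsilon} \right] \leq \varepsilon,$$
so under the assumption $H_{\text{min}}(X_i \mid C) \geq k_i + \log(1/\varepsilon)$ the bad set $B_i := \{c : H_{\text{min}}(X_i \mid C = c) < k_i\}$ has probability at most $\varepsilon$. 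A union bound over $i \in [l]$ yields that the overall bad set $B := \bigcup_{i=1}^{l} B_i$ satisfies $\Pr[C \in B] \leq l\varepsilon$.

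For every $c \notin B$, conditioned on $C = c$ the sources are mutually independent with min-entropy at least $k_i$ each, so by the $(k_1, \dotsc, k_l, \varepsilon)$-extractor property the output $Y = \Ext(X_1, \dotsc, X_l)$ is $\varepsilon$-close to uniform in trace distance. Decomposing the overall distance from uniform as an expectation over $c$ and bounding each bad value trivially by $1$ gives
$$\sum_{c \in B} P(c) + \sum_{c \notin B} P(c)\,\varepsilon \;\leq\; l\varepsilon + \varepsilon \;=\; (l+1)\varepsilon,$$
which is the claimed error. For the strong version, one incorporates the sources $X_T$ (for the strong set $T \subseteq [l]$) into the conditioning alongside $C$: by the Markov property, conditioned on $c$ the sources $X_T$ and $X_{[l]\setminus T}$ are independent, so fixing $X_T = x_T$ changes neither the distribution nor the min-entropy of the remaining sources, and the same averaging argument applies.

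The proof is essentially a two-step reduction and not technically difficult; the only subtle point is the Markov-inequality-based min-entropy splitting, which is precisely where the $\log(1/\varepsilon)$ slack in the entropy assumptions is consumed and which couples with the union bound over the $l$ sources to produce the $(l+1)\varepsilon$ error. The real difficulty, which this classical proof avoids entirely, is the quantum case treated elsewhere in the paper, where one cannot simply condition on each value of $C$.
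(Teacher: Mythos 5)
Your proof is correct and follows essentially the same route as the paper's: the paper proves the two-source case (Lemma~\ref{lem:two-source.mc}) via the identity $2^{-H_{\text{min}}(X|Z)}=\mathbb{E}_z[2^{-H_{\text{min}}(X|Z=z)}]$, Markov's inequality, a union bound over the sources, and conditional independence from the Markov property, then notes that the $l$-source case is the same argument. Your write-up simply carries out that generalization explicitly (including the harmless extra detail for the strong case), so there is nothing to correct.
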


The formal definitions of a (strong) $l$-source extractor and a (strong) classical-proof $l$-source extractor are given in Section~\ref{sec:classical-markov}. The important thing to note is that for the extractor to remain secure, the min-entropy of the sources needs to be just $\log\frac{1}{\varepsilon}$ higher, where $\varepsilon$ is the security parameter (or the error) of the extractor. This is exactly the same as in the case of seeded extractors~\cite{konig2008bounded} with classical side information. 

The main contribution of the current work is the quantum version of the theorem above:

\begin{thm}\label{thm:quantum_proof_extractors}
	Any $(k_1,\dotsc,k_l,\varepsilon)$-[strong] $l$-source extractor is a $\left(k_1 + \log \frac{1}{\varepsilon},\dotsc, k_l + \log\frac{1}{\varepsilon},\sqrt{(l+1)\varepsilon 2^{(m-2)}}\right)$-[strong] quantum-proof $l$-source extractor in the Markov model, where $m$ is the output length of the extractor.
\end{thm}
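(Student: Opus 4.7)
The strategy is to bootstrap from Theorem~\ref{thm:l-sources.mc} to the quantum case via a standard quantum-to-classical conversion that costs a $\sqrt{2^m}$ factor, in the spirit of the argument of K\"onig and Renner for seeded extractors.

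The central estimate is the classical-quantum trace-distance bound
\[
\|\rho_{YC}-\tau_Y\otimes\rho_C\|_1^{2} \;\leq\; 2^m\Bigl(\tr\bigl[\rho_{YC}^{\,2}\,(I_Y\otimes\rho_C^{-1})\bigr]-2^{-m}\Bigr),
\]
obtained by Cauchy--Schwarz on the classical output register $Y$: each block $\|P(y)\rho_C^y - 2^{-m}\rho_C\|_1$ is controlled by its $\rho_C^{-1/2}$-weighted Hilbert--Schmidt norm via H\"older's inequality, and summing the $2^m$ blocks via another Cauchy--Schwarz produces the $\sqrt{2^m}$ prefactor. This is the sole source of the $2^{m-2}$ factor in the announced bound.

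The right-hand side of the display equals the excess over $2^{-m}$ of the ``conditional collision probability'' of $Y$ given $C$, and can be rewritten as the classical distinguishing advantage of an appropriate POVM measurement $\{T_g\}_{g}$ on $C$ (essentially the pretty-good measurement for guessing $Y$). To conclude, I would measure $C$ with this POVM to obtain a classical outcome $G$, and then invoke Theorem~\ref{thm:l-sources.mc}. For this step to work, one needs that $X_1\cdots X_l G$ is again a classical Markov chain with min-entropies $H_{\min}(X_i|G)\geq H_{\min}(X_i|C)\geq k_i + \log(1/\varepsilon)$. This is precisely where the assumed quantum Markov structure of $\rho_{X_1\cdots X_lC}$ is used, via the Hayden--Jozsa--Petz--Winter decomposition of exact classical-quantum Markov chains into branches in which the $X_i$'s together with their pieces of $C$ factor as tensor products. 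The classical advantage is then bounded by $(l+1)\varepsilon$, and plugging back into the displayed inequality yields the claimed $\sqrt{(l+1)\varepsilon\cdot 2^{m-2}}$.

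The hard part will be the last step: choosing a measurement on $C$ that \emph{simultaneously} realizes the collision-probability quantity on the right-hand side of the inequality and preserves the Markov property, since arbitrary POVMs on $C$ need not preserve it. The HJPW block structure provides enough flexibility to align a pretty-good-type measurement with the decomposition so that both requirements hold. The $\log(1/\varepsilon)$ boost in min-entropy mirrors the classical case and is handled by the standard smoothing technique already used in Theorem~\ref{thm:l-sources.mc}, combined with the model's robustness under CPTP maps on $C$ established earlier in the paper.
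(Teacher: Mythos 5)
Your overall architecture matches the paper's: a Cauchy--Schwarz step converting the quantum distinguishing advantage into a classical one at the cost of a $\sqrt{2^m}$ factor, followed by an appeal to Theorem~\ref{thm:l-sources.mc}, with the Hayden--Jozsa--Petz--Winter decomposition used to keep the resulting classical side information inside the Markov model; the constants also come out right. However, the step you yourself flag as ``the hard part'' is a genuine gap, and the one concrete measurement you propose to fill it---``essentially the pretty-good measurement for guessing $Y$''---does not work. A measurement on $C$ whose outcome $G$ is a guess for $Y=\mathrm{Ext}(X_1,\dotsc,X_l)$ produces classical side information that correlates the sources: conditioning on a good guess of the output is precisely the counterexample from the introduction, so $I(X_i:X_{\bar i}|G)\neq 0$ and Theorem~\ref{thm:l-sources.mc} cannot be invoked. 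The HJPW block structure does not rescue this: within a single block $t$ the POVM elements $\rho_{C^t}^{-1/2}\tilde\rho_{C^t}^{\,y}\rho_{C^t}^{-1/2}$, with $\tilde\rho_{C^t}^{\,y}=\sum_{x:\mathrm{Ext}(x)=y}p(x|t)\rho_{C^t}^{\,x}$, are sums over the fibre $\{x:\mathrm{Ext}(x)=y\}$ and hence not of product form across $C_1^t\otimes\dotsb\otimes C_l^t$, so measuring them still destroys the conditional independence of the sources.

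The fix is to measure a guess for the \emph{tuple of sources} rather than for $Y$, and to apply $\mathrm{Ext}$ classically to that guess afterwards. Expanding your collision quantity gives
\[
\sum_y\tr\bigl[\tilde\rho_C^{\,y}\rho_C^{-1/2}\tilde\rho_C^{\,y}\rho_C^{-1/2}\bigr]
=\sum_{\substack{x,z:\\ \mathrm{Ext}(x)=\mathrm{Ext}(z)}} q(x,z),
\qquad
q(x,z)=p(x)\,\tr\bigl[\rho_C^{\,x}\,\rho_C^{-1/2}p(z)\rho_C^{\,z}\rho_C^{-1/2}\bigr],
\]
so the relevant POVM is $\{\rho_C^{-1/2}p(z)\rho_C^{\,z}\rho_C^{-1/2}\}_{z}$ indexed by source tuples $z=(z_1,\dotsc,z_l)$. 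In each HJPW block this POVM \emph{is} of product form across the $C_i^t$, so recording the block index $T$ together with the outcome $Z=(Z_1,\dotsc,Z_l)$ yields side information with $X_i\leftrightarrow ZT\leftrightarrow X_{\bar i}$ and $H_{\text{min}}(X_i|ZT)\geq H_{\text{min}}(X_i|C)$, after which a Lemma~\ref{lem:distinguishing_weak}-style argument and Theorem~\ref{thm:l-sources.mc} finish the proof. Two further points: you must record $T$ explicitly, since even a block-wise product measurement mixed over blocks need not leave $X_1\leftrightarrow Z\leftrightarrow X_2$ a Markov chain; and the paper implements the same idea dually, via the pretty-good purification and the transposed source POVMs acting on the purifying systems rather than a direct pretty-good measurement of $C$, which avoids manipulating $\rho_C^{-1}$ altogether.
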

 
The formal definitions of the extractors are given in Section~\ref{sec:quantum_markov}. As in the classical case, the min-entropy of the sources needs to be just $\log\frac{1}{\varepsilon}$ higher. The error itself is $\sqrt{(l+1)\varepsilon 2^{(m-2)}}$ where $l$ is the number of sources and $m$ is the number of output bits of the considered extractor\footnote{This matches exactly the bound proven in~\cite[Corollary 27]{kasher2010two} for the restricted case of product side information, $l=2$, and $m=1$. We note that this is also an improvement over the constructions in the model of~\cite{chung2014multi}, for which the error in~\cite[Theorem 5.3]{chung2014multi} for $l=2$ is of the form $2^m\sqrt{\varepsilon}$, i.e., an order of $\sqrt{2^m}$ worse than ours.}.

Although the blow-up in the error of the extractor in Theorem~\ref{thm:quantum_proof_extractors} might seem relatively high, one must note that many classical multi-source extractors have an error $\eps = 2^{-mc}$ for some constant $c > 1$, hence in the quantum case the new error is $\eps' =\frac{\sqrt{l+1}}{2} 2^{-m\frac{c-1}{2}}$, i.e., both the classical and quantum errors are of the order $2^{-\Omega(m)}$. We provide several explicit constructions in Section~\ref{sec:specific_constructions}, where we show how to achieve similar parameters to the classical case, even if $\eps \gg 2^{-m}$, by composing the multi-source extractor with a quantum-proof seeded extractor. 

Apart from presenting the Markov model for extractors and proving the theorems above, we also contribute on the technical level. While previous works use the techniques of~\cite{konig2008bounded} for the one-bit output case and then extend it using a quantum XOR lemma~\cite{kasher2010two}, we use a completely different proof technique which is based on the recent work of~\cite{berta2014quantum,quantumbilinear}. The advantage of our technique is that it also applies to weak extractors, whereas the techniques of~\cite{chung2014multi} require the extractors to be strong in order to prove that they are secure.  We extend on our proof technique in Section~\ref{sec:intro.technique}.

\subsection{Related work}
\label{sec:intro.related}

As far as we are aware, the question of the security of multi-source extractors in the presence of side information was considered only in two works: \cite{kasher2010two} and \cite{chung2014multi}. Both works deal with quantum side information, and classical side information can of course be taken as a special case. We are not aware of any works dealing with the case of classical side information directly. 

\cite{kasher2010two} initiated the study of multi-source extractors in the presence of side information. They considered the case of two sources and quantum side information in product form. More specifically, given the two independent sources $X_1$ and $X_2$, the side information is given by a state $\rho_{C_1}\otimes\rho_{C_2}$ such that $H_{\text{min}}(X_i|C_1C_2) = H_{\text{min}}(X_i|C_i) \geq k_i$. In this way, the side information does not break the independence of the sources\footnote{\cite{kasher2010two} also considered another model for the adversary, called the bounded storage model, in which an assumption is made on the size of the adversary's storage capacity. In this work we consider only the more general case, in which we make no assumption about the adversary's power. For more details see~\cite{kasher2010two}.}. It was proven in~\cite{kasher2010two} that any \emph{one-bit output} two-source extractor remains secure in the presence of product side information. They further show that a specific construction of a multi-bit output two-source extractor, that of~\cite{dodis2004improved}, is also secure in the considered model, by reducing it to the one-bit case.

Recently, another, more general model for an adversary was considered in~\cite{chung2014multi}. For simplicity, we explain here the model for the case of two sources only; see~\cite{chung2014multi} for the general definition. In~\cite{chung2014multi} the side information of the adversary must be created in the following way: in the beginning the adversary can have any bipartite quantum state $\rho_{E_1E_2}$, independent of the sources. Then, to create her final side information $\rho_{C_1C_2}$, she can correlate her state with the sources by performing an independent ``leaking operation'' from each source to one of the subsystems. More specifically, they model the leaking operation as a map for $i\in\{1,2\}$, $\Phi_i : \mathrm{L}(X_i \otimes E_i) \rightarrow \mathrm{L} (X_i \otimes C_i)$. The resulting classical-quantum state $\rho_{X_1X_2C_1C_2}$ can be written as $\rho_{X_1X_2C_1C_2} = \Phi_1 \otimes \Phi_2 (\rho_{X_1X_2E_1E_2})$. For the relevant conditions on the min-entropy see~\cite{chung2014multi}.

It was then proven in~\cite{chung2014multi} that for multi-source extractors which are strong in all but one source, this complex adversarial leaking operation is in fact equivalent to providing the adversary with side information about only one source.  That is, when using an extractor which is strong in all but one of its sources, any adversary who is restricted to the model of~\cite{chung2014multi} is in fact no stronger than an adversary who has side information about just one source.  It is further shown that several known extractor constructions are still secure when the adversary holds quantum side information about one of the sources---with an increase in the error of the extractor.  The leaking model of~\cite{chung2014multi} can also be defined for weak extractors. However, the proof techniques of~\cite{chung2014multi} only work for strong extractors, since they rely on the equivalence to side information about one source. Thus, there are currently no known extractor constructions that directly satisfy the weak extractor model from~\cite{chung2014multi}, without relying on an underlying strong extractor.

Our work is a natural generalization of \cite{kasher2010two}, since independent sources are a subset of Markov sources. The model from \cite{chung2014multi} is different from ours in the sense that there exist states $\rho_{X_1X_2C}$ which are Markov chains but cannot be constructed by the leaking model from \cite{chung2014multi} and vice versa. However, as already proven in \cite{chung2014multi}, for a function to satisfy their strong extractor definition, it is sufficient for it to be secure in the presence of side information about one of the sources. Since side information about one source is a Markov chain, it follows that any strong extractor in the Markov model is also a strong extractor in the leaking model of~\cite{chung2014multi}---for completeness, we provide a proof of this in Section~\ref{sec:nonmarkov_sources}. It is currently unknown whether the same statement holds for weak extractors. Interestingly, the converse statement also holds: we (implicitly) prove in this work that any function that is an extractor for side information in product form is an extractor in the Markov model with slightly weaker parameters. Since the leaking model from \cite{chung2014multi} includes states in product from, an extractor from \cite{chung2014multi} is also an extractor in the Markov model with slightly weaker parameters.

\subsection{Proof outline and techniques}\label{sec:intro.technique}

The proof of the classical result, i.e., Theorem~\ref{thm:l-sources.mc}, is quite standard. The main part of this work is therefore devoted to the quantum case---the proof of Theorem~\ref{thm:quantum_proof_extractors}. The main idea is to not consider the most general measurement that could be performed to distinguish the output of the extractor from uniform, but instead consider a specific strategy, which consists in first measuring the quantum side information, then trying to distinguish the output from uniform given the resulting classical side information. We first prove that this specific strategy is not much worse than the optimal strategy. Then we show that this classical side information satisfies the requirements of a classical two-source extractor in the Markov model. Thus, security in the quantum case follows from security in the classical case.

More specifically, the proof can be decomposed in the following steps.
\begin{enumerate}
	\item We start by considering only \emph{product} side information in Section~\ref{sec:product}. We employ ideas from~\cite{berta2014quantum,quantumbilinear}, where the security definition of the extractor is rewritten using operators inequalities, to give a bound in Lemma~\ref{lem:CS_weak} on the distance from uniform of the extractor output.
	\item Next (in Lemmas~\ref{lem:simple_bound_weak} and~\ref{lem:distinguishing_weak}) we simplify the bound by noting that it can be seen as a \emph{specific} simple distinguishing strategy when trying to distinguish the output of the extractor from uniform using the side information. This specific strategy is one in which the product side information is measured independently of the output of the extractor (while a general distinguisher could use more complicated distinguishing strategies). Hence we obtain a reduction from quantum to classical side information.
	\item We put this together in Lemma~\ref{lem:classical_markov_to_product_quantum}, to show that $\emph{any}$ multi-source extractor is secure in the presence of product side information\footnote{This can be seen as an extension of the result of~\cite{kasher2010two} but the proof is different.}.
	\item Finally, in Section~\ref{sec:extension_to_markov_chain}, we  extend the result from the product model to the quantum Markov model by exploiting the structure of quantum Markov-chain states, and by this prove that Theorem~\ref{thm:quantum_proof_extractors} holds.
 \end{enumerate}

 \paragraph{Organisation of the paper.} The rest of paper is organised as follows. In Section~\ref{sec:prel} we give some necessary preliminaries. Section~\ref{sec:markov_chain_extractors} is devoted to the definitions of classical and quantum-proof multi-source extractors in the Markov model. The proof of our main theorem, Theorem~\ref{thm:quantum_proof_extractors}, is then given in Section~\ref{sec:security_proof}. In Section~\ref{sec:extending_sources} we show that multi-source extractors in the Markov model can be used to extract from some sources that do not directly satisfy the definition. In Section~\ref{sec:specific_constructions} we give the parameters of explicit constructions of quantum multi-source extractors, i.e., we apply our results to some specific constructions of multi-source extractors.  In Section~\ref{sec:randomness_amplification} we further motivate the Markov model in the context of quantum randomness amplification protocols. We conclude in Section~\ref{sec:open_questions} with some open questions.

\section{Preliminaries}\label{sec:prel}

We assume familiarity with standard notation in probability theory as well as with basic concepts in quantum information theory including density matrices, positive-operator valued measures (POVMs), and distance measures such as the trace distance. We refer to, e.g.,~\cite{nielsen2010quantum} for an introduction to quantum information.

Throughout the paper $X,Y$ and $Z$ denote classical random variables while $A,B$ and $C$ denote quantum systems. All logarithms are in base 2.  $[l]$ denotes the set $\{1,2,\dotsc,l\}$ and for $i\in[l]$ we denote $\bar{i} = 1,\dotsc,i-1,i+1,\dotsc,l$.

If a classical random variable $X$ takes the value $x$ with probability $p_x$ it can be written as the quantum state $\rho_X = \sum_x p_x \ketbra{x}{x}$, where $\{\ket{x}\}_x$ is an orthonormal basis. If the classical system $X$ is part of a composite system $XC$, any state of that composite system can be written as $\rho_{XC} =  \sum_x p_x \ketbra{x}{x}\otimes \rho_C^x$. If $C$ is quantum we say that the state $\rho_{XC}$ is a classical-quantum state, or a cq-state. Similarly, a state $\rho_{X_1X_2C}$ classical on $X_1,X_2$ and quantum on $C$ is called a ccq-state. For two independent random variables $X$ and $Y$ we often write $X \circ Y$ to denote the joint random variable with product distribution.

For a quantum state $\rho_A$ we denote by $H(A)$ the Von Neumann entropy of $\rho_A$, i.e., $H(A) = -\tr(\rho_A \log \rho_A)$. The  conditional mutual information is defined as
\[
	I(A:B|C)= H(AC) + H(BC) - H(C) - H(ABC) \;.
\]
In the case of classical systems, the Von Neumann entropy is reduced to the Shannon entropy. That is, for a random variable $X$, $H(X)= - \sum_x p_x \log p_x$, where $p_x$ is the probability of $X=x$. 

Given a cq-state $\rho_{XC}=  \sum_x p_x \ketbra{x}{x}\otimes \rho_C^x$ the conditional min-entropy is $H_{\text{min}}(X|C) = -\log p_{\text{guess}}(X|C)$, where $p_{\text{guess}}(X|C)$ is the maximum probability of guessing $X$ given the quantum system $C$. That is, 
\[
	p_{\text{guess}}(X|C) = \max_{\{E^x_C\}_x} \left( \sum_x p_x \tr (E^x_C\rho^x_C) \right) \;,
\]
where the maximum is taken over all POVMs $\{E^x_C\}_x$ on $C$. For an empty system $C$, the conditional min-entropy of $X$ given $C$ reduces to the usual $H_{\text{min}}(X)= -\log \max_x p_x$. 
Furthermore, if a quantum system $C$ is measured and the measurement outcome is registered in the classical system $Z$ then the min-entropy can only increase, namely, $H_{\text{min}}(X|Z) \geq H_{\text{min}}(X|C)$.

\section{Multi-source extractors}\label{sec:markov_chain_extractors}
\subsection{Multi-source extractors in the presence of classical side information}\label{sec:classical-markov}

Two-source extractors are defined as follows. The extension of the definition to the case of more than two sources is straightforward. 

\begin{definition}[Two-source extractor, \cite{raz05extractors}]\label{def:two-source}
	A function $\mathrm{Ext}:\{0,1\}^{n_1}\times\{0,1\}^{n_2}\rightarrow\{0,1\}^m$ is called a $(k_1,k_2,\varepsilon)$ two-source extractor if for any two independent sources $X_1,X_2$ with $H_{\text{min}}\left(X_1\right)\geq k_1$ and $H_{\text{min}}\left(X_2\right)\geq k_2$, we have 
	\[
		\frac{1}{2}\| \mathrm{Ext}\left( X_1,X_2 \right) - U_m \| \leq \varepsilon \;,
	\]
	where $U_m$ is a perfectly uniform random variable on $m$-bit strings. $\mathrm{Ext}$ is said to be \emph{strong in the $i$'th input} if
  	\[
		\frac{1}{2} \|  \mathrm{Ext}(X_1,X_2)X_i - U_m \circ X_i \| \leq \varepsilon \;. 
	\]
	 If $\mathrm{Ext}$ is not strong in any of its inputs it is said to be weak. 
\end{definition}

As explained in Section~\ref{sec:intro}, in the classical case one can also consider the security of the extractor in the presence of classical side information, denoted by $Z$, held by an adversary. That is, we would like the output of the extractor to be indistinguishable from uniform also \emph{given} some additional classical information.  

Since multi-source extractors cannot remain secure in the presence of an arbitrary classical side information (recall the examples presented in Section~\ref{sec:intro}), we require the sources to be independent conditioned on the side information. Formally:

\begin{definition}[Classical Markov model]\label{def:classical_markov_model}
	The random variables $X_1, X_2$ and $Z$ are said to form a Markov chain, denoted by $X_1 \leftrightarrow Z \leftrightarrow X_2$, if 
	\[
		I(X_1:X_2|Z) = 0 \;.
	\]
	For more than two sources $X_1,\dotsc,X_l$ and side information $Z$, we say that they are in the Markov model if
	\[
		\forall i \in [l], \quad I(X_i:X_{\bar{i}}|Z) = 0 \;.
	\]
\end{definition}

To see that $I(X_1:X_2|Z) = 0$ indeed captures the idea that conditioned on $Z$ the sources are independent, note that $I(X_1:X_2|Z) = 0$ if and only if $\mathrm{p}(x_1,x_2|z) = \mathrm{p}(x_1|z)\cdot\mathrm{p}(x_2|z)$ for all $x_1,x_2$ and $z$. 

We can now define classical-proof multi-source extractors in the following way. For simplicity, we give the definition in the case of two sources; the extension to more than two sources in the Markov model is straightforward. 

\begin{definition}[Classical-proof two-source extractor]\label{def:two-source.mc}
	A function $\mathrm{Ext}: \{0,1\}^{n_1} \times \{0,1\}^{n_2} \to \{0,1\}^m$ is a $(k_1,k_2,\varepsilon)$ classical-proof two-source extractor secure in the Markov model, if for all sources $X_1,X_2$, and classical side information $Z$, where $X_1 \leftrightarrow Z \leftrightarrow X_2$ form a Markov chain, and with min-entropy $H_{\text{min}}\left(X_1|Z\right)\geq k_1$ and $H_{\text{min}}\left(X_2|Z\right)\geq k_2$, we have
	\begin{equation}\label{eq:classical_ext_error_weak}
		\frac{1}{2}\| \mathrm{Ext}\left( X_1,X_2 \right)Z - U_m\circ Z \| \leq \varepsilon \;,
	\end{equation}
	where $U_m$ is a perfectly uniform random variable on $m$-bit strings. $\mathrm{Ext}$ is said to be \emph{strong in the $i$'th input} if
  	\begin{equation}\label{eq:classical_ext_error_strong}
	 	\frac{1}{2} \|  \mathrm{Ext}(X_1,X_2)X_iZ - U_m \circ X_i Z \| \leq \varepsilon \;. 
	\end{equation}
\end{definition}

Indeed, if one requires that the sources and the side information $Z$ fulfil Definition~\ref{def:classical_markov_model} then all multi-source extractors remain secure also in the presence of the side information $Z$. This is proven in the following lemma for two sources.

\begin{lemma}\label{lem:two-source.mc}
	Any $(k_1,k_2,\varepsilon)$-[strong] two-source extractor is a  $(k_1 + \log \frac{1}{\varepsilon}, k_2 + \log\frac{1}{\varepsilon},3\varepsilon)$-[strong] classical-proof two-source extractor in the Markov model.
\end{lemma}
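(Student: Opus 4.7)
The plan is to split the side information value by value: for each fixed outcome $z$ of $Z$, the Markov condition $X_1 \leftrightarrow Z \leftrightarrow X_2$ gives independent conditional sources $X_1|Z{=}z$ and $X_2|Z{=}z$, so the hypothesis that $\mathrm{Ext}$ is a two-source extractor applies pointwise. The only issue is that the per-$z$ min-entropies can be low; we pay the extra $\log\tfrac{1}{\varepsilon}$ in the entropy budget to discard a $z$-set of total weight at most $2\varepsilon$.

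\textbf{Step 1 (bad-$z$ set).} For $i \in \{1,2\}$, let
\[
    B_i = \{ z : H_{\min}(X_i \mid Z{=}z) < k_i \}.
\]
Using $2^{-H_{\min}(X_i\mid Z)} = \sum_z p_z\, 2^{-H_{\min}(X_i \mid Z=z)}$ and the hypothesis $H_{\min}(X_i \mid Z) \ge k_i + \log\tfrac{1}{\varepsilon}$, one gets
\[
    \Pr[Z \in B_i]\cdot 2^{-k_i} \le \sum_{z\in B_i} p_z\, 2^{-H_{\min}(X_i\mid Z=z)} \le \varepsilon\cdot 2^{-k_i},
\]
so $\Pr[Z \in B_i] \le \varepsilon$, and by a union bound $\Pr[Z \in B_1 \cup B_2] \le 2\varepsilon$.

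\textbf{Step 2 (apply the extractor on good $z$).} For every $z \notin B_1 \cup B_2$, the Markov property yields $\mathrm{p}(x_1,x_2\mid z) = \mathrm{p}(x_1\mid z)\,\mathrm{p}(x_2\mid z)$, i.e., $X_1|Z{=}z$ and $X_2|Z{=}z$ are independent, and each has min-entropy at least $k_i$. Applying the $(k_1,k_2,\varepsilon)$-extractor guarantee to this conditional distribution gives
\[
    \tfrac12 \lnorm{\,\mathrm{Ext}(X_1,X_2)\mid Z{=}z \;-\; U_m\,} \le \varepsilon.
\]

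\textbf{Step 3 (reassemble).} Splitting the trace distance according to the classical variable $Z$,
\[
    \tfrac12\lnorm{\mathrm{Ext}(X_1,X_2)Z - U_m\circ Z}
    = \sum_z p_z\cdot \tfrac12 \lnorm{\mathrm{Ext}(X_1,X_2)\mid Z{=}z - U_m},
\]
and bounding the good $z$'s by $\varepsilon$ and the bad ones by $1$, we conclude the weak case with total error $\varepsilon + 2\varepsilon = 3\varepsilon$.

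\textbf{Step 4 (strong case).} The same decomposition works: for good $z$, the strong-extractor hypothesis bounds $\tfrac12\lnorm{\mathrm{Ext}(X_1,X_2)X_i \mid Z{=}z - U_m\circ (X_i\mid Z{=}z)} \le \varepsilon$, and bad $z$'s again contribute at most $2\varepsilon$, yielding the $3\varepsilon$ bound on~\eqref{eq:classical_ext_error_strong}.

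There is no real obstacle here; the only thing to be careful about is the standard min-entropy splitting lemma in Step~1, where one must use the definition $2^{-H_{\min}(X\mid Z)} = \sum_z p_z\, 2^{-H_{\min}(X\mid Z=z)}$ together with a Markov-type inequality to extract a high-probability set of $z$'s on which the per-$z$ min-entropies are close to their average.
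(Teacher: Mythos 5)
Your proposal is correct and follows essentially the same route as the paper's proof: the identity $2^{-H_{\min}(X|Z)} = \sum_z p_z 2^{-H_{\min}(X|Z=z)}$ plus Markov's inequality to discard a bad set of $z$'s of weight at most $2\varepsilon$, conditional independence from the Markov property on the good $z$'s, and reassembly of the trace distance to get $\varepsilon + 2\varepsilon = 3\varepsilon$. No gaps.
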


\begin{proof}
	Let $X_1 \leftrightarrow Z \leftrightarrow X_2$ be such that $H_{\text{min}}\left(X_1|Z\right)\geq k_1 + \log \frac{1}{\varepsilon}$ and $H_{\text{min}}\left(X_2|Z\right) \geq k_2 + \log \frac{1}{\varepsilon}$. For any two classical systems $X$ and $Z$, we have 
	\[
		2^{-H_{\text{min}}\left(X|Z\right)} = \mathop{\mathbb{E}}_{z \leftarrow Z} \left[2^{-H_{\text{min}}\left(X|Z=z\right)}\right] \;,
	\] 
	so by Markov's inequality, 
	\[
		\Pr_{z \leftarrow Z} \left[ H_{\text{min}}\left(X|Z=z\right)\leq H_{\text{min}}\left(X|Z\right) - \log 1/\varepsilon \right] \leq \varepsilon \;.
	\]
	Applying this to both $X_1$ and $X_2$, we have that with probability at least $1-2\varepsilon$ (over $Z$), $H_{\text{min}}\left(X_1|Z=z\right) \geq k_1$ and $H_{\text{min}}\left(X_2|Z=z\right) \geq k_2$. Due to the Markov-chain condition, the distributions $X_1|_{Z=z}$ and $X_2|_{Z=z}$ are independent. Hence for any $(k_1,k_2,\varepsilon)$ two-source extractor $\mathrm{Ext}$,
	\[
    		\frac{1}{2} \| \mathrm{Ext}(X_1,X_2)Z - U_m \circ Z\| = \frac{1}{2} \sum_{z} P_Z(z) \|\mathrm{Ext}(X_1|_{Z=z},X_2|_{Z=z}) - U_m\|  \leq 3 	\varepsilon \;.
	\]
 	For a strong extractor the proof is identical.
\end{proof}

By following the same steps as the proof of Lemma \ref{lem:two-source.mc} for the case of $l$ sources we get Theorem~\ref{thm:l-sources.mc}.

\subsection{Multi-source extractors in the presence of quantum side information}\label{sec:quantum_markov}

We now consider multi-source extractors in the presence of quantum side information, i.e., in the following $C$ denotes a quantum system. Similarly to Section \ref{sec:classical-markov} we restrict the sources and the quantum side information to the quantum Markov model. Formally, 

\begin{definition}[Quantum Markov model]\label{def:quantum_markov_mutual_info}
  	A ccq-state \(\rho_{X_1X_2C}\) is said to form a Markov chain\footnote{The same definition is also used in the more general case where also the $X_i$'s are quantum. For our purpose the case of classical sources and quantum side information is sufficient.}, denoted by $X_1 \leftrightarrow C \leftrightarrow X_2$, if 
	\[
		I(X_1:X_2|C) = 0 \;.
	\]
	For more than two sources $X_1,\dotsc,X_l$ and $C$ we say that they are in the Markov model if
	\[
		\forall i \in [l], \quad I(X_i:X_{\bar{i}}|C) = 0 \;.
	\]
\end{definition}

The following is then the natural analog of Definition~\ref{def:two-source.mc} to the quantum setting. The extension to the case of more than two sources is straightforward.
\begin{definition}[Quantum-proof two-source extractor]\label{def:quantum-two-source.mc}
	A function $\mathrm{Ext}: \{0,1\}^{n_1} \times \{0,1\}^{n_2} \to \{0,1\}^m$ is a $(k_1,k_2,\varepsilon)$ quantum-proof two-source extractor in the Markov model, if for all sources $X_1,X_2$, and quantum side information $C$, where $X_1 \leftrightarrow C \leftrightarrow X_2$ form a Markov chain, and with min-entropy $H_{\text{min}}\left(X_1|C\right)\geq k_1$ and $H_{\text{min}}\left(X_2|C\right)\geq k_2$, we have
	\begin{equation}\label{eq:ext_error_weak}
		\frac{1}{2}\| \rho_{\mathrm{Ext}(X_1,X_2)C} - \rho_{U_m} \otimes \rho_{C} \| \leq \varepsilon \;,
	\end{equation}
	where $\rho_{\mathrm{Ext}(X_1,X_2)C} = \Ext \otimes \mathbb{1}_C \rho_{X_1X_2C}$ and $\rho_{U_m}$ is the fully mixed state on a system of dimension $2^m$. $\mathrm{Ext}$ is said to be \emph{strong in the $i$'th input} if
  	\begin{equation}\label{eq:ext_error_strong}
	 	\frac{1}{2}\| \rho_{\mathrm{Ext}(X_1,X_2)X_iC} - \rho_{U_m} \otimes \rho_{X_iC} \| \leq \varepsilon \;. 
	\end{equation}
\end{definition}
If $C$ above is classical then Definition~\ref{def:quantum-two-source.mc} is reduced to Definition~\ref{def:two-source.mc}.
 
The interesting question is therefore whether there exist quantum-proof multi-source extractors. The main contribution of this work is to show that \emph{any} multi-source extractor is also a quantum-proof multi-source extractor in the Markov model with a bit weaker parameters. The formal statement is given in Theorem~\ref{thm:quantum_proof_extractors} above and proven in the following section. 

\section{Security of multi-source extractors in the quantum Markov model}\label{sec:security_proof}

For simplicity, in this section we prove that two-source extractors are secure even when considering quantum side information in the form of a Markov chain. The extension to any number of sources, i.e., the proof of Theorem~\ref{thm:quantum_proof_extractors}, follows by trivially repeating the same steps for more than two sources and using our definition of the Markov model (Definition~\ref{def:quantum_markov_mutual_info}). More specifically, the goal of this section is to prove the following:
\begin{lemma}\label{lem:quantum_markov_two_source}
	Any $(k_1,k_2,\varepsilon)$-[strong] two-source extractor is a $\left(k_1 + \log \frac{1}{\varepsilon}, k_2 + \log\frac{1}{\varepsilon},\sqrt{3\varepsilon \cdot 2^{(m-2)}}\right)$-[strong] quantum-proof two-source extractor in the Markov model, where $m$ is the output length of the extractor.
\end{lemma}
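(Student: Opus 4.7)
The plan is to follow the four-step reduction sketched in Section~\ref{sec:intro.technique}, reducing the quantum problem to the classical one (Lemma~\ref{lem:two-source.mc}) at the cost of a square-root blow-up of order $\sqrt{2^m}$. For brevity I discuss only the weak two-source case, since the strong case is handled by absorbing $X_i$ into the side information $C' := X_iC$ (the Markov chain $X_1 \leftrightarrow X_iC \leftrightarrow X_2$ is preserved whenever $X_1 \leftrightarrow C \leftrightarrow X_2$ holds), and the extension to $l$ sources is notationally heavier but structurally identical.

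The first and technically central step would be to treat the product side-information case, where $\rho_{X_1X_2C_1C_2} = \rho_{X_1C_1} \otimes \rho_{X_2C_2}$ with $H_{\text{min}}(X_i|C_i) \geq k_i + \log(1/\varepsilon)$. I would write the trace distance in (\ref{eq:ext_error_weak}) as
\[
\tfrac{1}{2}\|\rho_{\Ext(X_1,X_2)C_1C_2} - \rho_{U_m} \otimes \rho_{C_1C_2}\|_1 = \sup_{-\mathbb{1} \leq T \leq \mathbb{1}} \tfrac{1}{2}\tr\!\bigl[T(\rho_{\Ext(X_1,X_2)C_1C_2} - \rho_{U_m}\otimes \rho_{C_1C_2})\bigr],
\]
and then apply an operator Cauchy--Schwarz step in the spirit of~\cite{berta2014quantum,quantumbilinear} to dominate this by the expression one obtains from a \emph{specific} distinguishing strategy: pick an optimal POVM $\{F^{z_i}_{C_i}\}_{z_i}$ on each $C_i$ individually (e.g.\ the one achieving $H_{\text{min}}(X_i|C_i)$), obtain classical outcomes $Z_1,Z_2$, and only then try to distinguish $\Ext(X_1,X_2)$ from uniform given $(Z_1,Z_2)$. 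Because the state is product and the measurement is local, $Z_1$ and $Z_2$ remain independent, and $H_{\text{min}}(X_i|Z_i)\geq H_{\text{min}}(X_i|C_i)$ by the data-processing property of the min-entropy. Hence the classical Lemma~\ref{lem:two-source.mc} bounds the classical distinguishing advantage by $3\varepsilon$, and Cauchy--Schwarz converts this $L^1$-bound into the claimed $\sqrt{3\varepsilon\cdot 2^{m-2}}$, the $\sqrt{2^m}$ factor coming from the dimension of the extractor output over which $T$ ranges.

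The second step would be to lift this product-case result to the full Markov model. For this I would invoke the structural theorem for quantum Markov states (Hayden--Jozsa--Petz--Winter): any ccq-state $\rho_{X_1CX_2}$ with $I(X_1:X_2|C)=0$ decomposes, in a suitable basis, as
\[
\rho_{X_1CX_2} = \bigoplus_j p_j\,\rho^{(j)}_{X_1C_j^L}\otimes\rho^{(j)}_{C_j^R X_2},\qquad C \cong \bigoplus_j C_j^L\otimes C_j^R.
\]
Let $J$ be the classical register recording the block index. Conditioning on $J=j$ gives a product state, and the averaged min-entropy satisfies $H_{\text{min}}(X_i|CJ) \geq H_{\text{min}}(X_i|C)$ since $J$ can be extracted from $C$ by a projective measurement. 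Applying the product-case bound inside each block and combining via the convexity of the trace distance (and Markov's inequality, exactly as in the proof of Lemma~\ref{lem:two-source.mc}, to turn the averaged min-entropy condition into a pointwise one up to an $\varepsilon$ loss) yields (\ref{eq:ext_error_weak}) with the advertised parameters.

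The hard part will be Step~1: identifying the correct operator inequality (essentially a pinching / Cauchy--Schwarz bound adapted to the product structure of $C_1\otimes C_2$ and to the bilinear nature of two-source extractors) that simultaneously (i)~reduces the optimal quantum distinguisher to the "measure-$C_i$-independently-then-distinguish-classically" strategy and (ii)~keeps the dimension factor down to $\sqrt{2^m}$ rather than $2^m$. The Markov extension in Step~2 is then a routine application of the Markov decomposition plus convexity of the trace distance, and the passage from two to $l$ sources uses only that $I(X_i:X_{\bar i}|C)=0$ holds for each $i$ in Definition~\ref{def:quantum_markov_mutual_info}.
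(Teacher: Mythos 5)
Your treatment of the weak case follows the paper's route almost exactly: the Cauchy--Schwarz reduction of the optimal quantum distinguisher to a ``measure each $C_i$ locally, then distinguish classically'' strategy (Lemmas~\ref{lem:CS_weak}--\ref{lem:distinguishing_weak} and Lemma~\ref{lem:classical_markov_to_product_quantum}), followed by the Hayden--Jozsa--Petz--Winter block decomposition to lift the product case to the Markov model. One small inaccuracy: the local POVMs are not ones you get to \emph{choose} (e.g.\ the min-entropy-achieving ones); the Cauchy--Schwarz step on the pretty good purification \emph{produces} specific transposed POVMs $H_{z_1}=F_{z_1}^T$, $K_{z_2}=G_{z_2}^T$. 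This is harmless, since all the reduction needs is that the measurement is local (so independence/Markovity of $Z_1Z_2$ is preserved) and that data processing bounds the min-entropy, and the classical lemma handles \emph{any} such side information. A second small point: in the Markov extension you apply Markov's inequality per block and then the product bound, which gives $2\varepsilon+\sqrt{\varepsilon 2^{m-2}}$ rather than the advertised $\sqrt{3\varepsilon 2^{m-2}}$; the paper instead applies only the quantum-to-classical step per block, pulls the average inside the square root with Jensen, and invokes the classical Lemma~\ref{lem:two-source.mc} once on the combined side information $Z_1Z_2T$. Same order of magnitude, but your ordering does not literally yield the stated constant.

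The genuine gap is your dismissal of the strong case. Setting $C':=X_iC$ does preserve the Markov chain $X_1\leftrightarrow X_iC\leftrightarrow X_2$, but it destroys the min-entropy hypothesis of the weak-extractor definition: $\Hmin{X_i|X_iC}=0$, so the weak result cannot be invoked with side information $C'$. The paper instead reruns the Cauchy--Schwarz argument with the operators block-diagonal in both the output $y$ \emph{and} the value $x_i$ (Lemmas~\ref{lem:CS_strong}--\ref{lem:classical_markov_to_product_quantum_strong} in Appendix~\ref{sec:strong_proofs}), so that the resulting classical distinguisher receives $X_i$ in full plus a measurement $Z_{\bar i}$ of $C_{\bar i}$ only, and the target becomes the \emph{strong} classical-proof extractor property~\eqref{eq:classical_ext_error_strong}, whose min-entropy conditions are stated relative to $Z$ alone and are therefore satisfied. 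The argument is structurally parallel to the weak case, but it is not a black-box corollary of it; as written, your reduction for the strong case would fail at the first application of the min-entropy hypothesis.
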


To prove this, we first show in Section~\ref{sec:product} that all extractors are still secure in the case of side information in product from. Then in Section~\ref{sec:extension_to_markov_chain} we generalise this result to any side information in the Markov model.

\subsection{Product quantum side information}\label{sec:product}

We start by showing that \emph{any} two-source extractor, as in Definition~\ref{def:two-source}, is secure against product quantum side information. The product extractor as defined below is a special case of the extractor in Definition~\ref{def:quantum-two-source.mc}:

\begin{definition}[Quantum-proof product two-source extractor, \cite{kasher2010two}]\label{def:product_extractor}
	A function $\mathrm{Ext}: \{0,1\}^{n_1} \times \{0,1\}^{n_2} \to \{0,1\}^m$ is a $(k_1,k_2,\varepsilon)$ quantum-proof product two-source extractor, if for all sources $X_1,X_2$, and quantum side information $C$, where $\rho_{X_1X_1C} = \rho_{X_1C_1}\otimes\rho_{X_2C_2}$, and with min-entropy $H_{\text{min}}\left(X_1|C_1\right)\geq k_1$ and $H_{\text{min}}\left(X_2|C_2\right)\geq k_2$, we have
	\begin{equation}\label{eq:ext_error_weak.product}
		\frac{1}{2}\| \rho_{\mathrm{Ext}(X_1,X_2)C} - \rho_{U_m} \otimes \rho_{C} \| \leq \varepsilon \;,
	\end{equation}
	where $\rho_{\mathrm{Ext}(X_1,X_2)C} = \Ext \otimes \mathbb{1}_{C} \rho_{X_1X_2C}$ and $\rho_{U_m}$ is the fully mixed state on a system of dimension $2^m$. $\mathrm{Ext}$ is said to be \emph{strong in the $i$'th input} if
  	\begin{equation}\label{eq:ext_error_strong.product}
	 	\frac{1}{2}\| \rho_{\mathrm{Ext}(X_1,X_2)X_iC} - \rho_{U_m} \otimes \rho_{X_iC_i} \otimes \rho_{C_{\bar{i}}} \| \leq \varepsilon \;. 
	\end{equation}
\end{definition}

In the following we show that any two-source extractor remains secure in the product model, i.e., if the quantum state of the sources and the side information is of the form $\rho_{X_1X_2C} = \rho_{X_1C_1}\otimes\rho_{X_2C_2}$ (see Corollary~\ref{cor:two_source_to_product_quantum} below for the formal statement).
This can be seen as an extension of the results of~\cite{kasher2010two}, where only two-source extractors with one-bit output (i.e., $m=1$ in our notation) and the extractor of~\cite{dodis2004improved} were shown to be secure against product quantum side information.

The first step of the proof uses the fact that any ccq-state $\rho_{X_1X_2C}$ can be obtained by performing local measurements on a pure state $\rho_{ABC}$. We formalise this in the following lemma. The proof of the lemma is trivial and given in Appendix \ref{sec:proofs_security_two}.

\begin{lemma}\label{lem:classical_to_quantum_state}
	Let $\rho_{X_1X_2C} = \sum_{x_1,x_2} \proj{x_1} \otimes \proj{x_2} \rho_C(x_1,x_2)$. Then there exists a pure state $\rho_{ABC}$ and POVMs $\{F_{x_1}\},\{G_{x_2}\}$ such that
	\begin{equation}\label{eq:defomegacqstate}
 		\rho_C(x_1,x_2) = \tr_{AB}\left[ F_{x_1}^{\frac{1}{2}} \otimes G_{x_2}^{\frac{1}{2}} \otimes \mathbb{1}_{C}\rho_{ABC} F_{x_1}^{\frac{1}{2}} \otimes G_{x_2}^{\frac{1}{2}}  \otimes \mathbb{1}_{C}\right] \;.
	\end{equation}
\end{lemma}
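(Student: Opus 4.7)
The plan is to exploit the fact that $\rho_{X_1X_2C}$ is classical on both $X_1$ and $X_2$, so I can purify each conditional $C$-state separately and encode the classical labels in auxiliary registers that will be measured in the computational basis. This reduces the statement to an explicit construction with essentially no technical obstacle; the only bookkeeping point is that the purifying register needed for the conditional $C$-states must be placed inside one of $A$ or $B$, so that the POVMs remain strictly local on the two factors.

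Concretely, I first write $\rho_C(x_1,x_2) = p(x_1,x_2)\,\sigma_C(x_1,x_2)$ with $\sigma_C(x_1,x_2)$ a normalized state whenever $p(x_1,x_2)>0$ (dropping the zero-weight pairs), and for each such pair I pick a purification $|\phi(x_1,x_2)\rangle_{CR}$ of $\sigma_C(x_1,x_2)$ in a common auxiliary space $R$ taken large enough to accommodate all of them. I then let $A$ have orthonormal basis $\{|x_1\rangle\}$ indexed by the support of $X_1$, introduce a register $X_2'$ with basis $\{|x_2\rangle\}$ indexed by the support of $X_2$, and set $B = X_2' \otimes R$. The candidate pure state is
\[
|\psi\rangle_{ABC} = \sum_{x_1,x_2} \sqrt{p(x_1,x_2)}\, |x_1\rangle_A \otimes |x_2\rangle_{X_2'} \otimes |\phi(x_1,x_2)\rangle_{CR},
\]
and the POVMs are the rank-one projectors $F_{x_1} = |x_1\rangle\langle x_1|_A$ on $A$ and $G_{x_2} = |x_2\rangle\langle x_2|_{X_2'} \otimes \mathbb{1}_R$ on $B$, which evidently sum to the identity on their respective factors.

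The verification of \eqref{eq:defomegacqstate} is then a one-line computation: since the $F_{x_1}$ and $G_{x_2}$ are projectors we have $F_{x_1}^{1/2}=F_{x_1}$ and $G_{x_2}^{1/2}=G_{x_2}$, and applying $F_{x_1}\otimes G_{x_2} \otimes \mathbb{1}_C$ to $|\psi\rangle$ picks out the single summand with labels $(x_1,x_2)$, with amplitude $\sqrt{p(x_1,x_2)}$; tracing out $A$ and $X_2'$ removes those classical factors, and tracing out $R$ converts $|\phi(x_1,x_2)\rangle\langle\phi(x_1,x_2)|_{CR}$ into $\sigma_C(x_1,x_2)$, giving exactly $p(x_1,x_2)\,\sigma_C(x_1,x_2) = \rho_C(x_1,x_2)$. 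This is essentially the textbook unravelling of a ccq-state as local measurements on a purification, and no further ideas are needed.
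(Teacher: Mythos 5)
Your construction is exactly the one in the paper: the same pretty-standard purification $\sum_{x_1,x_2}\sqrt{p(x_1,x_2)}\,\ket{x_1}\otimes\ket{x_2}\otimes\ket{\phi(x_1,x_2)}_{RC}$ with $A=X_1$, $B=X_2R$, and the projective POVMs $F_{x_1}=\proj{x_1}$, $G_{x_2}=\proj{x_2}\otimes\mathbb{1}_R$. The argument is correct and matches the paper's proof in every essential respect.
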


The following three lemmas are proven for the case of weak extractors. The lemmas and proofs for the strong case are very similar and therefore given in Appendix~\ref{sec:strong_proofs}. 
We start with the next lemma where the Cauchy-Schwarz inequality is used, as in~\cite{berta2014quantum,quantumbilinear}.
\begin{lemma}\label{lem:CS_weak}
Let $\rho_{X_1X_2C}$ be any ccq-state, and let $\rho_{ABC}$ and $\{F_{x_1}\},\{G_{x_2}\}$ satisfy Equation~\eqref{eq:defomegacqstate}. Then there exists an alternative purification of $\rho_{AB}$, namely \(\Psi_{ABC_1C_2}\), and two POVMs \(\{H_{z_1}\}\), \(\{K_{z_2}\}\) acting on \(C_1\) and \(C_2\), such that
  	\begin{multline*}
   		\frac{1}{M}\| \rho_{\mathrm{Ext}(X_1,X_2)C} - \rho_{U_m} \otimes \rho_{C}\|^2 \leq\\
   		 \sum_{\substack{x_1,x_2,\\z_1,z_2,y}} \left[ \delta_{\mathrm{Ext}\left(x_1,x_2\right)=y} - \frac{1}{M}\right]  				\left[ \delta_{\mathrm{Ext}\left(z_1,z_2\right)=y} - \frac{1}{M}\right]  \; \tr\left[\Psi_{ABC_1C_2} F_{x_1} \otimes G_{x_2} \otimes H_{z_1} \otimes K_{z_2} \right]\,,
 	 \end{multline*}
         where $M=2^m$ and $m$ is the output length of the extractor. Moreover, if the state \(\rho_{AB}\) is of tensor product form, the purification \(\Psi_{ABC_1C_2}\) also factorises into a tensor product between \(AC_1\) and \(BC_2\).  
\end{lemma}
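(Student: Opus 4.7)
The plan is to bound the trace distance $\|\Delta\| := \|\rho_{\Ext(X_1,X_2)C} - \rho_{U_m}\otimes\rho_C\|$ via two Cauchy--Schwarz steps followed by a ``transpose-trick'' purification of $\rho_{AB}$. First I would use trace-norm duality: since $\Delta$ is classical on the output register, the optimal Hermitian tester with spectrum in $[-1,1]$ may be taken of cq-form $T = \sum_y\proj{y}\otimes T_y$, so $\|\Delta\| = \sum_y \tr[T_y\Delta^y]$ with $\Delta^y = \sum_{x_1,x_2}c(x_1,x_2,y)\rho_C(x_1,x_2)$ and abbreviation $c(x_1,x_2,y):=\delta_{\Ext(x_1,x_2)=y} - 1/M$. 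Substituting the representation \eqref{eq:defomegacqstate} and using cyclicity of the trace recasts each summand as a single matrix element on the pure $|\psi\rangle_{ABC}$ representing $\rho_{ABC}$,
\[
	\tr[T_y\Delta^y] \;=\; \langle\psi|\,D_y\otimes T_y\,|\psi\rangle, \qquad D_y \;:=\; \sum_{x_1,x_2}c(x_1,x_2,y)\,F_{x_1}\otimes G_{x_2},
\]
where $D_y$ is Hermitian on $AB$.

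Next, apply Cauchy--Schwarz twice. A discrete Cauchy--Schwarz in the sum over the $M = 2^m$ values of $y$ contributes the factor $M$ on the LHS, $\|\Delta\|^2 \le M\sum_y\tr[T_y\Delta^y]^2$. For the quantum step I would expand $|\psi\rangle = \sum_k\sqrt{\lambda_k}|u_k\rangle_{AB}|v_k\rangle_C$ in its Schmidt decomposition and apply Cauchy--Schwarz with the \emph{symmetric} weighting $\lambda_k^{1/4}\lambda_l^{1/4}$ split evenly between the two sides; this yields
\[
	\bigl(\tr[T_y\Delta^y]\bigr)^2 \;\le\; \tr\!\bigl[\rho_{AB}^{1/2}D_y\rho_{AB}^{1/2}D_y\bigr]\cdot\tr\!\bigl[\rho_C^{1/2}T_y\rho_C^{1/2}T_y\bigr] \;\le\; \tr\!\bigl[\rho_{AB}^{1/2}D_y\rho_{AB}^{1/2}D_y\bigr],
\]
where the last inequality uses $\tr[\rho_C^{1/2}T_y\rho_C^{1/2}T_y]\le\tr[\rho_C T_y^2]\le 1$ (one further Hilbert--Schmidt Cauchy--Schwarz combined with $T_y^2\le\mathbb{1}$). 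Hence $\tfrac{1}{M}\|\Delta\|^2 \le \sum_y\tr[\rho_{AB}^{1/2}D_y\rho_{AB}^{1/2}D_y]$.

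Finally, rewrite this bound in the claimed form. Take the canonical purification $|\Psi\rangle_{ABC_1C_2} := (\rho_{AB}^{1/2}\otimes\mathbb{1})|\Omega\rangle$ with $|\Omega\rangle = \sum_{i,j}|ij\rangle_{AB}|ij\rangle_{C_1C_2}$ (identifying $C_1\cong A$ and $C_2\cong B$), and define $H_{z_1} := F_{z_1}^T$ on $C_1$ and $K_{z_2} := G_{z_2}^T$ on $C_2$. These are valid POVMs because transposition preserves positivity and $\sum_z F_z^T = \mathbb{1}^T = \mathbb{1}$. The standard identity $\langle\Psi|X\otimes Y|\Psi\rangle = \tr[\rho_{AB}^{1/2}X\rho_{AB}^{1/2}Y^T]$, applied with $X = D_y$ and $Y = D_y^T$, converts the sandwiched trace into $\langle\Psi|D_y\otimes D_y^T|\Psi\rangle$; expanding $D_y$ and $D_y^T$ in the POVMs reproduces exactly the sum on the right-hand side of the lemma. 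When $\rho_{AB} = \rho_A\otimes\rho_B$, both $\rho_{AB}^{1/2}$ and $|\Omega\rangle$ factorise across the $A|B$ cut, so $|\Psi\rangle = |\rho_A\rangle_{AC_1}\otimes|\rho_B\rangle_{BC_2}$, as claimed.

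The main subtlety is the symmetric $\lambda_k^{1/4}\lambda_l^{1/4}$ splitting in the quantum Cauchy--Schwarz: it is precisely this symmetric form --- rather than the more obvious $\sqrt{\lambda_k}$ versus $\sqrt{\lambda_l}$ weighting --- that produces a quantity expressible as a single-copy product-POVM statistic through the transpose trick; the naive splitting would instead yield $\tr[\rho_{AB}D_y^2]$, which cannot be realised as $\tr[\Psi\,F\otimes G\otimes H\otimes K]$ on a purification $\Psi$ of $\rho_{AB}$. Once this observation is in place, the remaining steps are routine.
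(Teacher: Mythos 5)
Your proof is correct and follows essentially the same route as the paper's: trace-norm duality with block-diagonal testers, a Cauchy--Schwarz step with the symmetric $\rho_{AB}^{1/2}(\cdot)\rho_{AB}^{1/2}$ (KMS) weighting, and the transpose trick on the canonical/pretty-good purification with $H_{z_1}=F_{z_1}^{T}$, $K_{z_2}=G_{z_2}^{T}$. The only cosmetic difference is that you split Cauchy--Schwarz into a discrete step over $y$ (yielding the factor $M$) plus a per-$y$ KMS step via the Schmidt decomposition, whereas the paper applies a single Cauchy--Schwarz to the block-diagonal sesquilinear form and bounds the second factor by $M$; the resulting bound, and your closing remark on why the symmetric splitting is the essential choice, match the paper exactly.
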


\begin{proof}
  First, recall that for a hermitian matrix \(R\) we have \(\norm{R} = \max \{\tr[RS] \,:\,-\mathbb{1}\leq S \leq \mathbb{1}\}\). Applying this to the matrix whose norm specifies the error of the extractor, we find
  \[
    \| \rho_{\mathrm{Ext}(X_1,X_2)C} - \rho_{U_m}\otimes \rho_{C}\| = \max_{-\mathbb{1}\leq S \leq \mathbb{1}}\tr\left[ \left(\rho_{\mathrm{Ext}(X_1,X_2)C} - \rho_{U_m}\otimes \rho_{C} \right) S\right] \,.
  \]
  Since $\rho_{\mathrm{Ext}(X_1,X_2)C}$ and $\rho_{U_m}\otimes \rho_{C}$ are block diagonal with respect to the outcome variable of the extractor \(y\), \(S\) can be assumed to be block diagonal as well. Using this and inserting the expression for \(\rho_{X_1X_2C}\) in Equation \eqref{eq:defomegacqstate} we arrive at
  \[
    \| \rho_{\mathrm{Ext}(X_1,X_2)C} - \rho_{U_m}\otimes \rho_{C}\| = \max_{-\mathbb{1} \leq S_y \leq \mathbb{1} }\,\sum_{y} \,\tr\left[\rho_{ABC} \Delta_{y} \otimes S_y\right]\,,
  \]
  where we used the abbreviation
  \[
    \Delta_{y} = \sum_{x_1,x_2} \left[ \delta_{\mathrm{Ext}\left(x_1,x_2\right)=y} - \frac{1}{M}\right] F_{x_1}\otimes G_{x_2} \,.
  \]
  We now choose a special purification of \(\rho_{AB}\), namely we consider the \emph{pretty good purification}~\cite{winter2004extrinsic}
	\[
    		\ket{\psi}_{ABA'B'} = \rho_{AB}^\half \otimes \mathbb{1}_{A^\prime B^\prime} \, \ket{\Phi_{AA^\prime}}\ket{\Phi_{BB^\prime}}\,,
	\]
  where \(\ket{\Phi}_{AA^\prime} = \sum_{a} \ket{aa}\) denotes the unnormalised maximally entangled state. Since both $\ket{\psi}_{ABA'B'} $ and $\rho_{ABC}$ are purifications of $\rho_{AB}$ there exists an isometry \(V: A^\prime B^\prime \to C\) such that $V\kettbra{\psi} V^* = \rho_{ABC}$ and hence
  \[
    \| \rho_{\mathrm{Ext}(X_1,X_2)C} - \rho_{U_m}\otimes \rho_{C}\| \leq \max_{-\mathbb{1} \leq S_y \leq \mathbb{1} }\,\sum_{y}   \,\tr\left[\kettbra{\psi} \Delta_{y} \otimes S_y\right]\,,
  \]
  since \(V^* S_y V\) is bounded in norm by one and hermitian. Inserting the identity \(\mathbb{1} \otimes X_{A^\prime} \ket{\Phi_{AA^\prime}} = X_{A^\prime}^T \otimes \mathbb{1} \ket{\Phi_{AA^\prime}}\) for any matrix \(X\) (where \((.)^T\) denotes the transpose in the basis of the maximally entangled state), we find
  \begin{align}\label{eq:bringestimateinscpform}
    \tr\left[\kettbra{\psi} \Delta_{y} \otimes S_y\right] = \tr\left[\rho_{AB}^\half \Delta_{y} \rho_{AB}^\half (S_y)^T\right]\,.
  \end{align}
  The crucial observation is now that the sesquilinear form $(R_y) \times (S_y) \mapsto \sum_y \tr\left[\rho_{AB}^\half R^*_y \rho_{AB}^\half S_y\right]$ on block-diagonal matrices is positive semi-definite and hence fulfils the Cauchy-Schwarz inequality. Applying this gives
  \[
    \| \rho_{\mathrm{Ext}(X_1,X_2)C} - \rho_{U_m}\otimes \rho_{C}\|^2 \leq \left(\sum_y \tr\left[\rho_{AB}^\half\Delta_{y} \rho_{AB}^\half \Delta_y \right] \right) \left( \sum_y \tr\left[\rho_{AB}^\half S_{y} \rho_{AB}^\half S_y\right]\right)\,.
  \]
  Since we have that the norm of \(S_y\) is bounded by one, the terms in the second sum satisfies
  \[
    \tr\left[\rho_{AB}^\half S_{y} \rho_{AB}^\half S_y\right] \leq \tr\left[\rho_{AB}^\half S_{y} \rho_{AB}^\half\right] \leq \tr\left[\rho_{AB}\right] = 1\,,
  \]
  and we arrive at
  \[
    \| \rho_{\mathrm{Ext}(X_1,X_2)C} - \rho_{U_m}\otimes \rho_{C}\|^2 \leq M\,\sum_y \tr\left[\rho_{AB}^\half\Delta_{y} \rho_{AB}^\half \Delta_y \right]\,.
  \]
  Inserting the definition of \(\Delta_y\) and reversing the identity leading to Equation~\eqref{eq:bringestimateinscpform} proves the assertion with \(C_1 = A^\prime\), \(C_2 = B^\prime\), \(\Psi_{ABC_1C_2} = \kettbra{\psi}\) and \(H_{z_1} = F_{z_1}^T\), \(K_{z_2} = G_{z_2}^T\).
\end{proof}

The upper bound of the preceding lemma can be further simplified (the proof is given in Appendix \ref{sec:proofs_security_two}):
\begin{lemma}\label{lem:simple_bound_weak}
	For any $\Psi_{ABC_1C_2}$ and positive operators $\{F_{x_1}\},\{G_{x_2}\},\{H_{z_1}\},\{K_{z_2}\}$ which sum up to the identity,
	\begin{align}
		\sum_{\substack{x_1,x_2,\\z_1,z_2,y}} 
		\left[ \delta_{\mathrm{Ext}\left(x_1,x_2\right)=y} - \frac{1}{M}\right]  
		\left[ \delta_{\mathrm{Ext}\left(z_1,z_2\right)=y} - \frac{1}{M}\right]  \; 
		\tr\left[ \Psi_{ABC_1C_2} F_{x_1}\otimes G_{x_2} \otimes H_{z_1}\otimes K_{z_2} \right] \nonumber \\
		= \sum_{\substack{x_1,x_2,z_1,z_2 | \\ \mathrm{Ext}\left(x_1,x_2\right)=\mathrm{Ext}\left(z_1,z_2\right)}} \tr\left[ \Psi_{ABC_1C_2} F_{x_1}\otimes G_{x_2} \otimes H_{z_1}\otimes K_{z_2} \right] - \frac{1}{M} \label{eq:specific_attack_weak}
	\end{align}
\end{lemma}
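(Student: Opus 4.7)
The plan is a direct algebraic expansion, followed by collapsing two of the sums using the POVM normalisation. I would start by expanding the product of the two bracketed factors into four terms:
\[
 \left[\delta_{\mathrm{Ext}(x_1,x_2)=y} - \tfrac{1}{M}\right]\left[\delta_{\mathrm{Ext}(z_1,z_2)=y} - \tfrac{1}{M}\right]
 = \delta_{\mathrm{Ext}(x_1,x_2)=y}\delta_{\mathrm{Ext}(z_1,z_2)=y} - \tfrac{1}{M}\delta_{\mathrm{Ext}(x_1,x_2)=y} - \tfrac{1}{M}\delta_{\mathrm{Ext}(z_1,z_2)=y} + \tfrac{1}{M^2}.
\]
Since the outer sum over $y$ runs over the $M$ possible output strings, performing just the $y$-sum (holding $x_1,x_2,z_1,z_2$ fixed) yields $\delta_{\mathrm{Ext}(x_1,x_2)=\mathrm{Ext}(z_1,z_2)} - \tfrac{1}{M} - \tfrac{1}{M} + \tfrac{1}{M} = \delta_{\mathrm{Ext}(x_1,x_2)=\mathrm{Ext}(z_1,z_2)} - \tfrac{1}{M}$. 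This step is purely combinatorial and poses no obstacle; the three $1/M$ contributions cancel down to a single $-1/M$.

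Next I would substitute this back and split the remaining sum over $(x_1,x_2,z_1,z_2)$ into two pieces. The first piece, coming from the diagonal $\delta$, reproduces exactly the restricted sum on the right-hand side of \eqref{eq:specific_attack_weak}. For the second piece, I would pull the $-1/M$ out and use multilinearity of the trace to rewrite
\[
 \sum_{x_1,x_2,z_1,z_2} \tr\bigl[\Psi_{ABC_1C_2}\, F_{x_1}\otimes G_{x_2}\otimes H_{z_1}\otimes K_{z_2}\bigr]
 = \tr\!\left[\Psi_{ABC_1C_2}\Bigl(\sum_{x_1} F_{x_1}\Bigr)\!\otimes\!\Bigl(\sum_{x_2} G_{x_2}\Bigr)\!\otimes\!\Bigl(\sum_{z_1} H_{z_1}\Bigr)\!\otimes\!\Bigl(\sum_{z_2} K_{z_2}\Bigr)\right].
\]
By hypothesis each of these four sets of positive operators sums to the identity, so the tensor product inside the trace collapses to $\mathbb{1}\otimes\mathbb{1}\otimes\mathbb{1}\otimes\mathbb{1}$, and since $\Psi_{ABC_1C_2}$ is a (pure) normalised state the trace is $1$. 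This contributes exactly $-\tfrac{1}{M}$, matching the claimed right-hand side.

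The argument has no real obstacle: it is essentially the observation that $\sum_y \delta_{a=y}\delta_{b=y} = \delta_{a=b}$ combined with the completeness relations of the four POVMs. The only thing to be slightly careful about is keeping track of the sign and counting of the $1/M$ terms after the $y$-sum, and ensuring that we are indeed allowed to exchange the order of summation (trivial, since all sums are finite). In particular, note that the lemma does not require $\Psi_{ABC_1C_2}$ to be pure in any essential way, only normalised; however, in the context where this lemma is applied (following Lemma~\ref{lem:CS_weak}), $\Psi$ is the pretty-good purification and the normalisation $\tr[\Psi]=1$ is immediate.
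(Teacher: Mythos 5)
Your proposal is correct and follows essentially the same route as the paper's proof: expand the product into four terms, use $\sum_y \delta_{a=y}\delta_{b=y}=\delta_{a=b}$, and collapse the remaining sums via the completeness relations $\sum F_{x_1}=\mathbb{1}$ (etc.) together with $\tr[\Psi_{ABC_1C_2}]=1$, so that each of the three $1/M$ contributions is accounted for and they net to $-1/M$. The only cosmetic difference is that you perform the $y$-sum first while the paper evaluates each of the four full sums separately; you are in fact slightly more explicit than the paper about why the total probability $\sum \mathrm{p}(x_1,x_2,z_1,z_2)$ equals one.
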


The quantity in Equation~\eqref{eq:specific_attack_weak} can be seen as a simple distinguishing strategy of a distinguisher trying to distinguish the output of the extractor from uniform given classical side information. We can therefore relate it to the error of the extractor in the case of classical side information, i.e., to Equation~\eqref{eq:classical_ext_error_weak}. This is shown in the following lemma. 

\begin{lemma}\label{lem:distinguishing_weak}
	For $i\in\{1,2\}$ let $Z_i$ denote the classical side information about the source $X_i$ such that $\mathrm{p}(x_1,x_2,z_1,z_2) = \tr\left[\Psi_{ABC_1C_2} F_{x_1}\otimes G_{x_2} \otimes H_{z_1}\otimes K_{z_2} \right]$. Then
	\[
		\sum_{\substack{x_1,x_2,z_1,z_2 | \\ \mathrm{Ext}\left(x_1,x_2\right)=\mathrm{Ext}\left(z_1,z_2\right)}} \mathrm{p}(x_1,x_2,z_1,z_2) - \frac{1}{M} \leq \frac{1}{2}\| \mathrm{Ext}\left( X_1,X_2 \right)Z_1Z_2 - U_m\circ Z_1Z_2 \| \;.
	\]
\end{lemma}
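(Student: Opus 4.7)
The plan is to recognise both sides of the inequality as classical quantities over the joint distribution on $(X_1,X_2,Z_1,Z_2)$, and then reduce the problem to a pointwise inequality after conditioning on $(Z_1,Z_2)$. Since $\{F_{x_1}\}$, $\{G_{x_2}\}$, $\{H_{z_1}\}$, $\{K_{z_2}\}$ are POVMs on the respective tensor factors and $\Psi_{ABC_1C_2}$ is a state, the numbers $\mathrm{p}(x_1,x_2,z_1,z_2)$ are non-negative and sum to $\tr[\Psi_{ABC_1C_2}]=1$; hence they define a legitimate joint probability distribution. The left-hand side of the lemma can therefore be rewritten as $\Pr[\mathrm{Ext}(X_1,X_2)=\mathrm{Ext}(Z_1,Z_2)] - \tfrac{1}{M}$.

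Next, let $Y := \mathrm{Ext}(X_1,X_2)$ and write $q(y\mid z_1,z_2)$ for the conditional distribution of $Y$ given $(Z_1,Z_2)=(z_1,z_2)$. Conditioning on $(Z_1,Z_2)$ and using $\sum_{z_1,z_2}\mathrm{p}(z_1,z_2)\cdot \tfrac{1}{M}=\tfrac{1}{M}$, the left-hand side equals
\[
\sum_{z_1,z_2}\mathrm{p}(z_1,z_2)\left[q(\mathrm{Ext}(z_1,z_2)\mid z_1,z_2)-\frac{1}{M}\right],
\]
whereas expanding the $\ell_1$-norm on the right yields
\[
\frac{1}{2}\sum_{z_1,z_2}\mathrm{p}(z_1,z_2)\sum_y\left|q(y\mid z_1,z_2)-\frac{1}{M}\right|.
\]
Comparing term by term, it suffices to prove the elementary pointwise estimate
\[
q_f - \frac{1}{M} \;\leq\; \frac{1}{2}\sum_y\left|q_y-\frac{1}{M}\right|
\]
for any probability distribution $\{q_y\}_{y\in[M]}$ and any designated index $f\in[M]$.

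This last inequality follows from the standard observation that for a zero-sum sequence the positive and negative parts have equal absolute total: since $\sum_y(q_y-\tfrac{1}{M})=0$, one has $\sum_{y:q_y\geq 1/M}(q_y-\tfrac{1}{M})=\tfrac{1}{2}\sum_y|q_y-\tfrac{1}{M}|$. If $q_f\geq \tfrac{1}{M}$, then $q_f-\tfrac{1}{M}$ is a single summand of this positive-deviation sum and is thus bounded by it; if $q_f<\tfrac{1}{M}$, the left-hand side is negative and the claim is trivial. I do not expect a substantive obstacle: the only subtlety worth flagging is that the estimate is needed in only one direction, and specifically picks up just the positive deviation of the conditional distribution from uniform at the single value $\mathrm{Ext}(z_1,z_2)$, which is precisely what the statistical distance to uniform controls.
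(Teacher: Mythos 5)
Your proof is correct and follows essentially the same route as the paper: both recognise the left-hand side as the advantage of the distinguisher that accepts exactly when the first register equals $\mathrm{Ext}(z_1,z_2)$, and bound that advantage by the statistical distance. The paper phrases this as a single application of the variational characterisation $\frac{1}{2}\|\mathrm{R}-\mathrm{Q}\| \geq \mathrm{R}(\mathcal{A}^\star)-\mathrm{Q}(\mathcal{A}^\star)$ for the global event $\mathcal{A}^\star=\{(a_1,a_2,a_3): a_1=\mathrm{Ext}(a_2,a_3)\}$, whereas you condition on $(Z_1,Z_2)$ and apply the pointwise singleton-event version, which is the same estimate unrolled.
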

\begin{proof}
	Define the following random variables over $\{0,1\}^m\times \{0,1\}^{n_1} \times \{0,1\}^{n_2}$: 
	\[
		\mathrm{R} = \mathrm{Ext}(X_1,X_2)Z_1Z_2 \quad ; \quad \mathrm{Q} = U_m \circ Z_1Z_2 \;.
	\]
	Let $\mathcal{A^\star} = \left\{ (a_1,a_2,a_3) \big| a_1 = \mathrm{Ext}\left(a_2,a_3\right) \right\} \subseteq \{0,1\}^m\times \{0,1\}^{n_1} \times \{0,1\}^{n_2}$. Then, the probabilities that $\mathrm{R}$ and $\mathrm{Q}$ assign to the event $\mathcal{A^\star}$ are 
	\[
		\mathrm{R}(\mathcal{A^\star}) = \sum_{\substack{x_1,x_2,z_1,z_2 | \\ \mathrm{Ext}\left(x_1,x_2\right)=\mathrm{Ext}\left(z_1,z_2\right)}} \mathrm{p}(x_1,x_2,z_1,z_2) \quad ; \quad \mathrm{Q}(\mathcal{A^\star}) = \frac{1}{M}
	\]
	Using the definition of the variational distance we therefore have 
	\begin{align*}
		\frac{1}{2} \| \mathrm{Ext}(X_1,X_2)Z_1Z_2 - U_m \circ Z_1Z_2\|  &= \sup_{\mathcal{A}} \| \mathrm{R}(\mathcal{A}) - \mathrm{Q}(\mathcal{A}) \| \\
		& \geq  \mathrm{R}(\mathcal{A^\star}) - \mathrm{Q}(\mathcal{A^\star}) \\
		& = \sum_{\substack{x_1,x_2,z_1,z_2 | \\ \mathrm{Ext}\left(x_1,x_2\right)=\mathrm{Ext}\left(z_1,z_2\right)}} \mathrm{p}(x_1,x_2,z_1,z_2) - \frac{1}{M} \;. \qedhere
	\end{align*}
\end{proof}

Finally, we combine the lemmas together to show that any weak classical-proof two-source extractor in the Markov model is secure against product quantum side information as well.

\begin{lemma}\label{lem:classical_markov_to_product_quantum}
	Any $(k_1,k_2,\varepsilon)$ classical-proof two-source extractor in the Markov model is a $\left(k_1, k_2,\sqrt{\varepsilon \cdot 2^{(m-2)}}\right)$ quantum-proof product two-source extractor, where $m$ is the output length of the extractor.
\end{lemma}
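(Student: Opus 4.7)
The strategy is to chain the three preceding lemmas in order to reduce the quantum distinguishing problem to a classical one, and then to invoke the hypothesis that $\mathrm{Ext}$ is a classical-proof extractor in the Markov model. Since the state has product form $\rho_{X_1X_2C}=\rho_{X_1C_1}\otimes\rho_{X_2C_2}$, I would first apply Lemma~\ref{lem:classical_to_quantum_state} to each factor separately, obtaining a purification $\rho_{ABC}$ in which $\rho_{AB}=\rho_A\otimes\rho_B$. Lemma~\ref{lem:CS_weak} (using its ``moreover'' clause) then bounds $\frac{1}{M}\|\rho_{\mathrm{Ext}(X_1,X_2)C}-\rho_{U_m}\otimes\rho_C\|^2$ by a sum involving a product purification $\Psi_{ABC_1'C_2'}=\psi_{AC_1'}\otimes\psi_{BC_2'}$ and local POVMs $\{F_{x_1}\},\{G_{x_2}\},\{H_{z_1}\},\{K_{z_2}\}$, and Lemma~\ref{lem:simple_bound_weak} rewrites this in exactly the form that Lemma~\ref{lem:distinguishing_weak} identifies with a classical variational-distance distinguishing problem with auxiliary classical side information $Z_1,Z_2$.

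The core of the argument is then to verify that this induced classical side information $(Z_1,Z_2)$ satisfies the hypotheses of the classical-proof extractor. Because the purification and the measurements factorise cleanly across the bipartition $AC_1'\,|\,BC_2'$, the induced joint distribution satisfies $\mathrm{p}(x_1,x_2,z_1,z_2)=\mathrm{p}(x_1,z_1)\mathrm{p}(x_2,z_2)$; in particular $X_1X_2$ are independent even without conditioning, so the Markov chain $X_1\leftrightarrow Z_1Z_2\leftrightarrow X_2$ holds trivially. For the min-entropy, the data-processing inequality recalled in Section~\ref{sec:prel} gives $H_{\min}(X_i|Z_i)\geq H_{\min}(X_i|C_i)\geq k_i$, and the independence of $(X_i,Z_i)$ from $Z_{\bar i}$ promotes this to $H_{\min}(X_i|Z_1Z_2)\geq k_i$. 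The hypothesis that $\mathrm{Ext}$ is a $(k_1,k_2,\varepsilon)$ classical-proof two-source extractor in the Markov model therefore yields $\tfrac{1}{2}\|\mathrm{Ext}(X_1,X_2)Z_1Z_2-U_m\circ Z_1Z_2\|\leq\varepsilon$.

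Combining the chain of inequalities gives $\tfrac{1}{M}\|\rho_{\mathrm{Ext}(X_1,X_2)C}-\rho_{U_m}\otimes\rho_C\|^2\leq\varepsilon$, and taking square roots yields the claimed bound $\tfrac{1}{2}\|\rho_{\mathrm{Ext}(X_1,X_2)C}-\rho_{U_m}\otimes\rho_C\|\leq\sqrt{\varepsilon\cdot 2^{m-2}}$. The strong case is handled identically using the strong-extractor analogs in Appendix~\ref{sec:strong_proofs}. I expect no substantial obstacle, since the three preceding lemmas have done all the heavy lifting; the only delicate point is the bookkeeping that shows the auxiliary classical variables $Z_1,Z_2$ obtained by locally measuring a product purification inherit both the Markov structure and the original min-entropy bounds, which is precisely what the product form of the side information ensures.
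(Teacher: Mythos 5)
Your proposal is correct and follows essentially the same route as the paper's own proof: chain Lemmas~\ref{lem:CS_weak}, \ref{lem:simple_bound_weak}, and \ref{lem:distinguishing_weak} to reduce to a classical distinguishing problem, observe that the product purification and local POVMs make $\mathrm{p}(x_1,x_2,z_1,z_2)$ factorise so that the Markov and min-entropy conditions hold for $Z_1Z_2$, and conclude with the classical-proof hypothesis and the $\frac{1}{2}\sqrt{M\varepsilon}=\sqrt{\varepsilon 2^{m-2}}$ arithmetic. Your explicit remark that Lemma~\ref{lem:classical_to_quantum_state} must be applied to each tensor factor separately so that $\rho_{AB}$ is a product (as required by the ``moreover'' clause of Lemma~\ref{lem:CS_weak}) is a detail the paper leaves implicit, and is correct.
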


\begin{proof}
	For any state of two classical sources and product side information $\rho_{X_1X_2C}=\rho_{X_1C_1}\otimes\rho_{X_2C_2}$ with $\Hmin{X_1|C} \geq k_1$ and $\Hmin{X_2|C} \geq k_2$,  let $\rho_{ABC}$ and $\{F_{x_1}\},\{G_{x_2}\}$ be the state and measurements satisfying Equation \eqref{eq:defomegacqstate}.

	We can now apply Lemmas \ref{lem:CS_weak}, \ref{lem:simple_bound_weak}, and \ref{lem:distinguishing_weak} to get the bound
	\begin{equation}\label{eq:combined_bound_weak}
		\| \rho_{\mathrm{Ext}(X_1,X_2)C} - \rho_{U_m} \otimes \rho_{C}\| \leq \sqrt{ \frac{M}{2} \| \mathrm{Ext}\left( X_1,X_2 \right)Z_1Z_2 - U_m\circ Z_1Z_2 \|} \;,
	\end{equation}
	where $Z_1,Z_2$ are defined via $\mathrm{p}(x_1,x_2,z_1,z_2) = \tr\left[\Psi_{ABC_1C_2}  F_{x_1}\otimes G_{x_2} \otimes H_{z_1}\otimes K_{z_2} \right]$, for $\Psi_{ABC_1C_2}$ which is constructed in the proof of Lemma \ref{lem:CS_weak}.
	
	As $\Psi_{ABC_1C_2} = \Psi_{AC_1} \otimes \Psi_{BC_2}$ and the measurements are all in tensor product we have $\mathrm{p}(x_1,x_2,z_1,z_2)=\mathrm{p}(x_1,z_1)\cdot\mathrm{p}(x_2,z_2)$, which implies: 
	\begin{enumerate}
		\item The sources and the classical side information form a Markov chain $X_1 \leftrightarrow Z_1Z_2 \leftrightarrow X_2$.
		\item $H_{\text{min}}\left(X_i|Z_1Z_2\right) = H_{\text{min}}\left(X_i|Z_i\right) \geq H_{\text{min}}\left(X_i|C_i\right)$ for $i\in\{1,2\}$. 
	\end{enumerate}
	
	Hence, if $H_{\text{min}}\left(X_i|C_i\right)\geq k_i$ then by the definition of a classical-proof two-source extractor,
	\begin{equation}\label{eq:class_extractor_promise_weak}
		\frac{1}{2} \| \mathrm{Ext}\left( X_1,X_2 \right)Z_1Z_2 - U_m\circ Z_1Z_2 \| \leq \varepsilon \;. 
	\end{equation}
	
	Combining Equations \eqref{eq:combined_bound_weak} and \eqref{eq:class_extractor_promise_weak} we get 
	\[
		\frac{1}{2} \| \rho_{\mathrm{Ext}(X_1,X_2)C} - \rho_{U_m} \otimes \rho_{C}\| \leq \frac{1}{2} \sqrt{M\varepsilon} = \sqrt{\varepsilon 2^{(m-2)}} \;. \qedhere
	\]
\end{proof}

By combining Lemma \ref{lem:two-source.mc} together with Lemma \ref{lem:classical_markov_to_product_quantum} (Lemma~\ref{lem:classical_markov_to_product_quantum_strong} in Appendix~\ref{sec:strong_proofs}) for the weak (strong) case we get that any weak (strong) two-source extractor is also secure against product quantum side information. The bound given in Corollary~\ref{cor:two_source_to_product_quantum} matches exactly the bound given in~\cite{kasher2010two} for the special case of $m=1$ (see~\cite[Corollary 27]{kasher2010two}).

\begin{cor}\label{cor:two_source_to_product_quantum}
	Any $(k_1,k_2,\varepsilon)$-[strong] two-source extractor is a $\left(k_1 + \log \frac{1}{\varepsilon}, k_2 + \log\frac{1}{\varepsilon},\sqrt{3\varepsilon \cdot 2^{(m-2)}}\right)-[strong]$ quantum-proof product two-source extractor, where $m$ is the output length of the extractor.
\end{cor}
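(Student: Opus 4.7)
The corollary is a direct composition of the two main lemmas already established in the section, so my plan is essentially to chain them together with matching parameters.

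First, starting from an arbitrary $(k_1,k_2,\varepsilon)$-[strong] two-source extractor (in the sense of Definition~\ref{def:two-source}, with no side information), I would invoke Lemma~\ref{lem:two-source.mc} (or its analogue for the strong case, which is proved identically as the paper notes). This upgrades the extractor to a $(k_1 + \log\tfrac{1}{\varepsilon},\,k_2 + \log\tfrac{1}{\varepsilon},\,3\varepsilon)$-[strong] classical-proof two-source extractor secure in the classical Markov model of Definition~\ref{def:classical_markov_model}.

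Second, I would observe that any product cq-state $\rho_{X_1C_1}\otimes \rho_{X_2C_2}$ is trivially a special case of a quantum Markov chain $X_1\leftrightarrow C_1C_2\leftrightarrow X_2$, so the hypothesis of Lemma~\ref{lem:classical_markov_to_product_quantum} is applicable. Applying Lemma~\ref{lem:classical_markov_to_product_quantum} (respectively Lemma~\ref{lem:classical_markov_to_product_quantum_strong} in the strong case) to the classical-proof extractor obtained in the previous step, with the classical error parameter now set to $3\varepsilon$, yields a quantum-proof product two-source extractor with error $\sqrt{3\varepsilon\cdot 2^{(m-2)}}$ and the same min-entropy thresholds $k_i + \log\tfrac{1}{\varepsilon}$. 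This is exactly the bound claimed in the statement.

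The main ``obstacle'' is essentially bookkeeping: I need to verify that the product side-information model satisfies the Markov-chain hypothesis of Lemma~\ref{lem:classical_markov_to_product_quantum} (it does, since $I(X_1:X_2|C_1C_2)=0$ for a product state), and that the conditional min-entropies match, i.e.\ $H_{\min}(X_i|C) = H_{\min}(X_i|C_i)$ for product states, so the input hypotheses translate faithfully. No new analytic work beyond chaining inequalities is needed, which is exactly why the statement is packaged as a corollary rather than a theorem.
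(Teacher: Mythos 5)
Your proposal is correct and is exactly the paper's argument: the corollary is obtained by chaining Lemma~\ref{lem:two-source.mc} (yielding a $(k_1+\log\frac{1}{\varepsilon},k_2+\log\frac{1}{\varepsilon},3\varepsilon)$ classical-proof extractor) with Lemma~\ref{lem:classical_markov_to_product_quantum} (resp.\ Lemma~\ref{lem:classical_markov_to_product_quantum_strong}) applied with classical error $3\varepsilon$. The bookkeeping you flag (product states are Markov, and $H_{\min}(X_i|C)=H_{\min}(X_i|C_i)$ for product states) is already handled inside the proof of Lemma~\ref{lem:classical_markov_to_product_quantum}, so nothing further is needed.
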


\subsection{Extending to the Markov model}\label{sec:extension_to_markov_chain}

We now extend the result of Section \ref{sec:product} to the case of the more general Markov model. To do so, we first recall that by the result of~\cite{hayden2004structure}, Markov states (according to Definition \ref{def:quantum_markov_mutual_info}) can also be written in the form
\begin{equation} \label{eq:quantummarkov}
	\rho_{A_1A_2C} = \bigoplus_t \mathrm{p}(t) \rho^t_{A_1C^t_1} \otimes \rho^t_{A_2C^t_2}\,,
\end{equation}
where the index \(t\) runs over a finite alphabet \(T\), \(\mathrm{p}(t)\) is a probability distribution on that alphabet, \(\mathcal{H}_C=\bigoplus_t\mathcal{H}_{C_1^t} \otimes \mathcal{H}_{C_2^t}\) is the Hilbert space of $C$, and \(\rho^t_{A_iC^t_i}\) denote states on \(A_iC^t_i\), \(i\in\{1,2\}\).

\begin{proof}[Proof of Lemma~\ref{lem:quantum_markov_two_source}]

	Let $\rho_{X_1X_2C}$ be a Markov state (as in Definition \ref{def:quantum_markov_mutual_info}) such that $H_{\text{min}}\left(X_i|C\right) \geq k_i+\log \frac{1}{\eps}$. We first deal with the case of weak extractors. Using the decomposition from Equation \eqref{eq:quantummarkov} we can reduce the problem to the product case by writing
	\[
		\| \rho_{\mathrm{Ext}(X_1,X_2)C} - \rho_{U_m} \otimes \rho_{C}\| = \sum_t \mathrm{p}(t)  \| \mathrm{Ext} \otimes \mathbb{1}_C  \left(\rho^t_{X_1C^t_1} \otimes \rho^t_{X_2C^t_2}\right) - \rho_{U_m} \otimes \rho^t_{C^t_1} \otimes \rho^t_{C^t_2}\| \,.
	\]
	From Equation \eqref{eq:combined_bound_weak} we thus have
	\begin{align*}
		\| \rho_{\mathrm{Ext}(X_1,X_2)C} - \rho_{U_m} \otimes \rho_{C}\| & \leq \sum_t \mathrm{p}(t)  \sqrt{\frac{M}{2}\| \mathrm{Ext}(X_1,X_2)Z_1Z_2|T=t - U_m \circ Z_1Z_2|T=t \|} \\
		& \leq \sqrt{\frac{M}{2}\| \mathrm{Ext}(X_1,X_2)Z_1Z_2T - U_m \circ Z_1Z_2T \|}\,,
	\end{align*} 
	where in the last line we used Jensen's inequality and $Z_1,Z_2$ are defined via 
	\[
		\mathrm{p}(x_1,x_2,z_1,z_2|t) = \tr\left[\rho^t_{ABC} F^t_{x_1}\otimes G^t_{x_2} \otimes H^t_{z_1}\otimes K^t_{z_2} \right] \;.  
	\]
	That is, $Z_1$ and $Z_2$ are derived from $C$ in the following way: from Equation~\eqref{eq:quantummarkov} the states $\{\rho^t_{C_1^t} \otimes \rho^t_{C_2^t}\}_t$ are orthogonal, hence there exists an isometry $C \to CT$ which maps $\sum_t \mathrm{p}(t) \rho^t_{C_1^t} \otimes \rho^t_{C_2^t}$ to $\sum_t \mathrm{p}(t) \rho^t_{C_1^t} \otimes \rho^t_{C_2^t} \otimes \ketbra{t}{t}$. The state $\rho^t_{C_1^t} \otimes \rho^t_{C_2^t}$ is then measured in the same way as in Lemma~\ref{lem:classical_markov_to_product_quantum} for the product case to get the side information $Z_1Z_2|T$.
	Hence, the structure $X_1 \leftrightarrow Z_1Z_2T \leftrightarrow X_2$ is conserved. Furthermore, we also have $H_{\text{min}}\left(X_i|Z_1Z_2T\right) \geq k_i+\log \frac{1}{\eps}$. Using these two conditions, the problem has been reduced to one with classical side information in the Markov model. Using the fact that $\mathrm{Ext}$ is a $(k_1,k_2,\eps)$ two-source extractor and applying Lemma~\ref{lem:two-source.mc} we conclude the proof for weak extractors.
	
	Similarly, for strong extractors, from Equation \eqref{eq:combined_bound_strong} we have
	\begin{align*}
		\| \rho_{\mathrm{Ext}(X_1,X_2)X_1C} - \rho_{U_m} \otimes \rho_{X_1C}\| & \leq \sum_t \mathrm{p}(t)  \sqrt{\frac{M}{2}\| \mathrm{Ext}(X_1,X_2)X_1Z_2|T=t - U_m \circ X_1Z_2|T=t \|} \\
		& \leq \sqrt{\frac{M}{2}\| \mathrm{Ext}(X_1,X_2)X_1Z_2T - U_m \circ X_1Z_2T \|} \;.
	\end{align*} 
	Again, we can see this as a measurement made on $C$ such that the value of $T$ is measured and then a further measurements of $C_2^t$ is done in the same way as for the product case to get the side information about $X_2$ (while there is no additional side information about $X_1$). Hence, as in the weak case, $X_1 \leftrightarrow Z_2T \leftrightarrow X_2$ and $H_{\text{min}}\left(X_i|Z_2T\right) \geq k_i+\log \frac{1}{\eps}$, so the problem has been reduced to the classical case. 
\end{proof}

In the case of $l$ sources, a state $\rho_{X_{[l]}C}$ that satisfies the Markov model (Definition \ref{def:quantum_markov_mutual_info}) can be written as
\begin{equation} \label{eq:quantummarkov.multi} \rho_{X_{[l]}C} = \bigoplus_t \mathrm{p}(t) \rho^t_{X_1C_1^t} \otimes \dotsb \otimes \rho^t_{X_lC_l^t} \;.\end{equation} We provide a proof of this in Appendix~\ref{sec:proofs_security_two} as Lemma~\ref{lem:multi.markov}. It follows from Equation~\eqref{eq:quantummarkov.multi} that Lemma~\ref{lem:quantum_markov_two_source} can be easily generalised to $l$ sources.

\section{Extending the set of extractable sources}\label{sec:extending_sources}

Although the definition of a quantum-proof two-source extractors (Definition~\ref{def:quantum-two-source.mc}) requires the source $\rho_{X_1EX_2}$ to be a Markov chain with a bound on the min-entropy, a function proven be such an extractor can also be used to extract randomness from a larger set of sources, e.g., if the adversary were to destroy her side information $E$, this would not hinder extraction, yet it could destroy the Markov chain property of the source. In this section we consider two extensions of the multi-source extractor definition for which all multi-source extractors in the Markov model can be used. In Section~\ref{sec:smooth_min-entropy} we show that it is not necessary to have a bound on the min-entropy, it is sufficient to bound the smooth min-entropy of the sources $X_1$ and $X_2$. Then in Section~\ref{sec:nonmarkov_sources} we show that one can also extract from any source obtained by deleting information from a Markov source, even though the resulting state might not be a Markov chain any longer. The multi-source extractor model for strong extractors from \cite{chung2014multi} falls in this category.

\subsection{Smooth min-entropy}\label{sec:smooth_min-entropy}

It is standard for the extractor definitions to require a bound on the min-entropy of the source conditioned on the side information, i.e., $H_{\text{min}}(X_i|C) \geq k_i$. In practical situations, however, one often only has a bound on the \emph{smooth} min-entropy---this is defined by maximising the min-entropy over all states $\delta$-close, see Equation~\eqref{eq:smooth_min-entropy} below. For example, in quantum key distribution a bound on the smooth min-entropy is obtained by sampling the noise on the quantum channel~\cite{tomamichel2012QKD}. In this section we prove that any quantum-proof two-source extractor can be used in a context where only a bound on the smooth min-entropy is known.

The smooth conditional min-entropy with smoothness parameter $\delta$ of a state $\rho_{XC}$ is defined as follows.
\begin{equation}\label{eq:smooth_min-entropy}
H_{\text{min}}^\delta(X|C)_\rho = \max_{\sigma \in \mathcal{B}^\delta(\rho)} H_{\text{min}}(X|C)_\sigma,
\end{equation}
where $\mathcal{B}^\delta(\rho)$ is a ball of radius $\delta$ around $\rho_{XC}$. This ball is defined as the set of \emph{subnormalized} states $\sigma$ with $P(\rho,\sigma) 
\leq \delta$, where $P(\cdot,\cdot)$ is the \emph{purified distance}~\cite{tomamichel2010entropyduality}. The exact definition of the purified distance is not needed in this paper, so we omit it for simplicity and refer the interested reader to \cite{tomamichel2010entropyduality}. The only property of the purified distance that we need in this work is that for any (subnormalized) $\rho$ and $\sigma$,
\[ P(\rho,\sigma) \geq \frac{1}{2} \|\rho-\sigma\|.\]
This means that if $H_{\text{min}}^\delta(X|C)_\rho \geq k$, then there exists a subnormalized $\sigma_{XC}$ such that $\frac{1}{2}\|\rho-\sigma\| \leq \delta$ and $H_{\text{min}}(X|C)_\sigma \geq k$.

We can now state our main lemma. This can be generalised to the multi-source case in a straightforward manner.

\begin{lemma}\label{lem:smooth-entropy_bound}
Let $\mathrm{Ext} : \{0,1\}^{n_1} \times  \{0,1\}^{n_2} \to \{0,1\}^m$ be a $(k_1-\log 1/\eps_1-1,k_2-\log 1/\eps_2 - 1,\eps)$ quantum-proof two-source extractor in the Markov model. Then for any Markov state $\rho_{X_1X_2C}$ with $\HminSmooth[\rho]{\delta_1}{X_1|C} \geq k_1$ and
  $\HminSmooth[\rho]{\delta_2}{X_2|C} \geq k_2$,
\[\frac{1}{2} \norm{\rho_{\mathrm{Ext}(X_1,X_2)C} - \rho_{U_m} \otimes \rho_C} \leq 6 \delta_1 + 6 \delta_2 + 2\eps_1 + 2\eps_2 + 2\eps\]
if the extractor is weak, and 
\[\frac{1}{2} \norm{\rho_{\mathrm{Ext}(X_1,X_2)X_iC} - \rho_{U_m} \otimes \rho_{X_iC}} \leq 6 \delta_1 + 6 \delta_2 + 2\eps_1 + 2\eps_2 + 2\eps\] if the extractor is strong in the source $X_i$.
\end{lemma}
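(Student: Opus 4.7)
The plan is to reduce the smooth min-entropy statement to the non-smooth quantum-proof extractor result, Lemma~\ref{lem:quantum_markov_two_source}. Concretely, I will construct a Markov state $\bar\rho_{X_1X_2C}$ that is close to $\rho$ in trace distance and has (non-smooth) conditional min-entropies $H_{\text{min}}(X_i|C)_{\bar\rho} \geq k_i - \log 1/\eps_i - 1$; then Lemma~\ref{lem:quantum_markov_two_source} applies to $\bar\rho$ with error $\eps$, and the triangle inequality closes out.

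First, by the definition of smooth min-entropy and the inequality $\frac{1}{2}\|\cdot\| \leq P(\cdot,\cdot)$, for each $i \in \{1,2\}$ there exists a subnormalised $\tilde\rho^i_{X_1X_2C}$ with $\frac{1}{2}\|\rho - \tilde\rho^i\| \leq \delta_i$ and $H_{\text{min}}(X_i|C)_{\tilde\rho^i} \geq k_i$. I exploit the decomposition $\rho = \bigoplus_t p(t)\, \rho^t_{X_1C_1^t} \otimes \rho^t_{X_2C_2^t}$ from Equation~\eqref{eq:quantummarkov}: pinching $C$ by the block projectors is a CPTP map that fixes $\rho$ and does not decrease any conditional min-entropy. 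Applying it to each $\tilde\rho^i$ yields states $\hat\rho^i$ that remain $\delta_i$-close to $\rho$, are block-diagonal with respect to a classical block label $T$, and still satisfy $H_{\text{min}}(X_i|CT)_{\hat\rho^i} \geq k_i$.

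The core technical step is a Markov-type inequality per source. Block diagonality on $T$ implies that $2^{-H_{\text{min}}(X_i|CT)_{\hat\rho^i}}$ is the average over the block weights of $\hat\rho^i$ of the per-block guessing probabilities; hence there is a set $\mathcal{T}_i$ of blocks of weight at most $\eps_i$ under $\hat\rho^i$ whose per-block conditional min-entropy falls below $k_i - \log 1/\eps_i$. Monotonicity of trace distance then bounds the weight of $\mathcal{T}_i$ under the original $p(t)$ by $\eps_i + 2\delta_i$. Projecting $\rho$ onto the complement $\overline{\mathcal{T}_1} \cap \overline{\mathcal{T}_2}$ and renormalising gives a state $\bar\rho$ that inherits the tensor-product form inside each surviving block---hence is a Markov chain---with $H_{\text{min}}(X_i|C)_{\bar\rho} \geq k_i - \log 1/\eps_i - 1$; the extra $-1$ absorbs the renormalisation factor (if the discarded mass exceeds $1/2$ the claimed bound is trivial).

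Applying Lemma~\ref{lem:quantum_markov_two_source} to $\bar\rho$ yields extractor error at most $\eps$, and combining via the triangle inequality---using monotonicity of the trace distance under $\Ext \otimes \mathbb{1}_C$ and under the partial trace to $C$---delivers the claimed $6\delta_1 + 6\delta_2 + 2\eps_1 + 2\eps_2 + 2\eps$. The strong case is identical since Lemma~\ref{lem:quantum_markov_two_source} already treats strong extractors with the same parameters (simply append $X_i$ to the side information throughout). The main obstacle is the third step: the two smoothings $\tilde\rho^1$ and $\tilde\rho^2$ generally differ, so the Markov-type argument must be carried out separately for each source and the good-block sets then intersected on $\rho$ itself, rather than on either $\hat\rho^i$, so that the tensor-product structure inside each surviving block is preserved and $\bar\rho$ really is a Markov chain.
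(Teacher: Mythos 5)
Your overall strategy --- replace $\rho$ by a nearby Markov state with genuine (non-smooth) min-entropy bounds, apply the extractor hypothesis to it, and close with the triangle inequality --- is the same as the paper's, and several of your individual steps (the pinching, the per-block Markov inequality, transferring the weight of the bad block set from $\hat\rho^i$ to $p(t)$, absorbing the renormalisation into the ``$-1$'') are sound. The gap is in the construction of $\bar\rho$. You define $\bar\rho$ by projecting $\rho$ \emph{itself} onto the good blocks, but the per-block min-entropy bounds you established via Markov's inequality are bounds on the blocks of the \emph{smoothed} states $\hat\rho^i$, not on the blocks of $\rho$. Smoothing only controls the global trace distance, so on a block $t$ of weight $p(t)=2\delta_1$, say, the state $\rho^t_{X_1C_1^t}$ can be arbitrary (e.g.\ deterministic on $X_1$) while $\hat\rho^1$ assigns that block weight zero; such a block is ``good'' by your criterion, survives the projection, and forces $2^{-H_{\text{min}}(X_1|CT)_{\bar\rho}} \geq 2\delta_1$, which is incompatible with $H_{\text{min}}(X_1|CT)_{\bar\rho} \geq k_1 - \log(1/\eps_1) - 1$ whenever $\delta_1 \gg \eps_1 2^{-k_1}$. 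Even on heavy blocks the per-block guessing probability of $\rho$ can exceed that of $\hat\rho^i$ by roughly $\delta_i$ divided by the block weight, which again swamps $\eps_i 2^{-k_i}$; and a further Markov argument on the per-block distances does not rescue this, since forcing the per-block distance below $\eps_i 2^{-k_i}$ would require discarding blocks of total weight up to $\delta_i\eps_i^{-1}2^{k_i}$.

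You correctly sensed the tension in your final sentence, but resolved it the wrong way round. The paper's Lemma~\ref{lem:subnormalized_markov_chain} keeps $\rho$'s block weights $\bar p(z)$ (restricted to the good blocks) but fills each block with the \emph{tensor product of the smoothed marginals}: the $X_1C_1$ content of block $z$ is taken from the smoothing of $\rho_{X_1C_1Z}$ and the $X_2C_2$ content from the smoothing of $\rho_{X_2C_2Z}$. The resulting state is a Markov chain by construction (a product within each block) \emph{and} inherits the per-block min-entropy bounds from the smoothed states, which is exactly what your $\bar\rho$ lacks; the price is a slightly longer hybrid estimate of the distance to $\rho$, giving $\eps_1+\eps_2+3\delta_1+3\delta_2$. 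Two minor points: the state so obtained is subnormalized, which the paper handles via Lemma~\ref{lem:extracting_subnormalized} at the cost of the factor $2$ in front of $\eps$ and the ``$-1$'' in the entropies --- your renormalisation idea would be an acceptable alternative once the construction is fixed; and the final step should invoke the hypothesis that $\mathrm{Ext}$ is a quantum-proof extractor in the Markov model, not Lemma~\ref{lem:quantum_markov_two_source}, which concerns lifting classical extractors to quantum-proof ones.
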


To prove that Lemma~\ref{lem:smooth-entropy_bound} holds, we first need to prove that if a state $\rho_{X_1X_2C}$ is guaranteed to be a Markov state with bounded smooth min-entropy, then there is a (subnormalized) state $\sigma_{X_1X_2C}$ close by which is also a Markov state with a bound on the min-entropy. This can be seen as a robustness property of the Markov model for extractors.

\begin{lemma}\label{lem:subnormalized_markov_chain}
  Let $\rho_{X_1X_2C}$ be a Markov state $X_1 \leftrightarrow C \leftrightarrow X_2$ such that
  $\HminSmooth[\rho]{\delta_1}{X_1|C} \geq k_1$ and
  $\HminSmooth[\rho]{\delta_2}{X_2|C} \geq k_2$. Then there exists a subnormalized
  state $\sigma_{X_1X_2C}$ such that $X_1,X_2$ and $C$ still form a Markov chain $X_1 \leftrightarrow C \leftrightarrow X_2$, and
  $\Hmin[\sigma]{X_1|C} \geq k_1 - \log \frac{1}{\eps_1}$,
  $\Hmin[\sigma]{X_2|C} \geq k_2 - \log \frac{1}{\eps_2}$ and
  $\frac{1}{2}\trnorm{\rho-\sigma} \leq \eps_1+\eps_2+3\delta_1+3\delta_2$.
\end{lemma}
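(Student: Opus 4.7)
The plan is to exploit the block-product structure of Markov states given by Equation~\eqref{eq:quantummarkov}, namely $\rho_{X_1X_2C} = \bigoplus_t p(t)\, \rho^t_{X_1C_1^t}\otimes\rho^t_{X_2C_2^t}$. The key structural observation is that any pair of operations acting only on the classical systems $X_1$ and $X_2$ respectively commute and together preserve the Markov block-product decomposition: on the block labelled $t$ they act as a tensor product $\Lambda_{X_1}\otimes\Lambda_{X_2}$, which keeps the $X_1C_1^t$ factor separated from the $X_2C_2^t$ factor. In particular, for any subsets $S_i\subseteq\{0,1\}^{n_i}$ with associated classical projectors $P_{X_i}=\sum_{x_i\in S_i}\ket{x_i}\bra{x_i}$, the subnormalised state
\[
  \sigma := P_{X_1} P_{X_2}\, \rho\, P_{X_2} P_{X_1}
\]
is still a Markov chain $X_1\leftrightarrow C\leftrightarrow X_2$.

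The sets $S_1,S_2$ will come from a smoothing-by-classical-truncation argument applied to each marginal $\rho_{X_iC}$. I would first establish as an auxiliary lemma that, for any cq state $\rho_{XC}$ with $H_{\text{min}}^\delta(X|C)_\rho \geq k$, there is a subset $S\subseteq\{0,1\}^n$ such that $\sigma^S_{XC}:=P_S\,\rho_{XC}\,P_S$ satisfies $H_{\text{min}}(X|C)_{\sigma^S}\geq k-\log\tfrac{1}{\eps}$ together with $\tfrac{1}{2}\norm{\rho_{XC}-\sigma^S_{XC}} \leq \eps+3\delta$. The natural construction is to dephase the subnormalised smoothing state $\tilde\rho_{XC}$ in the $X$-basis (which preserves both the purified distance and the min-entropy), invoke the dual form of conditional min-entropy to get $\tilde\omega_x\leq 2^{-k}\xi_C$ for some state $\xi_C$, and let $S$ be the set of those $x$ for which $\omega_x$ does not exceed $\eps^{-1}2^{-k}\xi_C$ too badly. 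The min-entropy bound on $\sigma^S_{XC}$ then follows directly from the dual formulation, while the trace-distance bound follows by combining $\tfrac{1}{2}\norm{\rho-\tilde\rho}\leq\delta$ with a Markov-inequality-type bookkeeping of the total probability mass that has to be dropped to enforce the pointwise operator inequality.

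To finish, apply this auxiliary lemma to each marginal $\rho_{X_iC}$ (noting $H_{\text{min}}^{\delta_i}(X_i|C)_\rho\geq k_i$ is exactly the hypothesis), obtain the classical projectors $P_{X_1},P_{X_2}$, and define $\sigma$ as above; by the first paragraph it is a Markov chain. For the trace distance, use the triangle inequality and the fact that $\tau\mapsto P_{X_1}\tau P_{X_1}$ is a trace-non-increasing CP map, hence contractive on the trace norm:
\[
  \tfrac{1}{2}\norm{\rho-\sigma} \leq \tfrac{1}{2}\norm{\rho-P_{X_1}\rho P_{X_1}} + \tfrac{1}{2}\norm{P_{X_1}(\rho-P_{X_2}\rho P_{X_2})P_{X_1}} \leq (\eps_1+3\delta_1) + (\eps_2+3\delta_2),
\]
where I also use that classical computational-basis projections on $X_i$ give the same trace distance when applied to the joint state $\rho$ or to its marginal $\rho_{X_iC}$. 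For the min-entropy of $\sigma$, observe that because $X_2$ is classical and $P_{X_2}$ is a computational-basis projector we have $P_{X_2}\rho P_{X_2}\leq \rho$ as positive operators, and hence $\sigma_{X_1C} = P_{X_1}\,\tr_{X_2}(P_{X_2}\rho P_{X_2})\,P_{X_1} \leq P_{X_1}\rho_{X_1C}P_{X_1}$; the conditional min-entropy is monotone under this Löwner order (since $p_{\text{guess}}$ is), giving $H_{\text{min}}(X_1|C)_\sigma\geq H_{\text{min}}(X_1|C)_{P_{X_1}\rho_{X_1C}P_{X_1}}\geq k_1-\log\tfrac{1}{\eps_1}$, and symmetrically for $X_2$.

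The main obstacle is the auxiliary smoothing lemma in the second paragraph: the standard definition of the smooth min-entropy produces an arbitrary subnormalised close state, whereas here we need one arising from a classical-basis projection on $X$ alone, so that it can be lifted from the marginal to the joint state without disturbing the Markov structure of $\rho$. Calibrating the quantitative trade-off between the slack $\log\tfrac{1}{\eps}$ in the min-entropy and the probability mass $\eps+3\delta$ removed by the truncation is the technical heart of the proof.
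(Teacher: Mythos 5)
Your structural observations are sound---computational-basis projections on the classical registers $X_1,X_2$ do preserve the block-product Markov decomposition, $P_{X_2}\rho P_{X_2}\le\rho$ does give monotonicity of the guessing probability, and the triangle-inequality bookkeeping is fine---but the auxiliary lemma on which everything rests is false: for cq states one cannot in general smooth the conditional min-entropy by truncating to a subset $S$ of values of $X$. Concretely, take $\rho_{XC}=2^{-n}\sum_x\proj{x}\otimes\bigl[(1-\mu)\proj{0}+\mu\proj{\psi_x}\bigr]$ with $\{\ket{\psi_x}\}_x$ orthonormal and orthogonal to $\ket{0}$. Deleting the $\mu\proj{\psi_x}$ pieces yields a subnormalized product state with $H_{\text{min}}(X|C)\ge n$ at purified distance $O(\sqrt{\mu})$, so $H_{\text{min}}^{O(\sqrt{\mu})}(X|C)_{\rho}\ge n$; yet any $P_S\rho P_S$ retaining all but a constant fraction of the mass still admits the guessing measurement $\{\proj{\psi_x}\}_x$ and hence has $p_{\text{guess}}\ge\Omega(\mu)$, i.e.\ $H_{\text{min}}(X|C)_{P_S\rho P_S}\le\log\frac{1}{\mu}+O(1)\ll n-\log\frac{1}{\eps}$. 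The step of your sketch that breaks is ``let $S$ be the set of $x$ for which $\omega_x$ does not exceed $\eps^{-1}2^{-k}\xi_C$ too badly'': from $\tilde\omega_x\le 2^{-k}\xi_C$ and $\sum_x\|\omega_x-\tilde\omega_x\|_1\le 2\delta$ you only control the \emph{trace} of the excess $(\omega_x-\tilde\omega_x)_+$, not its direction; that excess can live in a subspace where $\xi_C$ has no weight and be spread uniformly over all $x$, so the required operator inequality fails for every single $x$ and no small set of deleted $x$'s can absorb the damage.

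The repair is exactly what the paper does: the smoothing must be allowed to \emph{replace} the conditional operators on $C$ (the paper substitutes the operators $\sigma^{x_i,z}_{C_i}$ coming from the smoothing states for the original $\rho^{x_i,z}_{C_i}$), and the only truncation performed is over the values $z$ of the classical label in the Markov decomposition, which sits in the \emph{conditioning} register. There the identity $2^{-H_{\text{min}}(X|CZ)}=\sum_z q(z)\,2^{-H_{\text{min}}(X|C,Z=z)}$ holds exactly, so Markov's inequality legitimately isolates a low-probability set of bad $z$'s whose removal costs only $\eps_1+\eps_2$ in trace distance; no analogous decomposition of $p_{\text{guess}}(X|C)$ over values of $X$ itself exists, which is why your truncation on $X$ cannot work. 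Gluing the two smoothing states along the common classical $Z$ (keeping conditional independence of $X_1$ and $X_2$ given $z$) then makes $\sigma$ a Markov chain by construction, recovering what your projector argument was meant to deliver.
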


\begin{proof}
  By the Markov chain condition, the state $\rho_{X_1X_2C}$ can
  equivalently be written
  \[ \rho_{X_1C_1ZE_2X_2} = \sum_{x_1,x_2,z} p(z) p(x_1|z) p(x_2|z) \proj{x_1}
  \otimes \rho^{x_1,z}_{C_1} \otimes \proj{z} \otimes \rho^{x_2,z}_{C_2}
  \otimes \proj{x_2}.\]
  Thus
  $\HminSmooth[\rho]{\delta_1}{X_1|C} =
  \HminSmooth[\rho]{\delta_1}{X_1|C_1Z}$
  and
  $\HminSmooth[\rho]{\delta_2}{X_2|C} =
  \HminSmooth[\rho]{\delta_2}{X_2|C_2Z}$. In the following we use only
  this form with the explicit classical register $Z$.

  By the definition of smooth min-entropy, we know that there exist
  (subnormalized) states
  \begin{align*} \tilde{\sigma}_{X_1C_1Z} & = \sum_{x_1,z} q_1(z) q(x_1|z) \proj{x_1} \otimes
  \sigma^{x_1,z}_{C_1} \otimes \proj{z} \\
  \text{and} \qquad 
  \hat{\sigma}_{X_2C_2Z} & = \sum_{x_2,z} q_2(z) q(x_2|z) \proj{x_2} \otimes
  \sigma^{x_2,z}_{C_2} \otimes \proj{z} \end{align*}
  such that $\frac{1}{2}\trnorm{\rho_{X_1C_1Z}-\tilde{\sigma}_{X_1C_1Z}} \leq \delta_1$,
  $\frac{1}{2}\trnorm{\rho_{X_2C_2Z}-\hat{\sigma}_{X_2C_2Z}} \leq \delta_2$,
  $\Hmin[\tilde{\sigma}]{X_1|C_1Z} \geq k_1$ and
  $\Hmin[\hat{\sigma}]{X_2|C_2Z} \geq k_2$.

Since $2^{-\Hmin[\sigma]{X_1|CZ}} = \sum_z q(z) 2^{-\Hmin[\sigma]{X_1|CZ=z}}$ also for subnormalized distributions $q(\cdot)$, we can define $2^{-\Hmin[\sigma]{X_1|CZ=z}} := 0$ when $q(z) = 0$, then pad $q(\cdot)$ to get a normalized distribution for which $2^{-\Hmin[\sigma]{X_1|CZ}} = \mathop{\mathbb{E}}_z \left[2^{-\Hmin[\sigma]{X_1|CZ=z}}\right]$. We can thus use Markov's inequality and get
\begin{align*} 
	\Pr_{z \leftarrow Z} \left[ \Hmin[\tilde{\sigma}]{X_1|C_1Z = z} \leq k_1 - \log \frac{1}{\eps_1} \right] & \leq \eps_1 \\
 	\text{and} \qquad \Pr_{z \leftarrow Z} \left[ \Hmin[\hat{\sigma}]{X_2|C_2Z =z} \leq k_2 - \log \frac{1}{\eps_2} \right] & \leq \eps_2 \;. 
\end{align*}

Let $\mathcal{Z}_1$ and $\mathcal{Z}_2$ be the sets of values for
which $q_1(z_1) \neq 0$, $q_2(z_2) \neq 0$, and
\begin{align*} 
	\forall z_1 \in \mathcal{Z}_1, \quad & \Hmin[\tilde{\sigma}]{X_1|C_1Z = z_1} \geq k_1 - \log \frac{1}{\eps_1} \\
	\text{and} \qquad \forall z_2 \in \mathcal{Z}_2, \quad & \Hmin[\hat{\sigma}]{X_2|C_2Z = z_2} \geq k_2 - \log \frac{1}{\eps_2} \;. 
\end{align*} 
Let
  $\bar{\mathcal{Z}} := \mathcal{Z}_1 \cap \mathcal{Z}_2$ be their intersection, and
  let $\bar{p}(z)$ be a subnormalized distribution given by
\[ \bar{p}(z) := \begin{cases} p(z) & \text{if
    $z \in \bar{\mathcal{Z}}$,} \\ 0 & \text{otherwise.} \end{cases} \]
We define the (subnormalized) state
\[ \sigma_{X_1C_1ZC_2X_2} := \sum_{x,y,z} \bar{p}(z) q(x_1|z) q(x_2|z)
\proj{x_1} \otimes \sigma^{x_1,z}_{C_1} \otimes \proj{z} \otimes
\sigma^{x_2,z}_{C_2} \otimes \proj{x_2},\]
and prove in the following that it satisfies the conditions of the
lemma.

By construction of $\sigma$ we have $2^{-\Hmin[\sigma]{X_1|C_1Z}} = \sum_z \bar{p}(z) 2^{-\Hmin[\sigma]{X_1|C_1Z=z}}$ for values $z$ such that $\Hmin[\sigma]{X_1|C_1Z=z} \geq k_1 - \log \frac{1}{\eps_1}$. Hence $\Hmin[\sigma]{X_1|C_1Z} \geq k_1 - \log \frac{1}{\eps_1}$ and similarly $\Hmin[\sigma]{X_2|C_2Z} \geq k_2 - \log \frac{1}{\eps_2}$.

 To bound the distance from $\rho_{X_1C_1ZC_2X_2}$, first note that
\[ \frac{1}{2}\sum_z \left|\bar{p}(z) - p(z) \right| \leq
\eps_1+\eps_2+\delta_1+\delta_2.\]
We also have
\begin{multline*}
  \sum_{x_1,z} \frac{p(z)}{2} \trnorm{p(x_1|z) \rho^{x_1,z}_{C_1} - q(x_1|z)
    \sigma^{x_1,z}_{C_1}}
  \leq \\ \sum_{x_1,z} \frac{1}{2}\trnorm{ p(z) p(x_1|z)
    \rho^{x_1,z}_{C_1} - q_1(z) q(x_1|z)
    \sigma^{x_1,z}_{C_1}} + \frac{1}{2}\trnorm{ q_1(z) q(x_1|z)
    \sigma^{x_1,z}_{C_1} - p(z) q(x_1|z)
    \sigma^{x_1,z}_{C_1}  } \leq 2 \delta_1.
\end{multline*}
The same holds for $X_2C_2Z$, namely 
\[ \sum_{x_2,z} \frac{p(z)}{2} \trnorm{p(x_2|z) \rho^{x_2,z}_{C_2} - q(x_2|z)
    \sigma^{x_2,z}_{C_2}} \leq 2 \delta_2.\]
 Putting this together we get
\begin{align*}
  & \frac{1}{2}\trnorm{\rho_{X_1C_1ZC_2X_2} - \sigma_{X_1C_1ZC_2X_2}} \\
  & \qquad = \sum_{x,y,z} \frac{1}{2}\trnorm{p(z) p(x_1|z) p(x_2|z)
    \rho^{x_1,z}_{C_1} \otimes \rho^{x_2,z}_{C_2} - \bar{p}(z) q(x_1|z)
    q(x_2|z) \sigma^{x_1,z}_{C_1} \otimes
    \sigma^{x_2,z}_{C_2}} \\
  & \qquad = \sum_{x,y,z} \frac{p(z)}{2}\trnorm{p(x_1|z) p(x_2|z)
    \rho^{x_1,z}_{C_1} \otimes \rho^{x_2,z}_{C_2} - q(x_1|z)
    q(x_2|z) \sigma^{x_1,z}_{C_1} \otimes
    \sigma^{x_2,z}_{C_2}} +\eps_1+\eps_2+\delta_1+\delta_2\\
  & \qquad \leq \sum_{x,y,z} \frac{p(z)}{2} \trnorm{p(x_1|z) p(x_2|z)
    \rho^{x_1,z}_{C_1} \otimes \rho^{x_2,z}_{C_2} - q(x_1|z)
    p(x_2|z) \sigma^{x_1,z}_{C_1} \otimes
    \rho^{x_2,z}_{C_2}} \\
  & \qquad \qquad \qquad {}+ \frac{p(z)}{2} \trnorm{q(x_1|z) p(x_2|z)
    \sigma^{x_1,z}_{C_1} \otimes \rho^{x_2,z}_{C_2} - q(x_1|z)
    q(x_2|z) \sigma^{x_1,z}_{C_1} \otimes
    \sigma^{x_2,z}_{C_2}} +\eps_1+\eps_2+\delta_1+\delta_2\\
  & \qquad = \sum_{x_1,z} \frac{p(z)}{2} \trnorm{p(x_1|z)
    \rho^{x_1,z}_{C_1} - q(x_1|z) \sigma^{x_1,z}_{C_1}} 
   + \sum_{x_2,z} \frac{p(z)}{2} \trnorm{p(x_2|z) \rho^{x_2,z}_{C_2} - 
    q(x_2|z) \sigma^{x_2,z}_{C_2}} +\eps_1+\eps_2+\delta_1+\delta_2 \\
& \qquad \leq \eps_1+\eps_2+3\delta_1+3\delta_2.\qedhere
\end{align*}
\end{proof}

Since Lemma~\ref{lem:subnormalized_markov_chain} finds a subnormalized state that is close, the next step is to prove that one can extract from subnormalized states. This is done in Appendix~\ref{sec:extracting_subnormalized} in Lemma~\ref{lem:extracting_subnormalized}. Combining this with a simple use of the triangle inequality allows us to prove Lemma~\ref{lem:smooth-entropy_bound}.

\begin{proof}[Proof of Lemma~\ref{lem:smooth-entropy_bound}]
We prove the case of a weak extractor $\mathrm{Ext}$. The proof for a strong extractor is identical.

By Lemma~\ref{lem:subnormalized_markov_chain} there exists a subnormalized Markov state $\sigma_{X_1X_2C}$ such that $\frac{1}{2}\norm{\rho_{X_1X_2C} - \sigma_{X_1X_2C}} \leq \eps_1 + \eps_2 + 3\delta_1 + 3\delta_2$ and $\Hmin[\sigma]{X_i|C} \geq k_i - \log 1/\eps_i$. Hence
\begin{align*}
& \frac{1}{2} \norm{\rho_{\mathrm{Ext}(X_1,X_2)C} - \rho_{U_m} \otimes \rho_C} \\
& \qquad \qquad \leq \frac{1}{2} \norm{\rho_{\mathrm{Ext}(X_1,X_2)C} - \sigma_{\mathrm{Ext}(X_1,X_2)C}} + \frac{1}{2} \norm{\sigma_{\mathrm{Ext}(X_1,X_2)C} - \rho_{U_m} \otimes \sigma_C} + \frac{1}{2} \norm{\rho_{U_m} \otimes \sigma_C - \rho_{U_m} \otimes \rho_C} \\
& \qquad \qquad \leq 2\eps_1 + 2\eps_2 + 6\delta_1 + 6\delta_2 + \frac{1}{2} \norm{\sigma_{\mathrm{Ext}(X_1,X_2)C} - \rho_{U_m} \otimes \sigma_C} \\
& \qquad \qquad \leq 2\eps_1 + 2\eps_2 + 6\delta_1 + 6\delta_2 + 2\eps\,,
\end{align*}
where in the last line we used Lemma~\ref{lem:extracting_subnormalized}.
\end{proof}

\subsection{Non-Markov sources}\label{sec:nonmarkov_sources}

It is trivial to show that if part of the side information $E$ is deleted, this cannot decrease the security of an extractor. As already observed in \cite{chung2014multi}, in the case of an extractor that is strong in the source $X_i$, any operation on $E$ conditioned on $X_i$ cannot help an adversary either. Intuitively, this holds because the adversary is given the entire source $X_i$, thus copying information about it to $E$ is pointless. We formalize this in the following lemma.

\begin{lemma}\label{lem:nonmarkov_source}
Let $\rho_{X_1EX_2}$ be a Markov source with $\Hmin[\rho]{X_i|E} \geq k_i$. Let $\mathcal{E} : \mathrm{L}(E) \to \mathrm{L}(E)$ be any CPTP map on $E$. If $\mathrm{Ext}$ is a $(k_1,k_2,\eps)$ quantum-proof two-source extractor, then it can be used to extract from $\sigma_{X_1EX_2} = \mathcal{E}(\rho_{X_1EX_2})$ with error $\eps$. Let $\mathcal{E} : \mathrm{L}(X_iE) \to \mathrm{L}(X_iE)$ be a CPTP map that leaves $X_i$ unmodified, i.e., $\mathcal{E}(\sum_x p_x \ketbra{x}{x} \otimes \rho^x_E) = \sum_x p_x \ketbra{x}{x} \otimes \mathcal{E}_x(\rho^x_E)$ for some set of CPTP maps $\mathcal{E}_x : \mathcal{L}(\mathcal{H}_E) \to \mathcal{L}(\mathcal{H}_E)$. If $\mathrm{Ext}$ is a $(k_1,k_2,\eps)$ quantum-proof two-source extractor strong in $X_i$, then it can be used to extract from $\sigma_{X_1EX_2} = \mathcal{E}(\rho_{X_1E_X2})$ with error $\eps$. \end{lemma}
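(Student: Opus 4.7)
The plan is to reduce the statement about $\sigma$ to the already-established extractor guarantee on the Markov source $\rho$, using two elementary facts: (i) the map $\mathcal{E}$ commutes with the extractor $\mathrm{Ext}$, and (ii) the trace norm is contractive under CPTP maps. Since $\rho_{X_1EX_2}$ is a Markov source satisfying the min-entropy conditions $\Hmin[\rho]{X_i|E} \geq k_i$, the extractor's definition (Definition~\ref{def:quantum-two-source.mc}) already gives us a bound on $\|\rho_{\mathrm{Ext}(X_1,X_2)E} - \rho_{U_m} \otimes \rho_E\|$ (or its strong analog). The goal is therefore to show that applying $\mathcal{E}$ to both the actual and ideal states preserves the inequality, up to the same constant $\eps$.

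For the first statement, $\mathcal{E}$ acts only on $E$, whereas $\mathrm{Ext}$ is a function of $X_1,X_2$ only, so they trivially commute, giving $\sigma_{\mathrm{Ext}(X_1,X_2)E} = (\id \otimes \mathcal{E})(\rho_{\mathrm{Ext}(X_1,X_2)E})$ and $\rho_{U_m} \otimes \sigma_E = (\id \otimes \mathcal{E})(\rho_{U_m} \otimes \rho_E)$. Contractivity of the trace norm under $\id \otimes \mathcal{E}$ then yields
\[
\tfrac{1}{2}\|\sigma_{\mathrm{Ext}(X_1,X_2)E} - \rho_{U_m} \otimes \sigma_E\| \leq \tfrac{1}{2}\|\rho_{\mathrm{Ext}(X_1,X_2)E} - \rho_{U_m} \otimes \rho_E\| \leq \eps,
\]
which is the required bound. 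Note that we do not need $\sigma$ itself to be a Markov source (and indeed it generally is not, since a processing of $E$ can destroy conditional independence); we only ever invoke the extractor guarantee on $\rho$.

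The second statement proceeds identically once we check that the map $\mathcal{E}$ on $X_iE$, which leaves $X_i$ classically unchanged, still commutes with $\mathrm{Ext}$. Writing $\rho_{X_iE} = \sum_{x_i} p_{x_i} \proj{x_i} \otimes \rho^{x_i}_E$ and using the assumed block form $\mathcal{E}(\proj{x_i} \otimes \tau) = \proj{x_i} \otimes \mathcal{E}_{x_i}(\tau)$, the extractor output $Y=\mathrm{Ext}(X_1,X_2)$ is unaffected by whether $\mathcal{E}$ is applied before or after $\mathrm{Ext}$. Hence $\sigma_{\mathrm{Ext}(X_1,X_2)X_iE} = (\id_Y \otimes \mathcal{E})(\rho_{\mathrm{Ext}(X_1,X_2)X_iE})$ and $\rho_{U_m} \otimes \sigma_{X_iE} = (\id_Y \otimes \mathcal{E})(\rho_{U_m} \otimes \rho_{X_iE})$, and applying contractivity to the strong-extractor bound for $\rho$ yields the corresponding bound for $\sigma$. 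The only point requiring care is the commutativity argument in part 2, since it relies crucially on the classical nature of the register $X_i$ and on the assumed structure of $\mathcal{E}$; beyond this, the argument is a short and direct application of data processing for the trace distance.
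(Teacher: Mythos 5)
Your proof is correct and follows essentially the same route as the paper's: both rest on the observation that $\mathrm{Ext}$ and $\mathcal{E}$ commute (which in the strong case uses the classical, block-diagonal structure of $\mathcal{E}$ on $X_i$) together with contractivity of the trace distance, applied to the extractor guarantee for the Markov state $\rho$. Your write-up is, if anything, slightly more explicit than the paper's about why the commutation holds in the strong case.
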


\begin{proof}
We prove the case of the strong extractor. The proof for the weak extractor follows the same steps. We need to show that \[ \frac{1}{2} \trnorm{\sigma_{\mathrm{Ext(X_1,X_2)X_iE}} - \rho_{U_m} \otimes \sigma_{X_iE}} \leq \eps\,.\] This follows from the contractivity of the trace distance and because the maps $\mathrm{Ext}$ and $\mathcal{E}$ commute:
\begin{align*}
\frac{1}{2} \trnorm{\sigma_{\mathrm{Ext(X_1,X_2)X_iE}} - \rho_{U_m} \otimes \sigma_{X_iE}} & = \frac{1}{2} \trnorm{\mathcal{E}(\rho_{\mathrm{Ext(X_1,X_2)X_iE}}) - \rho_{U_m} \otimes \mathcal{E}(\sigma_{X_iE})} \\
& \leq \frac{1}{2} \trnorm{\rho_{\mathrm{Ext(X_1,X_2)X_iE}} - \rho_{U_m} \otimes \rho_{X_iE}} \leq \eps\,. \qedhere
\end{align*}
\end{proof}

An equivalent result in \cite[Theorem~4.1]{chung2014multi} allows the authors to prove that their complex information leaking model can be reduced to side information about one of the sources, which implies that a strong extractor in the Markov model is also an extractor in the model of \cite{chung2014multi}. Note that, as already observed in~\cite{chung2014multi}, the entropy of the state $\sigma_{X_1EX_2}$ is not meaningful, since the operation $\mathcal{E}$ might delete information without reducing the capacity to distinguish the output of the extractor from uniform. One has to measure the entropy on the Markov state before $\mathcal{E}$ is applied~\cite{chung2014multi}.

\section{Explicit constructions}\label{sec:specific_constructions}

In this section we give some examples for explicit constructions of quantum-proof multi-source extractors in the Markov model, as follows from our main theorem, Theorem~\ref{thm:quantum_proof_extractors}. 

In Section~\ref{sec:construction_1} we consider a two-source extractor by Dodis et al.~\cite{dodis2004improved}. This extractor requires the sum of the entropies in both sources to be larger than $n$, and we get a construction with nearly identical parameters in the quantum case. 
In Section~\ref{sec:construction_2} we consider a two-source extractor construction by Raz~\cite{raz05extractors}, which requires one source to have entropy at least $n/2$, whereas the other can be logarithmic. Here too, the resulting quantum-proof extractor has nearly identical parameters to the classical case. 
In Section~\ref{sec:construction_3} we look at a three source extractor by Li~\cite{li2015threesource}, which only requires the sources to have entropy poly-logarithmic in $n$. Plugging this in our main theorem allows a sublinear amount of entropy to be extracted in the quantum case, and by combining it with Trevisan's extractor~\cite{de2012trevisan}, we can extract the remaining entropy and thus obtain the same output length as in the classical case. 
The final construction we analyse in Section~\ref{sec:construction_4} is based on a recent two-source extractor by Li~\cite{li2015twosource}, which only needs two sources of poly-logarithmic min-entropy. Unfortunately, the error is $n^{-\Omega(1)}$, which means that Theorem~\ref{thm:quantum_proof_extractors} only allows $\Omega(\log n)$ bits to be extracted. Composing this with another variant of Trevisan's extractor~\cite{de2012trevisan} allows a sublinear amount of randomness to be extracted at the cost of requiring one of the sources to have $k = n^\alpha$ bits of entropy for any constant $\alpha < 1$.

Since the works of Dodis et al.~\cite{dodis2004improved} and Raz~\cite{raz05extractors} provide the exact parameters for their extractors, we do the same here below in Sections \ref{sec:construction_1} and \ref{sec:construction_2}. In contrast, for the two extractors from~\cite{li2015threesource,li2015twosource} the exact parameters are unknown, as only the simplified $O$-notation form is given in the corresponding papers. For this reason the constructions in Sections \ref{sec:construction_3} and \ref{sec:construction_4} are also given in $O$-notation.

\subsection{High entropy sources}
\label{sec:construction_1}

The first extractor we consider is a strong two-source extractor from Dodis et al.~\cite{dodis2004improved}, which requires both sources together to have at least $n$ bits of entropy.

\begin{lemma}[\cite{dodis2004improved}]
\label{lem:dodis_extractor}
	For any $n_1=n_2=n$, $k_1$, $k_2$ and $m$ there exists an explicit function $\mathrm{Ext} : \{0,1\}^{n} \times \{0,1\}^{n} \to \{0,1\}^m$ that is a $(k_1,k_2,\varepsilon)$ two-source extractor, strong in both $X_1$ and in $X_2$ (separately), with $\varepsilon=2^{-(k_1+k_2+1-n-m)/2}$.  
\end{lemma}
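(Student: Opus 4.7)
The plan is to exhibit an explicit construction based on finite-field arithmetic and analyze it via Fourier/XOR techniques. Identify $\{0,1\}^n$ with the finite field $\mathbb{F}_{2^n}$, and define
\[ \mathrm{Ext}(x_1,x_2) = \mathrm{trunc}_m(x_1 \cdot x_2), \]
where multiplication is in $\mathbb{F}_{2^n}$ and $\mathrm{trunc}_m$ returns the first $m$ bits. (Equivalently, fix an $\mathbb{F}_2$-linear surjection $L : \mathbb{F}_{2^n} \to \{0,1\}^m$ and output $L(x_1 \cdot x_2)$.) Every nonzero bit-linear combination of the output coordinates is of the form $b \mapsto \langle v, b \rangle$ for some nonzero $v$, and by linearity of multiplication this reduces each such combination to a single-bit extractor of the Chor--Goldreich / inner-product type.

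The first step is to invoke the XOR lemma to reduce bounding the statistical distance $\tfrac12 \| \mathrm{Ext}(X_1,X_2) X_i - U_m \circ X_i\|$ (for $i=1,2$) to bounding, for every nonzero linear test $v \in \{0,1\}^m \setminus \{0\}$ and every $x_i$, the bias of the single bit $\langle v, \mathrm{Ext}(X_1,X_2)\rangle$ conditioned on $X_i = x_i$. The XOR lemma contributes a factor $\sqrt{2^m - 1} \le 2^{m/2}$ to the error, which is responsible for the $m$ in the exponent of $\varepsilon$.

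The second step is the one-bit bias bound. For each fixed nonzero $v$, the bit $\langle v, \mathrm{Ext}(X_1,X_2)\rangle$ is an $\mathbb{F}_2$-bilinear form $B(X_1,X_2)$ of full rank. Writing the bias as an expectation and applying Cauchy--Schwarz together with Lindsey's lemma (equivalently, the bound $\|\mathcal{F}(\mathbf{1}_A)\|_\infty^2 \le |A|$ for the Hadamard/Walsh transform), one obtains, for any distributions $P_{X_1}, P_{X_2}$ on the two sources with $H_\infty(X_i) \ge k_i$,
\[ \Bigl| \mathbb{E}\left[(-1)^{B(X_1,X_2)}\right] \Bigr| \;\le\; 2^{(n-k_1-k_2)/2}. \]
For the \emph{strong} property one repeats the same estimate after conditioning on $X_i = x_i$ and averaging; the standard argument using convexity of $x \mapsto x^2$ (Jensen) passes the $\sqrt{\cdot}$ outside the expectation over the strong-side source without loss, so the same bias bound survives the strongness conversion up to the usual constant.

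Combining the XOR reduction (factor $2^{m/2}$) with the one-bit bias estimate (factor $2^{(n-k_1-k_2)/2}$) yields
\[ \varepsilon \;\le\; 2^{(m+n-k_1-k_2-1)/2} \;=\; 2^{-(k_1+k_2+1-n-m)/2}, \]
as claimed, and the symmetry in $x_1,x_2$ gives strongness in either input separately. The main obstacle, and the place where the construction of \cite{dodis2004improved} is nontrivial, is keeping the constants tight enough so that the $+1$ in the exponent survives: a naive application of Cauchy--Schwarz and the XOR lemma typically loses extra factors, so the argument has to be carried out with the bilinear form realized as field multiplication (so that the single-bit extractor is literally an inner product after a linear change of basis) in order to apply Lindsey's lemma sharply.
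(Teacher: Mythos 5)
This lemma is stated in the paper as an imported result from \cite{dodis2004improved} and is given no internal proof, so there is nothing in the paper to compare against; your sketch is a correct reconstruction of the argument actually used in that reference (field multiplication composed with a linear surjection, the Vazirani XOR lemma contributing the $\sqrt{2^m-1}$ factor, and Lindsey's lemma bounding the bias of each nonzero linear test, which is a nondegenerate bilinear form). The one point worth making explicit is the strong case: for a \emph{fixed} $x_1$ the conditional bias of a single linear test can be as large as $1$ (e.g.\ if $X_2$ is supported in the kernel of the induced linear functional), so the bound $2^{(n-k_1-k_2)/2}$ does not hold pointwise in $x_1$; one must use the averaged form $\mathbb{E}_{x_1}\left[\mathrm{bias}_v(x_1)^2\right]\le 2^{n-k_1-k_2}$ (Lindsey's lemma in its $L^2$/Parseval form) before passing the square root outside the expectation by Jensen, which is what your ``conditioning and averaging'' step must be understood to mean.
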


To have an error $\eps < 1$, the total entropy must be $k_1 + k_2 > n-1$. The difference between $k_1+k_2$ and $n-1$ can either be extracted or used to decrease the error. Let $\ell + m = k_1+k_2+1-n$, then the error is $\eps = 2^{-\ell/2}$.

Plugging Lemma~\ref{lem:dodis_extractor} into Theorem~\ref{thm:quantum_proof_extractors} we get the following.
\begin{cor}
	For any $n_1=n_2=n$, $k'_1$, $k'_2$ and $m$
        there exists an explicit function $\mathrm{Ext} : \{0,1\}^{n} \times \{0,1\}^{n} \to \{0,1\}^m$ that is a $(k'_1,k'_2,\varepsilon)$ two-source extractor, strong in both sources (separately), with $\varepsilon'=\frac{\sqrt{3}}{2}2^{-(k'_1+k'_2+1-n-5m)/8}$.\end{cor}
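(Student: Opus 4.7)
The plan is to directly invoke Theorem~\ref{thm:quantum_proof_extractors} with $l=2$ on the classical extractor from Lemma~\ref{lem:dodis_extractor}, and then reparametrise the bounds to express them in the min-entropies of the quantum-proof version. Concretely, starting from the $(k_1,k_2,\varepsilon)$ two-source extractor of Dodis et al.\ with $\varepsilon = 2^{-(k_1+k_2+1-n-m)/2}$, Theorem~\ref{thm:quantum_proof_extractors} with $l=2$ yields a quantum-proof two-source extractor in the Markov model with parameters $(k_1+\log(1/\varepsilon), k_2+\log(1/\varepsilon), \sqrt{3\varepsilon\cdot 2^{m-2}})$. Since Lemma~\ref{lem:dodis_extractor} gives an extractor that is strong in both inputs and Theorem~\ref{thm:quantum_proof_extractors} preserves strongness (the strong version of the theorem), the resulting quantum-proof extractor is strong in both sources separately.

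Next I would set $k'_i := k_i + \log(1/\varepsilon)$ and solve for $\log(1/\varepsilon)$ in terms of $k'_1,k'_2,n,m$. Substituting $k_i = k'_i - \log(1/\varepsilon)$ into the defining relation $\log(1/\varepsilon) = (k_1+k_2+1-n-m)/2$ gives
\[
 2\log(1/\varepsilon) = k'_1+k'_2 - 2\log(1/\varepsilon) + 1 - n - m,
\]
so $4\log(1/\varepsilon) = k'_1+k'_2+1-n-m$, i.e.\ $\varepsilon = 2^{-(k'_1+k'_2+1-n-m)/4}$.

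Finally, I would plug this into the quantum error formula:
\[
 \varepsilon' \;=\; \sqrt{3\varepsilon\cdot 2^{m-2}} \;=\; \tfrac{\sqrt{3}}{2}\, 2^{m/2}\, \varepsilon^{1/2} \;=\; \tfrac{\sqrt{3}}{2}\, 2^{m/2 - (k'_1+k'_2+1-n-m)/8} \;=\; \tfrac{\sqrt{3}}{2}\, 2^{-(k'_1+k'_2+1-n-5m)/8},
\]
which matches the stated bound. The argument is essentially a routine bookkeeping exercise; there is no real obstacle beyond getting the arithmetic of the exponent right when one inverts the relation between $(k_1,k_2)$ and $(k'_1,k'_2)$ induced by the $\log(1/\varepsilon)$ loss in Theorem~\ref{thm:quantum_proof_extractors}.
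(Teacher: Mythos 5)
Your proposal is correct and follows exactly the same route as the paper's own proof: apply Theorem~\ref{thm:quantum_proof_extractors} to Lemma~\ref{lem:dodis_extractor}, use $k'_i = k_i + \log(1/\varepsilon)$ to re-express the Dodis et al.\ error relation as $\varepsilon = 2^{-(k'_1+k'_2+1-n-m)/4}$, and substitute into $\varepsilon' = \sqrt{3\varepsilon 2^m}/2$. The exponent arithmetic checks out.
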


\begin{proof} From Theorem~\ref{thm:quantum_proof_extractors} we have $k'_1 = k_1 + \log \frac{1}{\eps}$ and $k'_2 = k_2 + \log \frac{1}{\eps}$. Rewriting the error from Lemma~\ref{lem:dodis_extractor} in terms of $m$ we get $m = k_1 + k_2 + 1 - n -2 \log \frac{1}{\eps}$. Hence $m = k'_1 + k'_2 + 1 - n - 4 \log \frac{1}{\eps}$, so $\eps = 2^{-(k'_1+k'_2+1-n-m)/4}$. Plugging this in the error from Theorem~\ref{thm:quantum_proof_extractors}, namely $\eps' = \sqrt{3\eps2^m}/2$ finishes the proof.  \end{proof}

The parameters in the quantum case are very similar to the classical one. We still need $k'_1 + k'_2 > n-1$ and the difference can either be extracted or used to decrease the error. But this time for $\ell + \tilde{m} = k'_1+k'_2+1-n$ the extractor outputs $m=\tilde{m}/5$ bits with error $2^{-\ell/8}$.

Since the extractor is strong we can compose it with a quantum-proof seeded extractor, e.g., Trevisan's extractor~\cite{de2012trevisan}, to extractor more randomness from the sources---this procedure is explained in Appendix~\ref{sec:composing_extractors}. Here we use a variant of Trevisan's extractor with parameters given in Lemma~\ref{lem:trev_1} in Appendix~\ref{sec:composing_extractors}.

\begin{cor}\label{cor:DEOR_trevisan_composition}
	For any $n_1=n_2=n$, $k'_1$, $k'_2$, $\eps'$, $m''$, $\eps''$, such that
        \begin{align*} m & = \frac{k'_1+k'_2+1-n-8 \log(\sqrt{3}/2\eps')}{5} \geq d \;,\\
\max[k'_1,k'_2] & \geq m'' + 4 \log \frac{m''}{\eps''} + 6 \;,
\end{align*} where $d$ is the seed length needed by the extractor from Lemma~\ref{lem:trev_1},
there exists an explicit function $\mathrm{Ext} : \{0,1\}^{n} \times \{0,1\}^{n} \to \{0,1\}^{m+m''}$ that is a quantum-proof $(k'_1,k'_2,\eps'+\eps'')$ two-source extractor.\end{cor}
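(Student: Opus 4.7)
The plan is to compose the strong quantum-proof two-source extractor obtained in the corollary immediately preceding—the quantum version of Lemma~\ref{lem:dodis_extractor}—with the quantum-proof seeded extractor of Lemma~\ref{lem:trev_1}, following the standard ``strong extractor plus seeded extractor'' paradigm formalised in Appendix~\ref{sec:composing_extractors}. The intuition is that because the DEOR-type extractor is strong in both inputs, its output looks like an essentially independent uniform seed when viewed together with either source and the adversary's side information, and can therefore be used to drive a quantum-proof seeded extractor on whichever source has more conditional min-entropy.

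First, I would apply the strong quantum-proof DEOR extractor to the pair $(X_1,X_2)$ to obtain $Y_1 \in \{0,1\}^m$ with error $\eps'$. The choice $m = (k'_1+k'_2+1-n-8\log(\sqrt{3}/2\eps'))/5$ comes from inverting the error expression in the preceding corollary, and the assumption $m \ge d$ ensures that $Y_1$ is long enough to serve as a seed for Lemma~\ref{lem:trev_1}. Being strong in both inputs, this step gives
\[
\tfrac{1}{2}\|\rho_{Y_1 X_i C} - \rho_{U_m}\otimes\rho_{X_i C}\| \le \eps' \qquad\text{for } i\in\{1,2\}.
\]

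Next, without loss of generality I would assume $k'_1 = \max[k'_1,k'_2]$ and apply the quantum-proof Trevisan-type seeded extractor of Lemma~\ref{lem:trev_1} to $X_1$, using the first $d$ bits of $Y_1$ as the seed, to produce $Y_2 \in \{0,1\}^{m''}$. The hypothesis $k'_1 \ge m'' + 4\log(m''/\eps'')+6$ is exactly the conditional min-entropy threshold required by Lemma~\ref{lem:trev_1}. If the seed were perfectly uniform and independent of $(X_1,C)$—that is, if the joint state on $Y_1X_1C$ were $\rho_{U_m}\otimes\rho_{X_1 C}$ rather than $\rho_{Y_1 X_1 C}$—the security of the seeded extractor would yield $\tfrac12\|\rho_{Y_2 Y_1 C} - \rho_{U_{m''}}\otimes\rho_{U_m}\otimes\rho_C\|\le\eps''$. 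Since the actual joint state differs from this ideal one by at most $\eps'$ in trace distance, contractivity of the trace distance under the CPTP map that produces $Y_2$, combined with one triangle inequality, gives a total error of at most $\eps'+\eps''$ for the concatenation $Y_1 \| Y_2$.

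Finally, defining the combined extractor's output to be $Y_1 \| Y_2 \in \{0,1\}^{m+m''}$ yields the claimed $(k'_1,k'_2,\eps'+\eps'')$ quantum-proof two-source extractor in the Markov model. The only substantive technical point is the composition itself: one must verify that approximate independence of the seed from the source suffices to invoke the seeded extractor's quantum-proof security. This is a routine ``ideal-versus-real'' argument using contractivity, and I expect no serious obstacle beyond careful bookkeeping, as it is precisely what the composition lemma in Appendix~\ref{sec:composing_extractors} is designed to handle.
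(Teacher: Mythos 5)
Your proposal is correct and is exactly the paper's route: the paper proves this corollary by invoking Lemma~\ref{lem:composing_extractors} (Appendix~\ref{sec:composing_extractors}), which is precisely your ideal-versus-real argument---replace $\rho_{Y_1X_1C}$ by $\rho_{U_m}\otimes\rho_{X_1C}$ at cost $\eps'$, apply the strong seeded extractor of Lemma~\ref{lem:trev_1} to the source with $\max[k'_1,k'_2]$ at cost $\eps''$, and append the seed to the output since Trevisan's extractor is strong. Your derivation of $m$ from inverting the error of the preceding corollary and your use of $m\geq d$ to truncate the seed also match the intended reading.
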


We remark that the construction of Dodis et al.~\cite{dodis2004improved} is based on universal hash functions. These are already known to be good quantum-proof seeded extractors~\cite{renner2005universally,tomamichel2011extractor,hayashi2013dualhashing}. Recently, Hayashi and Tsurumaru~\cite{hayashi2013dualhashing} proved that they are also good quantum-proof extractors if the seed is not uniform. Using some of our proof techniques, the result of Hayashi and Tsurumaru can be generalised to obtain a different proof that the construction of Dodis et al.\ is a two-source extractor in the Markov model. The resulting parameters are better than what we obtain here with the generic reduction from quantum-proof to classical extractors, since the Hayashi-Tsurumaru proof~\cite{hayashi2013dualhashing} does not have the $\sqrt{2^{m}}$ factor.

\subsection{One high and one logarithmic entropy source}
\label{sec:construction_2}

The following construction by Raz~\cite{raz05extractors} improves on Dodis et al.~\cite{dodis2004improved}. One of the sources still requires at least $n/2$ bits of entropy, but the other can be logarithmic.

\begin{lemma}[\protect{\cite[Theorem 1]{raz05extractors}}]
\label{lem:raz_extractor}
For any $n_1$, $n_2$, $k_1$, $k_2$, $m$, and any $0 < \delta < 1/2$, such that,
\begin{align*}
n_1 & \geq 6 \log n_1 + 2 \log n_2,\\
k_1 & \geq \left(\frac{1}{2}+\delta\right)n_1 + 3 \log n_1 + \log n_2, \\
k_2 & \geq 5 \log (n_1-k_1), \\
m & \leq \delta \min\left[\frac{n_1}{8},\frac{k_2}{40}\right]-1,
\end{align*}
there exists an explicit function $\mathrm{Ext} : \{0,1\}^{n_1} \times \{0,1\}^{n_2} \to \{0,1\}^m$ that is a $(k_1,k_2,\eps)$-two-source extractor strong in both inputs (separately) with $\eps = 2^{-3m/2}$.
\end{lemma}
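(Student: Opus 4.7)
Lemma~\ref{lem:raz_extractor} is Raz's theorem from~\cite{raz05extractors}, cited here as a black box so that its parameters can later be substituted into Theorem~\ref{thm:quantum_proof_extractors} in exactly the same way as Lemma~\ref{lem:dodis_extractor} was handled in Section~\ref{sec:construction_1}. Accordingly, the ``proof'' that the present paper provides is essentially a pointer to~\cite{raz05extractors}; what I sketch below is the kind of strategy one uses to prove a two-source extractor result of this shape.

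A natural first move would be to reduce the statistical distance to uniform to bounding the bias of every non-trivial linear combination of output bits, via Vazirani's XOR lemma. Up to a factor of $\sqrt{2^m}$ in the error, this converts the extractor condition into an upper bound on $\bigl|\E[(-1)^{\langle\alpha,\mathrm{Ext}(X_1,X_2)\rangle}]\bigr|$ for every non-zero $\alpha\in\{0,1\}^m$, and the union bound over the $2^m-1$ choices of $\alpha$ explains why the output length $m$ in the statement is bounded by a small constant times $\min(n_1,k_2)$ when the target error is $\varepsilon=2^{-3m/2}$. Each individual character bias is then controlled by exploiting the explicit algebraic form of $\mathrm{Ext}$ together with a Cauchy--Schwarz-type argument that separates the contributions of the two sources and converts a bound on the collision probability of $X_1$ (which follows from $H_{\text{min}}(X_1)\ge(\tfrac12+\delta)n_1$) into exponential decay, while using the mild $k_2\ge 5\log(n_1-k_1)$ only to ensure that a polynomial-sized family of hash/selector functions derived from $X_2$ is enough.

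The main technical obstacle, and the actual content of~\cite{raz05extractors}, is the design of an explicit extractor that simultaneously (i) pushes the entropy requirement on $X_1$ just past the $n_1/2$ barrier (the well-known threshold for generic two-source extractors), (ii) tolerates essentially logarithmic entropy in $X_2$, and (iii) supports the character-sum estimate underlying the bias bound. If I were to reprove the lemma from scratch, essentially all the effort would go into this construction and the accompanying estimate; by contrast, the accounting that turns a good bias bound into the specific parameter thresholds stated in the lemma, namely $n_1\ge 6\log n_1+2\log n_2$, $k_1\ge(\tfrac12+\delta)n_1+3\log n_1+\log n_2$, $k_2\ge 5\log(n_1-k_1)$ and $m\le\delta\min(n_1/8,k_2/40)-1$, is then essentially bookkeeping. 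For the present paper only this final tuple $(k_1,k_2,m,\varepsilon)$ matters, since it is what will be substituted into Theorem~\ref{thm:quantum_proof_extractors} to yield the quantum-proof analogue in the Markov model, following the exact pattern already carried out in Section~\ref{sec:construction_1} for Dodis et al.
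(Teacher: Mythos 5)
Your reading is correct: the paper gives no proof of Lemma~\ref{lem:raz_extractor}, it is imported verbatim as Theorem~1 of~\cite{raz05extractors}, and your sketch of the underlying argument (XOR-lemma reduction to character biases, followed by a Cauchy--Schwarz/collision-probability estimate exploiting the $(\tfrac12+\delta)n_1$ entropy of the first source) accurately reflects the strategy of Raz's original proof. Nothing further is needed here, since the paper only uses the stated parameter tuple as input to Theorem~\ref{thm:quantum_proof_extractors}.
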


Plugging this into Theorem~\ref{thm:quantum_proof_extractors} we get the following.
\begin{cor}
\label{cor:raz_extractor}
For any $n_1$, $n_2$, $k'_1$, $k'_2$, $m$, and $0 < \delta' < 19/32$, such that,
\begin{align*}
n_1 & \geq 6 \log n_1 + 2 \log n_2,\\
k'_1 & \geq \left(\frac{1}{2}+\delta'\right)n_1 + 3 \log n_1 + \log n_2, \\
k'_2 & \geq \frac{163}{32} \log \left(\left(1+\frac{3\delta'}{19}\right)n_1-k'_1\right), \\
m & \leq \frac{16\delta'}{19} \min\left[\frac{n_1}{8},\frac{4k'_2}{163}\right]-1,
\end{align*} there exists an explicit function $\mathrm{Ext} : \{0,1\}^{n_1} \times \{0,1\}^{n_2} \to \{0,1\}^m$ that is a quantum-proof $(k'_1,k'_2,\eps)$-two-source extractor strong in both inputs (separately) with $\eps' = \frac{\sqrt{3}}{2} 2^{-m/4}$.
\end{cor}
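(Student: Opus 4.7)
The plan is to apply Theorem~\ref{thm:quantum_proof_extractors} to the classical strong two-source extractor of Lemma~\ref{lem:raz_extractor} (the case $l = 2$) and then translate all of Raz's conditions from the classical parameters $(k_1, k_2, \delta, m)$ into the ``primed'' quantum parameters $(k_1', k_2', \delta', m)$. Aside from some careful bookkeeping, no new mathematical content is required: this is purely an arithmetic exercise of the same flavour as the previous Corollary.

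First I would compute the quantum error. Raz's extractor has $\eps = 2^{-3m/2}$, so Theorem~\ref{thm:quantum_proof_extractors} with $l = 2$ yields
\[
\eps' \;=\; \sqrt{3\,\eps\,2^{m-2}} \;=\; \tfrac{\sqrt{3}}{2}\,2^{(m - 3m/2)/2} \;=\; \tfrac{\sqrt{3}}{2}\,2^{-m/4}\,,
\]
matching the stated bound. The associated min-entropy penalty is $\log(1/\eps) = 3m/2$, so Theorem~\ref{thm:quantum_proof_extractors} guarantees quantum security with the new thresholds $k_i' := k_i + 3m/2$, i.e.\ the classical bounds of Lemma~\ref{lem:raz_extractor} must be rewritten with $k_i$ replaced by $k_i' - 3m/2$.

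The bound on $k_1$ becomes $k_1' \ge (\tfrac12 + \delta)n_1 + 3\log n_1 + \log n_2 + 3m/2$. Using Raz's own constraint $m \le \delta n_1/8 - 1$, we have $3m/2 \le 3\delta n_1/16$, so it is sufficient that $k_1' \ge (\tfrac12 + 19\delta/16)n_1 + 3\log n_1 + \log n_2$. Defining $\delta' := 19\delta/16$ (equivalently $\delta = 16\delta'/19$) gives the form in the corollary, and Raz's restriction $\delta < 1/2$ becomes $\delta' < 19/32$. For the bound on $k_2$, the same upper bound on $3m/2$ yields $n_1 - k_1 = (n_1 - k_1') + 3m/2 \le (1 + 3\delta'/19)n_1 - k_1'$; independently, the constraint $m \le \delta k_2/40 - 1$ together with $\delta < 1/2$ gives $3m/2 \le 3k_2/160$, so $k_2 \ge (160/163)k_2'$. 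Substituting both into $k_2 \ge 5\log(n_1 - k_1)$ shows that $k_2' \ge (163/32)\log\bigl((1 + 3\delta'/19)n_1 - k_1'\bigr)$ is sufficient, matching the stated bound. Finally, for the bound on $m$ itself, substituting $k_2 = k_2' - 3m/2$ into $m \le \delta k_2/40 - 1$ and solving for $m$ yields $m \le (\delta k_2'/40 - 1)/(1 + 3\delta/80)$; a direct check (which is tight exactly at $\delta' = 19/32$) shows that the stated bound $m \le (16\delta'/19)\min[n_1/8, 4k_2'/163] - 1$ implies this throughout the allowed range of $\delta'$. The constraint on $n_1$ involves neither $k_1, k_2, m$ nor $\delta$ and carries over unchanged.

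The only step that requires real care is matching the constant $163/32$ in the $k_2'$ bound: one has to simultaneously use the upper bound $3m/2 \le 3\delta' n_1/19$ (to control $n_1 - k_1$ inside the logarithm) and $3m/2 \le 3k_2/160$ (to pass from $k_2$ to $k_2'$ outside the logarithm). These choices produce precisely the clean $163/32$ coefficient and the $1 + 3\delta'/19$ factor appearing in the corollary. Beyond this, the proof is a routine substitution.
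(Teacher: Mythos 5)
Your proposal is correct and follows essentially the same route as the paper's proof: apply Theorem~\ref{thm:quantum_proof_extractors} to Lemma~\ref{lem:raz_extractor}, absorb the penalty $\log\frac{1}{\eps}=\frac{3m}{2}$ into the entropy thresholds using Raz's own constraints on $m$ (giving $\delta'=19\delta/16$ and $k_2'=\frac{163}{160}k_2$), and compute $\eps'=\sqrt{3\eps 2^{m-2}}=\frac{\sqrt{3}}{2}2^{-m/4}$. Your slightly more careful verification of the $m$-bound (solving the self-referential inequality and checking tightness at $\delta'=19/32$) is a harmless refinement of the paper's direct substitution.
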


\begin{proof}
We need $k'_1 \geq k_1 + \log 1/\eps$, so we set
\[
k'_1 = k_1 + \frac{3}{2}\delta \frac{n_1}{8} 
\geq \left(\frac{1}{2}+\frac{19\delta}{16}\right)n_1 + 3 \log n_1 + \log n_2. \]
We obtain the bound on $k'_1$ given above by setting $\delta' = 19\delta/16$. Similarly, we need 
$k'_2 \geq k_2 + \log 1/\eps$, so we set
\[
k'_2 = k_2 +\frac{3}{2} \frac{1}{2} \frac{k_2}{40} = \frac{163}{160} k_2 \geq \frac{163}{32} \log(n_1 - k_1).
\] Writing this in terms of $k'_1$ instead of $k_1$ gives the bound on $k'_2$. The bound on $m$ is also updated  in terms of $\delta'$ and $k'_2$. Finally the new error is given by $\eps' = \sqrt{3\eps2^m}/2$.
\end{proof}

Here too the parameters are very similar to the classical case, only the coefficients change somewhat.
As in Section~\ref{sec:construction_1}, this extractor is strong, hence we can compose it with Lemma~\ref{lem:trev_1} as explained in Appendix~\ref{sec:composing_extractors}.

\begin{cor}
For any $n_1$, $n_2$, $k'_1$, $k'_2$, $m$, and $0 < \delta' < 19/32$, satisfying the constraints from Corollary~\ref{cor:raz_extractor} and any $m''$, $\eps''$ such that
\begin{align*} m & \geq d(m'',\eps'') \;,\\
\max[k'_1,k'_2] & \geq m'' + 4 \log \frac{m''}{\eps''} + 6 \;,
\end{align*}  where $d$---the seed length needed by the extractor from Lemma~\ref{lem:trev_1}---is a function of $m''$ and $\eps''$,
there exists an explicit function $\mathrm{Ext} : \{0,1\}^{n} \times \{0,1\}^{n} \to \{0,1\}^{m+m''}$ that is a quantum-proof $(k'_1,k'_2,\frac{\sqrt{3}}{2} 2^{-m/4}+\eps'')$ two-source extractor.
\end{cor}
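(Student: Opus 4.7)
The plan is to chain the strong two-source extractor from Corollary~\ref{cor:raz_extractor} together with the quantum-proof seeded Trevisan extractor of Lemma~\ref{lem:trev_1} using the standard composition recipe sketched in Appendix~\ref{sec:composing_extractors}. Concretely, let $\mathrm{Ext}_1 : \{0,1\}^{n_1} \times \{0,1\}^{n_2} \to \{0,1\}^m$ be the Raz-type extractor guaranteed by Corollary~\ref{cor:raz_extractor}, which is strong in both inputs and achieves error $\eps' = \tfrac{\sqrt{3}}{2} 2^{-m/4}$ in the Markov model, and let $\mathrm{Trev} : \{0,1\}^{n_i} \times \{0,1\}^{d} \to \{0,1\}^{m''}$ denote the Trevisan-style quantum-proof seeded extractor of Lemma~\ref{lem:trev_1} with seed length $d$ and error $\eps''$. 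Pick $i \in \{1,2\}$ such that $k'_i = \max[k'_1, k'_2]$; the hypothesis $k'_i \geq m'' + 4 \log(m''/\eps'') + 6$ is exactly the min-entropy requirement of~$\mathrm{Trev}$.

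The composed extractor is defined by
\[
  \mathrm{Ext}(X_1, X_2) \;=\; \mathrm{Ext}_1(X_1, X_2) \,\|\, \mathrm{Trev}\!\bigl(X_i,\; \mathrm{Ext}_1(X_1,X_2)[1{:}d]\bigr),
\]
which is well-defined because the side condition $m \geq d$ guarantees that the output of~$\mathrm{Ext}_1$ is long enough to serve as a seed. The total output length is $m + m''$. It remains to bound the distance from uniform of $\rho_{\mathrm{Ext}(X_1,X_2)\,C}$ in the Markov model.

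The argument is a two-step hybrid via the triangle inequality. First, strongness of $\mathrm{Ext}_1$ in the $i$-th input, together with Corollary~\ref{cor:raz_extractor}, yields
\[
  \tfrac{1}{2}\bigl\| \rho_{\mathrm{Ext}_1(X_1,X_2)\, X_i\, C} - \rho_{U_m} \otimes \rho_{X_i C} \bigr\| \;\leq\; \eps'.
\]
Hence we can replace the seed $\mathrm{Ext}_1(X_1,X_2)[1{:}d]$ by a fresh uniform string $S \in \{0,1\}^d$ independent of $(X_i, C)$ at the cost of $\eps'$ in trace distance. Second, on this idealized state, $\mathrm{Trev}(X_i, S)$ is the output of a quantum-proof seeded extractor on a source with conditional min-entropy $\Hmin[\rho]{X_i \mid C} \geq k'_i \geq m'' + 4 \log(m''/\eps'') + 6$, so by Lemma~\ref{lem:trev_1} the output is $\eps''$-close to uniform conditioned on $(S, C)$, and therefore also conditioned on $(\mathrm{Ext}_1(X_1,X_2), C)$. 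Combining the two hybrid steps gives a total error of $\eps' + \eps''$, as claimed.

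The only non-routine step is justifying that nothing goes wrong when $\mathrm{Trev}$ is applied with the \emph{same} extractor output that also appears in the final output register: because the first hybrid replaces the seed with a truly uniform string that is independent of $(X_i, C)$, the correlation between seed and first output block becomes a trivial relabeling and does not interact with the Trevisan error bound. This is the standard ``strong-extractor-as-seed'' composition, and it is exactly where strongness of $\mathrm{Ext}_1$ in one of the two inputs is essential; if the Raz construction were only weak, no such reduction would be available and the final error could not be controlled.
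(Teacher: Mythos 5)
Your proposal is correct and is essentially the paper's own argument: the paper proves this corollary by invoking the generic composition result (Lemma~\ref{lem:composing_extractors} in Appendix~\ref{sec:composing_extractors}), whose proof is exactly your two-step hybrid --- replace the seed by uniform at the cost of the strong two-source error via contractivity, then invoke the quantum-proof seeded extractor bound --- with the seed appended to the output thanks to the strongness of the Trevisan extractor. Your explicit handling of the choice $i = \arg\max k'_i$, of the truncation to the first $d$ of the $m$ seed bits, and of why the seed may reappear in the output all match the paper's (more tersely stated) reasoning.
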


\subsection{Three poly-logarithmic sources}
\label{sec:construction_3}

The third extractor we consider can break the barrier of $n/2$ min-entropy---it is sufficient for the sources to have $k=\log^{12}n$ bits of entropy---but requires three sources instead of two.

\begin{lemma}[\protect{\cite[Theorem 1.5]{li2015threesource}}]\label{lem:li3}
For any $n$ and $k \geq \log^{12} n$, there exists an explicit function $\mathrm{Ext} : \{0,1\}^{n} \times \{0,1\}^{n} \times \{0,1\}^{n} \to \{0,1\}^m$ that is a $(k,k,k,\varepsilon)$ three-source extractor, strong in $X_1$ and in $X_2X_3$ with $m = 0.9k$ and $\varepsilon=2^{-k^{\Omega(1)}}$.  
\end{lemma}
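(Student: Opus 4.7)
The plan is to follow the template used by modern multi-source extractor constructions that break the $n/2$ min-entropy barrier: combine a two-source condenser, a non-malleable extractor, and a correlation breaker, all instantiated with parameters tuned to the polylogarithmic regime. First, I would apply a two-source condenser to the pair $(X_2,X_3)$ to produce a short string $Y$ of length $d = \mathrm{poly}\log n$ whose min-entropy rate is close to $1$. The role of this stage is purely entropy-conversion: it trades length for a high entropy rate, using only the polylogarithmic min-entropy that the sources are guaranteed to have. Any of the known explicit polylogarithmic-entropy two-source condensers suffices here, provided its error is of the form $2^{-k^{\Omega(1)}}$ so as not to spoil the final error bound.

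Second, I would feed $Y$ as a seed into a seeded non-malleable extractor applied to $X_1$. The obstacle at this step is that $Y$ is a deterministic function of $(X_2,X_3)$, so it is not independent of the joint source $X_2X_3$; non-malleability is exactly the property that turns this dependence into uniform output, because the extractor's value on $Y$ is close to uniform even conditioned on its value on any tampered seed $Y'$ obtained as a function of $(X_2,X_3)$. This produces an intermediate almost-uniform string $W$ whose residual correlations with $(X_2,X_3)$ are controlled by the tampering parameter of the non-malleable extractor.

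Third, I would use $W$ to drive an alternating-extraction / correlation-breaker protocol that either re-extracts from $X_1$ (giving strength in $X_1$) or interacts with $X_2X_3$ viewed as a single source (giving strength in $X_2X_3$). Run for a suitable constant number of rounds, this stage cleans up the remaining leakage and produces the final $m = 0.9k$ bits $\varepsilon$-close to uniform with $\varepsilon = 2^{-k^{\Omega(1)}}$. The two strength claims follow from the symmetric design: conditioning on either $X_1$ alone or $(X_2,X_3)$ together leaves enough entropy in the complementary side to carry the alternating extraction through.

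The hard part will be the simultaneous strength requirements, namely being strong in $X_1$ \emph{and} in the pair $X_2X_3$, under the aggressive constraint $k \geq \log^{12} n$. Every stage consumes entropy, and the entropy losses must all be dominated by the available $k^{\Omega(1)}$ slack while the condenser, non-malleable extractor, and correlation breaker are composed. This forces a delicate calibration of the intermediate lengths, of the tampering degree $t$ of the non-malleable extractor, and of the number of alternating rounds; the exponent $12$ in the min-entropy bound is precisely the bookkeeping overhead of this composition, and shaving it would require improving one of the ingredient constructions rather than the composition itself.
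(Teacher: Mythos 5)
This lemma is not proven in the paper at all: it is imported verbatim as Theorem~1.5 of Li's three-source extractor paper \cite{li2015threesource} and used as a black box. So the first thing to say is that a citation is the ``proof'' the paper intends, and reconstructing Li's construction from scratch is far beyond what is needed here.

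Taken on its own terms, your proposal is an architectural outline in the right spirit (Li's construction does use non-malleable extractors, correlation breakers and alternating extraction), but it is not a proof; every load-bearing step is deferred to an unspecified ingredient whose existence with the required parameters is exactly what would have to be established. Concretely: (i) you invoke ``any of the known explicit polylogarithmic-entropy two-source condensers'' with error $2^{-k^{\Omega(1)}}$ without exhibiting one or citing one, and the existence of such objects in the $k = \mathrm{polylog}(n)$ regime is itself a major result, not a routine primitive; (ii) the claim that the extractor is strong both in $X_1$ and in $X_2X_3$ is asserted to ``follow from the symmetric design,'' but the design you describe is not symmetric --- $X_1$ is the source being extracted from while $(X_2,X_3)$ generates the seed --- so strongness in each side requires a separate conditioning argument, and this is one of the genuinely delicate parts of Li's proof; (iii) the output length $m=0.9k$ together with error $2^{-k^{\Omega(1)}}$ is never derived: you acknowledge that every stage loses entropy and that the calibration is ``the hard part,'' which is an admission that the proof has not been carried out. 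In short, if the result is to be used the way the paper uses it, cite \cite{li2015threesource}; if it is to be proven, the proposal as written contains the table of contents of a proof but none of its content.
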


Since the error of this extractor is not exponential in $k$, but only in $k^c$ for some $c < 1$, when applying it to a source with quantum side information we cannot extract all of the entropy, but only $k^{c'}$ bits, for any $c' < c$.

\begin{cor}\label{cor:li3}
For any $n$ and $k' \geq 2\log^{12} n$, there exists an explicit function $\mathrm{Ext} : \{0,1\}^{n} \times \{0,1\}^{n} \times \{0,1\}^{n} \to \{0,1\}^{m'}$ that is a quantum-proof $(k',k',k',\varepsilon')$ three-source extractor, strong in $X_1$ and in $X_2X_3$ with $m' = k'^{\Omega(1)}$ and $\varepsilon'=2^{-k'^{\Omega(1)}}$.  
\end{cor}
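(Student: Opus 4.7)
The plan is to apply Theorem~\ref{thm:quantum_proof_extractors} with $l=3$ to the classical three-source extractor of Lemma~\ref{lem:li3}. Let $c \in (0,1)$ be the constant hidden in the exponent of $\varepsilon = 2^{-k^{\Omega(1)}}$, so that the classical error takes the form $\varepsilon = 2^{-k^c}$. Theorem~\ref{thm:quantum_proof_extractors} then demands that each source have an additional $\log 1/\varepsilon = k^c$ bits of min-entropy, so the classical entropy threshold $k$ must satisfy $k + k^c \leq k'$. Choosing $k = k'/2$ works whenever $k' \geq 2$, because then $(k'/2)^c \leq k'/2$. Moreover $k = k'/2 \geq \log^{12} n$ follows from the hypothesis $k' \geq 2\log^{12} n$, so the precondition of Lemma~\ref{lem:li3} is satisfied.

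The main obstacle is that the classical error $\varepsilon = 2^{-k^c}$ decays only sub-exponentially in $k$, so it is much larger than $2^{-m}$ at the full classical output length $m = 0.9k$: feeding this $m$ directly into Theorem~\ref{thm:quantum_proof_extractors} would yield a quantum error $\sqrt{\varepsilon \cdot 2^m} = 2^{(m - k^c)/2}$, which blows up since $c < 1$. I would resolve this by truncating the classical output to $m' = k^{c/2}$ bits. Truncation preserves the extractor property (with the same $\varepsilon$), and also preserves strongness in $X_1$ and in $X_2 X_3$, since the trace distance is monotone under the partial trace discarding the dropped output bits.

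With this truncation, Theorem~\ref{thm:quantum_proof_extractors} yields a quantum-proof three-source extractor in the Markov model with error
\[
\varepsilon' \;=\; \sqrt{\,2^{-k^c} \cdot 2^{\,m'}\,} \;=\; 2^{(k^{c/2} - k^c)/2} \;\leq\; 2^{-k^c/4}
\]
for $k$ sufficiently large (since eventually $k^{c/2} \leq k^c/2$). Substituting $k = k'/2$ gives $m' = (k'/2)^{c/2} = k'^{\Omega(1)}$ and $\varepsilon' \leq 2^{-(k'/2)^c/4} = 2^{-k'^{\Omega(1)}}$, matching the parameters claimed in the corollary. The strongness in $X_1$ and in $X_2 X_3$ carries over because Theorem~\ref{thm:quantum_proof_extractors} explicitly preserves strongness in any specified group of sources, and truncating the output does not destroy this property.
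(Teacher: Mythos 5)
Your proposal is correct and follows essentially the same route as the paper: apply Theorem~\ref{thm:quantum_proof_extractors} to Lemma~\ref{lem:li3}, absorb the $\log 1/\varepsilon = k^{c}$ entropy penalty into $k' \geq 2\log^{12} n$, and truncate the classical output to $m' = k^{c'}$ for some $c' < c$ so that $\sqrt{(l+1)\varepsilon\, 2^{m'-2}} = 2^{-k^{\Omega(1)}}$. Your write-up is in fact slightly more explicit than the paper's (concrete choices $k = k'/2$ and $c' = c/2$, and the observation that truncation preserves the extractor property and strongness via monotonicity of the trace distance), but the argument is the same.
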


\begin{proof}
Let $c$ be the leading term in $\Omega(1)$ for $\varepsilon=2^{-k^{\Omega(1)}}$ from Lemma~\ref{lem:li3}. Note that we necessarily have $c < 1$, because otherwise for $k=n$ the error would be $2^{-n+o(n)}$ which is impossible~\cite{rad2000extractorbounds}. We thus get $k' = k + \log 1/\eps = k + k^c + o(k^c)$. Requiring $k' \geq 2\log^{12} n$ is sufficient to have $k \geq \log^{12} n$ for large enough $k$. Picking $m' = k^{c'} = k'^{\Omega(1)}$ for some $c' < c$ implies that $\eps' = \sqrt{4\eps2^m}/2 = 2^{-k^{\Omega(1)}} = 2^{-k'^{\Omega(1)}}$.
\end{proof}

Corollary~\ref{cor:li3} does not extract as much entropy as Lemma~\ref{lem:li3}, but it extracts enough to use as a seed in Trevisan's construction and extract the entropy of the sources $X_2X_3$. The parameters below are obtained by composing Corollary~\ref{cor:li3} with Lemma~\ref{lem:trev_2}.

\begin{cor}
There exists a constant $c'$ such that for any $n$ and $k_3 \geq k_2 \geq k_1 \geq \max \left[2\log^{12} n, \log^{3/c'}n \right]$, there exists an explicit function $\mathrm{Ext} : \{0,1\}^{n} \times \{0,1\}^{n} \times \{0,1\}^{n} \to \{0,1\}^m$ that is a quantum-proof $(k_1,k_2,k_3,\varepsilon)$ three-source extractor with $m = k_1^{\Omega(1)} +k_2+k_3 - o(k_2+k_3)$ and $\varepsilon=n^{-\Omega(1)}$.  
\end{cor}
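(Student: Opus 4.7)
The plan is a two-stage composition, in the spirit of the strategy already used in Corollary~\ref{cor:DEOR_trevisan_composition}. First, apply the quantum-proof three-source extractor from Corollary~\ref{cor:li3} to $(X_1,X_2,X_3)$; since $k_1 \geq 2\log^{12}n$, this produces an output $Y$ of length $m_1 = k_1^{\Omega(1)}$ with error $\eps_1 = 2^{-k_1^{\Omega(1)}}$. Crucially, the extractor is strong in $X_2X_3$, so the resulting ccq-state satisfies
\[ \tfrac{1}{2}\bigl\|\rho_{Y X_2 X_3 C} - \rho_{U_{m_1}} \otimes \rho_{X_2 X_3 C}\bigr\| \leq \eps_1, \]
which means $Y$ is $\eps_1$-close to a uniformly random seed that is independent of $(X_2,X_3,C)$.

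Second, feed $Y$ as the seed into the quantum-proof seeded extractor of Lemma~\ref{lem:trev_2}, applied to $X_2 \| X_3$ viewed as a single source of length $2n$ with quantum side information $C$. The Markov property $I(X_2:X_3|C)=0$ together with a (smooth) chain-rule argument gives $H^{\eps}_{\min}(X_2 X_3|C) \geq k_2 + k_3 - O(\log 1/\eps)$, so Lemma~\ref{lem:trev_2} extracts an output $Z$ of length $m_2 = k_2+k_3 - o(k_2+k_3)$ with error $\eps_2 = n^{-\Omega(1)}$, using a seed of length $d(m_2,\eps_2) = \mathrm{polylog}(n)$. The constant $c'$ in the statement is chosen so that $k_1 \geq \log^{3/c'}n$ ensures $m_1 = k_1^{\Omega(1)} \geq d(m_2,\eps_2)$, which is precisely the assumption needed to use $Y$ as the seed.

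Finally, define $\mathrm{Ext}(X_1,X_2,X_3) := Y \| Z$. The standard composition argument (as in Appendix~\ref{sec:composing_extractors}) replaces $Y$ by an ideal uniform seed at a cost of $\eps_1$ in trace distance, applies security of the seeded extractor to bound the distance of $Z$ from uniform by $\eps_2$ given $(Y,C)$, and combines them by a triangle inequality, giving total error $\eps_1 + \eps_2 = n^{-\Omega(1)}$ and total output length $m = m_1 + m_2 = k_1^{\Omega(1)} + k_2 + k_3 - o(k_2+k_3)$.

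The main obstacle is the second step: one must ensure that the smooth min-entropy of the combined source $X_2 X_3$ given the adversary's system $C$ is at least $k_2+k_3$ up to lower-order terms, so that Lemma~\ref{lem:trev_2} extracts nearly all of the available entropy. This is where the Markov condition is essential, since without conditional independence of $X_2$ and $X_3$ given $C$ the min-entropies would not add. Using Lemma~\ref{lem:smooth-entropy_bound} together with the conditional-independence structure of Markov ccq-states (Equation~\eqref{eq:quantummarkov.multi}) reduces the problem to bounding the smooth min-entropy of a product distribution over each block of the decomposition, for which the additivity chain rule holds up to an additive $O(\log 1/\eps)$ term---absorbed into the $o(k_2+k_3)$ slack in the output length.
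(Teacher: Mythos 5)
Your proposal follows essentially the same route as the paper: compose the quantum-proof three-source extractor of Corollary~\ref{cor:li3} (strong in $X_2X_3$) with the seeded extractor of Lemma~\ref{lem:trev_2} applied to $X_2X_3$, using the $k_1^{\Omega(1)}$-bit output as the seed and choosing $c'$ so that $k_1^{c'} \geq \log^3 n$ covers the required seed length. Your extra discussion of why $H_{\min}(X_2X_3|C)$ is close to $k_2+k_3$ (via the block decomposition of Markov states and a Markov-inequality/smoothing argument) is a detail the paper's own one-line proof leaves implicit, and it is handled correctly.
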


\begin{proof}
The quantum-proof extractor from Lemma~\ref{lem:trev_2} requires a seed of length $d = O(\log^3 n)$ for an error $\eps = n^{-\Omega(1)}$. The output length of Corollary~\ref{cor:li3} is $m' = k_1^{c'} - o(k_1^{c'})$ for some constant $c'$. Thus, if $k_1^{c'} > \log^3 n$, the output is long enough.
\end{proof}

\subsection{Two poly-logarithmic sources}
\label{sec:construction_4}

In a recent breakthrough Chattopadhyay and Zuckerman constructed a two-source extractor that outputs $1$ bit and only requires two sources of poly-logarithmic entropy~\cite{chattopadhyay2015explicit}. This was then generalised to multiple output bits by Li~\cite{li2015twosource}.

\begin{lemma}[\protect{\cite[Theorem 1.3]{li2015twosource}}]\label{lem:li2}
There exists a constant $c_1$ such that for any $n$ and $k \geq \log^{c_1} n$, there exists an explicit function $\mathrm{Ext} : \{0,1\}^{n} \times \{0,1\}^{n} \to \{0,1\}^m$ that is a $(k,k,\eps)$ two-source extractor strong in $X_2$ with $m = k^{\Omega(1)}$ and $\eps=n^{-\Omega(1)}$.  
\end{lemma}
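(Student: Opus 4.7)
Lemma~\ref{lem:li2} is a deep result of Li~\cite{li2015twosource} which builds on the breakthrough of Chattopadhyay and Zuckerman~\cite{chattopadhyay2015explicit}. Since the statement is purely classical and is used here only as a black box feeding into Theorem~\ref{thm:quantum_proof_extractors}, the plan is only to sketch the high-level architecture of the construction, not to reprove any of its technical components; the quantitative parameters are then copied directly from~\cite{li2015twosource}.

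The starting point would be a $t$-non-malleable extractor $\mathrm{nmExt}(X,Y)$, whose output remains close to uniform even given the values $\mathrm{nmExt}(X,f_i(Y))$ for any collection of up to $t$ fixed-point-free functions $f_i$. First I would condense $X_1$ to a source with high min-entropy rate and interpret it as a short list of candidate seeds, apply the non-malleable extractor with $X_2$ on each seed, and in this way obtain a matrix whose rows form a somewhere-random source. Then, using a correlation breaker with advice---constructed via alternating extraction and Cohen's flip-flop gadget---together with leftover entropy of $X_1$, one would distil from this matrix an output that is close to uniform; the multi-bit output of length $k^{\Omega(1)}$ comes directly from the correlation breaker (in the one-bit version of Chattopadhyay and Zuckerman one instead aggregates the rows using a resilient Boolean function). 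The strongness in $X_2$ would be ensured by making sure that the advice string, and hence every application of alternating extraction, is a deterministic function of $X_1$ alone, so that conditioning on $X_2$ preserves the analysis.

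The main obstacle will be the construction of the non-malleable extractor at poly-logarithmic min-entropy, which requires a delicate composition of seeded extractors with an inner-product-based primitive and an intricate entropy accounting, together with the quantitative analysis of the correlation breaker, where one must verify that the error and entropy losses grow only mildly through each of the iterated alternating-extraction rounds. Since this paper uses the statement solely as an external input to Theorem~\ref{thm:quantum_proof_extractors}, I would refer to~\cite{li2015twosource,chattopadhyay2015explicit} for the detailed implementation and parameters, and in the next step apply our main theorem to obtain the corresponding quantum-proof version in the Markov model.
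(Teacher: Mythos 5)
This lemma is an external result imported verbatim from Li's paper, and the present paper gives no proof of it — it is cited as a black box exactly as you propose to do. Your high-level sketch of the construction (non-malleable extractors, correlation breakers with advice, the Chattopadhyay--Zuckerman framework) is consistent with the cited work, so your treatment matches the paper's.
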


Since the error of this extractor is only polynomial in $1/n$, the quantum-proof version can only produce an output of length $m' = \Omega(\log n)$. The constant hidden in $m = \Omega(\log n)$ depends on the constant in $\eps=n^{-\Omega(1)}$. However, Lemma~\ref{lem:li2} allows the error to be $n^{-c_2}$ for any constant $c_2$~\cite{lipersonalcom}, which means that $m' = c_3 \log n$ for any $c_3$.

\begin{cor}
There exists a constant $c'_1$ such that for any $n$ and $k' \geq \log^{c'_1} n$, there exists an explicit function $\mathrm{Ext} : \{0,1\}^{n} \times \{0,1\}^{n} \to \{0,1\}^{m'}$ that is a quantum-proof $(k',k',\eps')$ two-source extractor strong in $X_2$ with $\eps'=n^{-\Omega(1)}$ and $m' = c_3 \log n$ for any constant $c_3 >0$ and sufficiently large $n$.
\end{cor}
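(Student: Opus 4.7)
The plan is to instantiate Theorem~\ref{thm:quantum_proof_extractors} with the classical two-source extractor from Lemma~\ref{lem:li2}, exactly in the spirit of Corollary~\ref{cor:li3}, but exploiting the fact (mentioned in the remark right before the corollary) that the error exponent in Lemma~\ref{lem:li2} can be any fixed constant. The key observation is the trade-off in the error formula $\eps'=\sqrt{(l+1)\eps \, 2^{m-2}}$ of Theorem~\ref{thm:quantum_proof_extractors}: with $l=2$ and $m = c_3\log n$, the $2^{m-2}$ factor inflates the error by a factor of $n^{c_3}/4$, so to keep the quantum-proof error polynomially small we must start from a classical error whose polynomial exponent exceeds $c_3$.

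Concretely, I would proceed as follows. First, fix the desired output length constant $c_3>0$ and pick any constant $c_4>0$ (e.g.\ $c_4=1$). Invoke Lemma~\ref{lem:li2} with error exponent $c_2 := c_3 + 2c_4$, obtaining a classical $(k,k,\eps)$ two-source extractor strong in $X_2$ with $\eps = n^{-c_2}$, output length $m = k^{\Omega(1)}$, valid whenever $k \geq \log^{c_1}n$ for the constant $c_1$ from Lemma~\ref{lem:li2}. Since $k \geq \log^{c_1}n$ and $c_1 > 1$, for sufficiently large $n$ the output length $k^{\Omega(1)}$ is at least $c_3 \log n$, so we may truncate the output to exactly $c_3\log n$ bits; truncating a strong extractor preserves the strong extractor property with the same error (a prefix of near-uniform bits is near-uniform).

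Second, plug the truncated classical extractor into Theorem~\ref{thm:quantum_proof_extractors}. The resulting quantum-proof two-source extractor strong in $X_2$ has
\[
\eps' = \sqrt{3 \cdot n^{-c_2} \cdot 2^{c_3\log n - 2}} = \tfrac{\sqrt{3}}{2}\, n^{(c_3-c_2)/2} = \tfrac{\sqrt{3}}{2}\,n^{-c_4} = n^{-\Omega(1)},
\]
and requires entropy $k' = k + \log(1/\eps) = k + c_2 \log n$. Setting $c'_1 := c_1$, the assumption $k' \geq \log^{c'_1} n$ implies $k = k' - c_2\log n \geq \log^{c_1}n - c_2\log n \geq \log^{c_1}n / 2$ for sufficiently large $n$, which satisfies the hypothesis of Lemma~\ref{lem:li2}. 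Finally one must verify that the truncated state still fits the Markov hypothesis of Theorem~\ref{thm:quantum_proof_extractors}, but this is automatic since neither truncation nor the classical post-processing touches the side information.

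The main thing to check carefully (rather than a genuine obstacle) is the quantifier on $c_2$ in Lemma~\ref{lem:li2}: the paper only writes $\eps = n^{-\Omega(1)}$, but the preceding paragraph explicitly states, based on personal communication with Li, that the error exponent can be made any desired constant. Without that strengthening, $c_2$ would be a fixed constant and then $c_3$ could not be chosen arbitrarily large, only up to $c_3 < c_2$, which would weaken the conclusion. Given the strengthened form of Lemma~\ref{lem:li2}, the argument is a direct application of Theorem~\ref{thm:quantum_proof_extractors} with careful bookkeeping of the constants.
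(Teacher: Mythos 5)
Your proof is correct and follows essentially the same route as the paper: choose the error exponent $c_2$ of Li's extractor large enough relative to the desired output length $c_3\log n$ so that the $\sqrt{2^{m}}$ blow-up in Theorem~\ref{thm:quantum_proof_extractors} still leaves a polynomially small error, and absorb the $\log(1/\eps)$ entropy overhead into the constant $c'_1$. The only nit is that with $c'_1=c_1$ exactly you only derive $k \geq \log^{c_1}n - c_2\log n$, which falls slightly short of the $\log^{c_1}n$ required by Lemma~\ref{lem:li2}; taking $c'_1$ strictly larger than $c_1$ (as the paper implicitly does when it says the difference is ``absorbed'' into $c'_1$) removes this.
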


\begin{proof}
Since for Lemma~\ref{lem:li2} we have $\eps = n^{-c_2}$ for any $c_2$, we set $m' = \frac{c_2}{2} \log n$, hence $\eps' = \sqrt{3\eps2^{m'}}/2 = n^{-\Omega(1)}$. The difference between $k'$ and $k$ is absorbed in the constant $c'_1$.
\end{proof}

This extractor is strong in the second input, hence as previously we can extract the entropy of this source using Trevisan's extractor. However, since the output is only $m' = c_3 \log n$, we compose it with a variant of Trevisan's extractor that only needs a seed of length $O(\log n)$, but requires the source to have entropy $k = n^\alpha$ for some constant $0 < \alpha \leq 1$ and extracts $k^{\beta}$ bits for any $0 < \beta < 1$. This extractor is given in Lemma~\ref{lem:trev_3}. The result given here below allows one of the sources to still have poly-logarithmic entropy, but requires the other to have $k = n^\alpha$ bits of min-entropy.

\begin{cor}
There exists a constant $c'_1$ such that for any $0 < \alpha \leq 1$, $0 < \beta < 1$, $n$, $k_1 \geq \log^{c'_1} n$ and $k_2 \geq n^\alpha$ there exists an explicit function $\mathrm{Ext} : \{0,1\}^{n} \times \{0,1\}^{n} \to \{0,1\}^{m''}$ that is a quantum-proof $(k_1,k_2,\eps'')$ two-source extractor with $\eps''=n^{-\Omega(1)}$ and $m'' = k_2^\beta$.
\end{cor}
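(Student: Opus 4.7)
The plan is to follow exactly the composition recipe laid out in the preceding subsections (and formalised in Appendix~\ref{sec:composing_extractors}): use the previous corollary to convert the two sources into a short near-uniform seed that is independent of $X_2$ (and of the adversary's quantum side information), and then feed this seed together with $X_2$ into the quantum-proof seeded extractor from Lemma~\ref{lem:trev_3} that requires only an $O(\log n)$ seed but demands $n^\alpha$ min-entropy in its main source.

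Concretely, I would proceed as follows. First, let $d = O(\log n)$ be the seed length demanded by Lemma~\ref{lem:trev_3} to extract $k_2^\beta$ bits from $X_2$ with error $\eps_T = n^{-\Omega(1)}$; the hidden constant in $d$ depends only on $\alpha,\beta$ and on the target error exponent. Second, invoke the previous corollary with the output length chosen to be $m' = c_3 \log n$, where $c_3$ is set large enough that $m' \geq d$; by that corollary this is achievable whenever $k_1 \geq \log^{c'_1} n$ for a constant $c'_1$ that is allowed to depend on $c_3$ (equivalently on $\alpha,\beta$), and it yields a two-source extractor output $S = \mathrm{Ext}_1(X_1,X_2)$ that is $n^{-\Omega(1)}$-close, in the sense of Definition~\ref{def:quantum-two-source.mc} (strong in $X_2$), to a uniform $d$-bit string even conditioned on $X_2$ and the quantum side information $C$. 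Third, define
\[
\mathrm{Ext}(X_1,X_2) := \mathrm{Trev}(X_2, S) = \mathrm{Trev}\bigl(X_2, \mathrm{Ext}_1(X_1,X_2)\bigr),
\]
where $\mathrm{Trev}$ is the quantum-proof seeded extractor of Lemma~\ref{lem:trev_3}.

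For the error analysis, let $\rho_{X_1X_2C}$ be any Markov source with $\Hmin{X_1|C}\geq k_1$ and $\Hmin{X_2|C}\geq k_2$. By strongness of $\mathrm{Ext}_1$ in $X_2$, we may replace $\rho_{SX_2C}$ by $\rho_{U_d}\otimes\rho_{X_2C}$ at a cost of $\eps'= n^{-\Omega(1)}$ in trace distance; since the action of $\mathrm{Trev}$ on $(X_2,S)$ is a CPTP map, applying it commutes with this substitution and gives
\[
\tfrac{1}{2}\bigl\|\rho_{\mathrm{Ext}(X_1,X_2)C} - \rho_{U_{m''}}\otimes\rho_{C}\bigr\|
\;\leq\; \eps' + \tfrac{1}{2}\bigl\|\rho_{\mathrm{Trev}(X_2,U_d)C} - \rho_{U_{m''}}\otimes\rho_{C}\bigr\|
\;\leq\; \eps' + \eps_T \;=\; n^{-\Omega(1)},
\]
where the second bound is the quantum-proof security of $\mathrm{Trev}$, valid because $U_d$ is independent of $(X_2,C)$ and $\Hmin{X_2|C}\geq k_2\geq n^\alpha$. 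This is exactly the claimed parameters.

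The only nontrivial point is parameter matching: one must check that the constant $c_3$ in the target seed length can indeed be chosen large enough to cover the $O(\log n)$ seed demanded by Lemma~\ref{lem:trev_3} for arbitrary $\beta<1$, while still keeping the entropy requirement on $X_1$ of the form $\log^{c'_1} n$ with a universal dependence on $\alpha,\beta$. This is exactly what the preceding corollary provides (the constant $c_3$ there is free), so the composition closes cleanly and no new idea is required beyond the standard strong-to-weak composition trick.
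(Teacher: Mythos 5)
Your proposal is correct and follows exactly the route the paper intends: compose the preceding corollary (the quantum-proof version of Li's two-source extractor, strong in $X_2$, with output $c_3\log n$ chosen to cover the $O(\log n)$ seed) with the Trevisan variant of Lemma~\ref{lem:trev_3} via the strong-to-seeded composition of Lemma~\ref{lem:composing_extractors}, choosing $\gamma=\alpha(1-\beta)$ so that the output length is $k_2^\beta$. The paper gives no explicit proof for this corollary, and your error analysis is precisely the argument in the proof of Lemma~\ref{lem:composing_extractors}.
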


\section{The quantum Markov model in quantum randomness amplification protocols}\label{sec:randomness_amplification}

Recently, the interest in quantum-proof two-source extractors (and multi-source in general) was renewed as people wished to use them as part of quantum randomness amplification (QRA) protocols. As for randomness extractors, the goal of a QRA protocol is to extract an almost uniformly random string from a weak source (which is usually known in public, e.g., NIST's Randomness Beacon). However, in contrast to randomness extractors, the idea is to do it with only \emph{one} weak source (and no seed) by exploiting the power of quantum physics (as mentioned in Section~\ref{sec:intro} this is impossible in the case of randomness extractors).  
Of course, once a quantum protocol is considered, it only makes sense to consider quantum side information.

With particular importance are QRA protocols which are device independent. That is, protocols in which one treats the devices as black boxes and does not assume much regarding the underlying quantum states and measurements inside the boxes\footnote{The advantage of this approach is that this stronger notion of security allows for some inevitable unknown imperfections in actual implementations of the protocol.}.
One should then prove the security of the protocol only based on the statistics which are observed by the honest user when running the protocol. This seemly impossible task is made possible by the use of Bell inequalities, which allow one to ``certify the quantumness'' of the considered protocol (for a review on the topic see, e.g.,~\cite{scarani2013device}). 

In the past couple of years several protocols were suggested for this task. The result presented in \cite{chung2014physical} was a big breakthrough: they considered a QRA protocol which uses a polynomial (at best) number of devices and a security proof against a general quantum adversary was proven. The main disadvantage of the protocol given in~\cite{chung2014physical} for actual implementations is the number of devices; each device can be thought of as a separate computer (or actually a complete laboratory where a Bell violation experiment can be done) and for the protocol to work one must make sure that the different computers cannot send signals to one another. Hence, a large number of devices amounts to a huge impractical apparatus. It is therefore interesting to ask whether a QRA protocol with a constant number of devices exists, or under which assumptions on the devices it is possible to devise such a protocol which can also be implemented in practice.  

Several other works considered the question of QRA with a constant number of devices, e.g.,~\cite{brandao2013robust,mironowicz2013amplification,plesch2014device}. The general idea in those works was to create two independent weak random sources from devices (under different additional setup assumptions not made in~\cite{chung2014physical}) that violate some Bell inequality, and then to apply a two-source extractor to get a final uniform key. 
However, as two-source extractors are not secure against general quantum adversaries the security was compromised. Indeed,~\cite{mironowicz2013amplification,plesch2014device} for example did not give a complete security proof against quantum adversaries. 
In~\cite{brandao2013robust} security was proven\footnote{The security proof of~\cite{brandao2013robust} holds against non-signalling adversaries, which are more powerful than quantum ones. Note that two-source extractors are not known to be secure against those more powerful non-signalling adversaries (in any model of the sources and the side information). Furthermore, our proof technique that shows security in specific quantum models cannot be used in the non-signalling case.} by a reduction to the case of a simple classical adversary (and hence the extractor could be used), at the cost of an additional setup assumption, namely that the adversary never has access to the initial weak source, and some loss in parameters.

Following the current work about quantum-proof multi-source extractors it is therefore interesting to consider the Markov model in the context of QRA protocols. More specifically, one can assume that two (or more) separated devices are a priori in product and become correlated only via the adversary or the environment, i.e., the state of the devices and the adversary $\rho_{ABC}$ is a Markov chain $A \leftrightarrow C \leftrightarrow B$. The (unknown but local) measurements in the two devices then create a ccq-state $\rho_{X_1X_2C}$ in the Markov model, to which the quantum-proof two-source extractor is applied.
 
Such assumptions about the structure of the devices could be justified in an intermediate device independent manner, e.g., if the devices are produced by two different experimental groups, or if the experimentalists know that a priori the devices are in a product state but might get correlated since they are placed in near by locations and therefore effected from the same temperature fluctuations. In any case, we still consider one quantum adversary and do not restrict her side information to some leakage operation as in~\cite{chung2014multi}.
Furthermore, it is well known that for many Bell inequalities, if the observed Bell violation in the QRA protocol is maximal then the devices must be in product with one another (i.e., one does not need to \emph{assume} that this is the case). Taking into account self-testing results like~\cite{reichardt2013classical,yang2014robust}, although out of reach of current techniques, it is possible that in the future one could justify an almost tensor product structure (in some appropriate notion of closeness under which the extractors still perform well) from a non-maximal observed Bell violation.

\section{Conclusions and open questions}\label{sec:open_questions}

In this work a new and natural model for classical and quantum-proof multi-source extractors was defined---the Markov model. We then showed that \emph{all} multi-source extractors, weak and strong, are also secure in the presence of side information that falls into the Markov model, both in the classical and quantum case. As explained in the previous sections, our main result, Theorem~\ref{thm:quantum_proof_extractors}, can be seen as a continuation, extension and improvement of previously known results~\cite{kasher2010two,chung2014multi}.

Apart from the result itself, on the technical level, a new proof technique was used, which, in contrast to the previous works is indifferent to whether the extractors are strong or not. In particular this implies that no adaptations are required for any new multi-source extractor that might be proposed in the future.

We finish this work with several open questions:

\begin{enumerate}
	\item Are there more general models that extend the Markov model in which all extractors remain secure? Some natural extensions were provided in Section~\ref{sec:extending_sources}.
        \item Are there different families of states $\rho_{X_1X_2C}$ from which it is possible to extract randomness that are relevant for practical applications? The difficulty in extracting randomness comes from the fact that we are not given one (known) state $\rho_{X_1X_2C}$, but that the extractor is expected to work for any state in a given family, e.g., a Markov state with lower bounds on the conditional min-entropy. The standard criterion of independence between the sources $X_1$ and $X_2$ has been relaxed in this work to allow for sources that are independent conditioned on $C$. Other structures might also allow randomness to be extracted.
	\item What happens if the sources and the side information are not exactly in the Markov model but only close to it? Even in the case of only two sources, there are different ways to quantify the closeness of a state to a Markov-chain state (see, e.g.,~\cite{fawzi2014quantum}). It is interesting to ask which notion of approximation is relevant in applications of multi-source extractors (such as quantum randomness amplification) and under which such notions the quantum-proof extractors remain secure. Note that the recovery map notion of approximation of Markov chains~\cite{fawzi2014quantum} does not guarantee approximate conditional independence of the sources, and seems to provide quite a different structure.
	\item It is unclear whether Theorem~\ref{thm:quantum_proof_extractors} is tight, i.e., whether the loss in the error of the extractor is inevitable when considering arbitrary extractors. In other words, it is not known if there are multi-source extractors for which the $\sqrt{2^m}$ loss in the error term is necessary\footnote{The same question arises in the case of seeded extractors as well~\cite{berta2014quantum,quantumbilinear}.}. In the other direction, as already noted in Section~\ref{sec:construction_1}, the work of~\cite{hayashi2013dualhashing} can be used, in combination with our proof technique, to show that for two-universal hashing (when the seed is taken to be the second source) the blow-up in the error term is not necessary.
	\item Do multi-source extractors remain secure also in the presence of non-signalling side information? Non-signalling adversaries are in general more powerful than quantum ones. For seeded extractors this does not seem to be the case~\cite{hanggi2009impossibility,arnon2012limits} but for multi-source extractors nothing is known. Note however that our proof technique is not applicable to non-signalling side information. 
\end{enumerate}

\paragraph{Acknowledgments.} We thank Mario Berta, Omar Fawzi and Thomas Vidick for helpful comments and discussions. This project was supported by the European Research Council (ERC) via grant No.\@ 258932, by the Swiss National Science Foundation (SNSF) via the National Centre of Competence in Research ``QSIT'', by the European Commission via the project ``RAQUEL'', and by the US Air Force Office of Scientific Research (AFOSR) via grant~FA9550-16-1-0245. VBS was also supported by the EC through grants ERC QUTE (NR 197868). The majority of this work was carried out while VBS was at ETH.

\appendix
\appendixpage

\section{Proofs of Section \ref{sec:security_proof}}\label{sec:proofs_security_two}

\begin{customlemma}{\ref{lem:classical_to_quantum_state}}
	Let $\rho_{X_1X_2C} = \sum_{x_1,x_2} \proj{x_1} \otimes \proj{x_2} \rho_C(x_1,x_2)$. Then there exists a pure state $\rho_{ABC}$ and POVMs $\{F_{x_1}\},\{G_{x_2}\}$ such that
	\begin{equation*}
 		\rho_C(x_1,x_2) = \tr_{AB}\left[ F_{x_1}^{\frac{1}{2}} \otimes G_{x_2}^{\frac{1}{2}} \otimes \mathbb{1}_{C}\rho_{ABC} F_{x_1}^{\frac{1}{2}} \otimes G_{x_2}^{\frac{1}{2}}  \otimes \mathbb{1}_{C}\right] \;.
	\end{equation*}
\end{customlemma}
\begin{proof}
	Let $\rho_{X_1X_2 C}= \sum_{x_1,x_2} p_{x_1,x_2} \kettbra{x_1}_{X_1} \otimes \kettbra{x_2}_{X_2} \otimes \tilde{\rho}^{x_1,x_2}_{C}$, where $\tilde{\rho}^{x_1,x_2}_{C} = \frac{\rho_C(x_1,x_2)}{\tr \rho_C(x_1,x_2)}$. And let $\ket{\psi^{x_1,x_2}}_{RC}$ be a purification of $\tilde{\rho}^{x_1,x_2}_{C}$. We define 
	\[	
		\ket{\rho}_{ABC} = \sum_{x_1,x_2} \sqrt{p_{x_1,x_2}} \ket{x_1}_{X_1} \otimes \ket{x_2}_{X_2} \otimes \ket{\psi^{x_1,x_2}}_{RC}
	\]
	 with $A=X_1$ and $B = X_2R$. One can easily verify that this lemma holds for $F_{x_1} = \kettbra{x_1}$ and $G_{x_2} = \kettbra{x_2} \otimes \mathbb{1}_{R}$.
\end{proof}

\begin{customlemma}{\ref{lem:simple_bound_weak}}
	For any $\Psi_{ABC_1C_2}$ and positive operators $\{F_{x_1}\},\{G_{x_2}\},\{H_{z_1}\},\{K_{z_2}\}$ which sum up to the identity,
	\begin{align*}
		\sum_{\substack{x_1,x_2,\\z_1,z_2,y}} 
		\left[ \delta_{\mathrm{Ext}\left(x_1,x_2\right)=y} - \frac{1}{M}\right]  
		\left[ \delta_{\mathrm{Ext}\left(z_1,z_2\right)=y} - \frac{1}{M}\right]  \; 
		\tr\left[ \Psi_{ABC_1C_2} F_{x_1}\otimes G_{x_2} \otimes H_{z_1}\otimes K_{z_2} \right] \\
		= \sum_{\substack{x_1,x_2,z_1,z_2 | \\ \mathrm{Ext}\left(x_1,x_2\right)=\mathrm{Ext}\left(z_1,z_2\right)}} \tr\left[ \Psi_{ABC_1C_2} F_{x_1}\otimes G_{x_2} \otimes H_{z_1}\otimes K_{z_2} \right] - \frac{1}{M}
	\end{align*}
\end{customlemma}

\begin{proof}
	Let $\mathrm{p}(x_1,x_2,z_1,z_2) = \tr\left[\Psi_{ABC_1C_2} F_{x_1}\otimes G_{x_2} \otimes H_{z_1}\otimes K_{z_2} \right]$.
        We consider each of the terms  of the LHS of the equation separately:
	\begin{align*}
		&\sum_{\substack{x_1,x_2,\\z_1,z_2,y}}\delta_{\mathrm{Ext}\left(x_1,x_2\right)=y}\delta_{\mathrm{Ext}\left(z_1,z_2\right)=y}\; 
		\mathrm{p}(x_1,x_2,z_1,z_2) =  \sum_{\substack{x_1,x_2,z_1,z_2 | \\ \mathrm{Ext}\left(x_1,x_2\right)=\mathrm{Ext}\left(z_1,z_2\right)}} \mathrm{p}(x_1,x_2,z_1,z_2) \;; \\
		&\frac{1}{M} \sum_{\substack{x_1,x_2,\\z_1,z_2,y}}\delta_{\mathrm{Ext}\left(x_1,x_2\right)=y}\; \mathrm{p}(x_1,x_2,z_1,z_2) =  \frac{1}{M} \;; \\
		& \frac{1}{M} \sum_{\substack{x_1,x_2,\\z_1,z_2,y}}\delta_{\mathrm{Ext}\left(z_1,z_2\right)=y}\; \mathrm{p}(x_1,x_2,z_1,z_2) = \frac{1}{M} \;; \\
		&\frac{1}{M^2}\sum_{\substack{x_1,x_2,\\z_1,z_2,y}} \mathrm{p}(x_1,x_2,z_1,z_2) = \frac{1}{M} \;.
	\end{align*}
	The lemma follows by combining all the terms.
\end{proof}

In the following we denote $[l]\setminus\{i\}$ by $\bar{i}$ and $[l]\setminus\{i,j\}$ by $\overline{\{i,j\}}$.

\begin{lemma}
\label{lem:multi.markov}
Let $\rho_{A_{[l]}C}$ be such that for all $i \in [l]$,
\[ I(A_i:A_{\bar{i}}|C) = 0\;.\] Then $\rho_{A_{[l]}C}$ can be written as a direct sum of product states, \[\rho_{A_{[l]}C} = \bigoplus_t \mathrm{p}(t) \rho^t_{A_1C_1^t} \otimes \dotsb \otimes \rho^t_{A_lC_l^t}\;,\] where $\mathcal{H}_C = \bigoplus_t \mathcal{H}_{C^t_1} \otimes \dotsb \otimes \mathcal{H}_{C^t_l}$.\end{lemma}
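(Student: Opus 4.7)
I would prove this by induction on $l$, with the base case $l = 2$ being the Hayden--Jozsa--Petz--Winter structure theorem of \cite{hayden2004structure} already invoked at Equation~\eqref{eq:quantummarkov}. For the inductive step, I assume the result for $l-1$ sources and apply the $l=2$ theorem to the bipartition $A_1 \,|\, A_2\cdots A_l$: the hypothesis $I(A_1:A_{\bar 1}|C)=0$ means $A_1 \leftrightarrow C \leftrightarrow A_2\cdots A_l$ is a quantum Markov chain, so the structure theorem supplies an orthogonal decomposition $\mathcal{H}_C = \bigoplus_s \mathcal{H}_{C_1^s}\otimes \mathcal{H}_{C_2^s}$ and
\[
\rho_{A_{[l]}C} \;=\; \bigoplus_s q(s)\, \rho^s_{A_1 C_1^s} \otimes \rho^s_{A_2\cdots A_l C_2^s}.
\]

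The key step is to verify that the remaining $(l-1)$ Markov conditions survive inside every sector, i.e.\ that for each $i\in\{2,\dotsc,l\}$ and each $s$ with $q(s)>0$,
\[
I(A_i : A_{\overline{\{1,i\}}} | C_2^s)_{\rho^s_{A_2\cdots A_l C_2^s}} \;=\; 0.
\]
Here I use two standard facts. First, since the decomposition over $s$ is orthogonal on $C$, introducing the classical register $T$ recording $s$ is an isometry and gives $I(A_i:A_{\bar i}|C) = \sum_s q(s)\, I(A_i:A_{\bar i}|C_1^s C_2^s)_{\rho^s_{A_1 C_1^s}\otimes\rho^s_{A_2\cdots A_l C_2^s}}$. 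Second, within each sector the tensor-product structure $\rho^s_{A_1 C_1^s}\otimes \rho^s_{A_2\cdots A_l C_2^s}$ combined with the chain rule for conditional mutual information yields $I(A_i:A_1|C_1^sC_2^s)=0$ and $I(A_i:A_{\overline{\{1,i\}}}|A_1 C_1^s C_2^s) = I(A_i:A_{\overline{\{1,i\}}}|C_2^s)$, so the $s$-term simplifies to $I(A_i:A_{\overline{\{1,i\}}}|C_2^s)_{\rho^s_{A_2\cdots A_l C_2^s}}$. Since the original quantity vanishes and each summand is non-negative, each sector-wise conditional mutual information vanishes as claimed.

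Having established this, the induction hypothesis applies to $\rho^s_{A_2\cdots A_l C_2^s}$ for each sector, giving $\mathcal{H}_{C_2^s} = \bigoplus_{t'} \mathcal{H}_{C_2^{s,t'}} \otimes\cdots\otimes \mathcal{H}_{C_l^{s,t'}}$ and $\rho^s_{A_2\cdots A_l C_2^s} = \bigoplus_{t'} p(t'|s)\, \rho^{s,t'}_{A_2 C_2^{s,t'}}\otimes\cdots\otimes\rho^{s,t'}_{A_l C_l^{s,t'}}$. Re-indexing by $t=(s,t')$, defining $C_1^t := C_1^s$ (trivially independent of $t'$), $C_i^t := C_i^{s,t'}$ for $i\ge 2$, $p(t):=q(s)p(t'|s)$, $\rho^t_{A_1 C_1^t}:=\rho^s_{A_1 C_1^s}$, and $\rho^t_{A_i C_i^t}:=\rho^{s,t'}_{A_i C_i^{s,t'}}$, yields the stated decomposition and the corresponding Hilbert-space factorisation.

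The only potentially subtle point is the descent of the Markov conditions in the middle paragraph; once that is in hand, the rest is bookkeeping of indices and Hilbert-space factors. I expect this to be straightforward given the tensor-product structure obtained from the first application of the Hayden--Jozsa--Petz--Winter theorem, so I do not anticipate serious obstacles.
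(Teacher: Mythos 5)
Your proof is correct and follows essentially the same route as the paper's: apply the Hayden--Jozsa--Petz--Winter decomposition to the bipartition $A_1 \,|\, A_{\bar{1}}$, use the classical sector register together with non-negativity and monotonicity (chain rule) of conditional mutual information to conclude that the remaining Markov conditions hold sector-wise, and then recurse. The only cosmetic difference is that the paper retains the full sector system $C^{t_1}$ (including $C_1^{t_1}$) as the conditioning system in the descended Markov condition, whereas you explicitly reduce to $C_2^s$ via the tensor-product structure; both variants lead to the same decomposition.
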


\begin{proof}
  We prove this lemmas by recursively applying the result from \cite{hayden2004structure} given in Equation~\eqref{eq:quantummarkov} on the structure of quantum Markov chains. We will also use the following facts about conditional mutual information:
\begin{enumerate}
	\item For any $\rho_{ABC}$, $I(A:B|C) \geq 0$.
	\item For any $\rho_{ABCD}$, $I(A:BC|D) \geq I(A:B|D)$.
	\item For any $\rho_{ABCX} = \sum_x p_x \rho^x_{ABC} \otimes \ket{x}\bra{x}$ classical on $X$, $I(A:B|CX) = \sum_x p_x I(A:B|CX=x)$.
\end{enumerate}

 Because $I(A_1:A_{\bar{1}}|C) = 0$, we know that
 \[ \rho_{A_{[l]}C} = \bigoplus_{t_1} p_{t_1} \rho^{t_1}_{A_1C_1^{t_1}}
   \otimes \rho^{t_1}_{A_{\bar{1}} C_{\bar{1}}^{t_1}}\;.\]
   Let $T_1$ denote a classical system defined by
   \[ \rho_{A_{[l]}CT_1} = \bigoplus_{t_1} p_{t_1}
   \rho^{t_1}_{A_1C_1^{t_1}} \otimes \rho^{t_1}_{A_{\bar{1}} C_{\bar{1}}^{t_1}} \otimes \ket{t_1}\bra{t_1}\;.\]
     Note that $\rho_{A_{[l]}CT_1}$ is related to $\rho_{A_{[l]}C}$ by
     an isometry from $C$ to $CT_1$, hence
     \[I(A_2:A_{\bar{2}}|CT_1) = I(A_2:A_{\bar{2}}|C) = 0 \;.\]
     It follows that for all $t_1$,
     \[I(A_2:A_{\bar{2}}|CT_1 = t_1) = 0 \;,\] and hence
\[I(A_2:A_{\overline{\{1,2\}}}|CT_1=t_1) = 0 \;,\] which means that
the state $\rho^{t_1}_{A_2 A_{\overline{\{1,2\}}}C^{t_1}}$ is a Markov chain
$A_2 \leftrightarrow C^{t_1} \leftrightarrow A_{\overline{\{1,2\}}}$. Applying Equation~\eqref{eq:quantummarkov} again, we get
\[ \rho^{t_1}_{A_2 A_{\overline{\{1,2\}}}C^{t_1}} = \bigoplus_{t_2} p_{t_2} \rho^{t_1,t_2}_{A_2C_2^{t_1,t_2}} \otimes
\rho^{t_1,t_2}_{A_{\overline{\{1,2\}}} C_{\overline{\{1,2\}}}^{t_1,t_2}}\;.\] Repeating this for all $i \in [l]$ proves the lemma.  \end{proof}

\section{Strong extractors}\label{sec:strong_proofs}
In this section we give the proofs necessary for the security of quantum-proof two-source extractors, \emph{strong} in the source $X_1$, against product side information. The same steps can be repeated to prove the same result for multi-source extractors which are strong with respect to other sources. 

The following lemma is the analogues of Lemma~\ref{lem:CS_weak} for the strong case.  

\begin{lemma}\label{lem:CS_strong}
Let $\rho_{X_1X_2C} = \rho_{X_1C_1} \otimes \rho_{X_2C_2}$ be a product ccq-state. Then there exists a POVM \(\{G_{z_2}\}\) acting on \(C_2\) such that
  \begin{multline*}
     \frac{1}{M}\| \rho_{\mathrm{Ext}(X_1,X_2)X_1C} - \rho_{U_m} \otimes \rho_{X_1C}\|^2 \leq\\
     \sum_{\substack{x_1,x_2,\\z_2,y}} \left[ \delta_{\mathrm{Ext}\left(x_1,x_2\right)=y} - \frac{1}{M}\right]   \left[ \delta_{\mathrm{Ext}\left(x_1,z_2\right)=y} - \frac{1}{M}\right]  \; \P[X_1 = x_1] \;\tr_{C_2}\left[\rho_{C_2}(x_2) G_{z_2} \right]\,,
  \end{multline*}
  where  $M=2^m$.
\end{lemma}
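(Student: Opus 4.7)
The plan is to mirror the proof of Lemma~\ref{lem:CS_weak}, adapting for the strong-extractor setting where $X_1$ is retained in the conditioning and the joint state factorises. I would first express the trace norm via duality, $\|R\| = \max_{-\mathbb{1}\le S\le \mathbb{1}}\tr[RS]$, and exploit the block-diagonal structure in both $y$ and $x_1$ of $\rho_{\mathrm{Ext}(X_1,X_2)X_1C}$ and $\rho_{U_m}\otimes \rho_{X_1C}$ to restrict $S$ to the form $S = \sum_{y,x_1}\ket{y}\bra{y}\otimes\ket{x_1}\bra{x_1}\otimes S_{y,x_1}$, with each $S_{y,x_1}$ hermitian on $C_1\otimes C_2$ and of operator norm at most $1$.

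Next, because the state factorises as $\rho_{X_1C_1}\otimes\rho_{X_2C_2}$, the $C_1$ tensor factor decouples: define $\hat{S}_{y,x_1} := \tr_{C_1}[(\rho^{x_1}_{C_1}\otimes\mathbb{1}_{C_2})\,S_{y,x_1}]$ on $C_2$. Since $\rho^{x_1}_{C_1}$ is a state, $\hat{S}_{y,x_1}$ remains hermitian with $\|\hat{S}_{y,x_1}\|_\infty\le 1$, and the distinguishing expression reduces to a problem involving only $X_2C_2$, parametrised by $x_1$. Then I would purify $\rho_{X_2C_2}$ using (a single-variable version of) Lemma~\ref{lem:classical_to_quantum_state}, obtaining a pure state $\rho_{BC_2}$ and a POVM $\{F_{x_2}\}$ on $B$ with $\P[X_2=x_2]\rho^{x_2}_{C_2}=\tr_B[(F_{x_2}\otimes\mathbb{1})\rho_{BC_2}]$, pass to the pretty good purification $\ket{\phi}_{BB^\prime}=(\rho_B^{1/2}\otimes\mathbb{1})\ket{\Phi_{BB^\prime}}$ via an isometry $V: B^\prime\to C_2$, and apply the transpose trick $\mathbb{1}\otimes X\ket{\Phi}=X^T\otimes\mathbb{1}\ket{\Phi}$ to bring everything into the form
\[
\sum_{x_1,y}\P[X_1=x_1]\,\tr\!\left[\rho_B^{1/2}\Delta^{x_1}_y\rho_B^{1/2}(V^*\hat{S}_{y,x_1}V)^T\right],
\]
where $\Delta^{x_1}_y=\sum_{x_2}c^{x_1,x_2}_y F_{x_2}$ and $c^{x_1,x_2}_y=\delta_{\mathrm{Ext}(x_1,x_2)=y}-\tfrac{1}{M}$.

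I would then apply Cauchy--Schwarz to the positive semi-definite sesquilinear form $\langle R,T\rangle = \sum_{y,x_1}\P[X_1=x_1]\tr[\rho_B^{1/2}R_{y,x_1}^*\rho_B^{1/2}T_{y,x_1}]$. The factor arising from $S$ equals $\sum_{y,x_1}\P[X_1=x_1]\tr[\rho_B^{1/2}\check{S}_{y,x_1}^T\rho_B^{1/2}\check{S}_{y,x_1}^T]$ with $\check{S}=V^*\hat{S}V$ a hermitian contraction; each such trace is at most $1$, so the total is at most $\sum_{y,x_1}\P[X_1=x_1]=M$. Reversing the transpose trick on the remaining factor $\langle\Delta,\Delta\rangle$, expanding $\Delta^{x_1}_y$, and identifying $G_{z_2}:=VF_{z_2}^T V^*$ (completed to a POVM on $C_2$ by an extra outcome $\mathbb{1}-VV^*$ if $V$ is not surjective) yields exactly $\sum_{x_1,x_2,z_2,y}\P[X_1=x_1]c^{x_1,x_2}_y c^{x_1,z_2}_y\tr_{C_2}[\rho_{C_2}(x_2)G_{z_2}]$, which after dividing by $M$ is the claimed bound.

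The main obstacle I expect is the careful bookkeeping of the $\P[X_1=x_1]$ weights through the sesquilinear form — in particular verifying that it is positive semi-definite (so Cauchy--Schwarz applies) and that the bound $\tr[\rho_B^{1/2}T\rho_B^{1/2}T]\le 1$ still holds for hermitian contractions $T$ that may be neither positive nor normal. Once these are settled, together with the fact that tracing out $C_1$ against a state preserves the operator-norm bound on $S_{y,x_1}$, the rest is a direct adaptation of the weak case.
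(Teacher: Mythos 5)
Your proposal is correct and its skeleton — norm duality, restriction to block-diagonal $S$ in $y$ and $x_1$, Cauchy--Schwarz on a positive semi-definite sesquilinear form weighted by $\P[X_1=x_1]$, and bounding the $S$-factor by $M$ — is exactly the paper's. The one place you diverge is in how the POVM on $C_2$ is produced: you import the full weak-case machinery (the single-source version of Lemma~\ref{lem:classical_to_quantum_state}, the pretty good purification, the transpose trick, and the identification $G_{z_2}=VF_{z_2}^TV^*$), whereas the paper's proof of Lemma~\ref{lem:CS_strong} skips purification altogether and directly sets $G_{x_2}=\bar\rho_{C_2}^{-1/2}\rho_{C_2}(x_2)\bar\rho_{C_2}^{-1/2}$ with $\bar\rho_{C_2}=\sum_{x_2}\rho_{C_2}(x_2)$, keeping $\rho_{X_1C_1}\otimes\bar\rho_{C_2}$ inside the sesquilinear form instead of tracing out $C_1$ early as you do via $\hat S_{y,x_1}$. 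The two constructions yield the same measurement (the pretty good measurement is precisely what the pretty-good-purification-plus-transpose route produces), so this is a cosmetic rather than substantive difference; the paper's version is shorter, yours makes the parallel with Lemma~\ref{lem:CS_weak} more transparent. The technical points you flag are all fine: $\hat S_{y,x_1}=\tr_{C_1}[(\rho^{x_1}_{C_1}\otimes\mathbb{1})S_{y,x_1}]$ is a hermitian contraction because it is an expectation against a state; the form is PSD because $\langle R,R\rangle=\sum_{y,x_1}\P[X_1=x_1]\,\|\rho_B^{1/4}R_{y,x_1}\rho_B^{1/4}\|_2^2\geq 0$; and $|\tr[\rho^{1/2}T\rho^{1/2}T]|\leq\|\rho^{1/2}T\|_2^2=\tr[\rho T^2]\leq 1$ for any hermitian contraction $T$, which settles the last worry you raised.
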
 

\begin{proof}
  First, recall that for a hermitian matrix \(R\) we have \(\trnorm{R} = \max \{\tr[RS] \,:\,-\mathbb{1}\leq S \leq \mathbb{1}\}\). Applying this to the matrix which norm specifies the error of the extractor, we find
  \[
    \| \rho_{\mathrm{Ext}(X_1,X_2)C} - \rho_{U_m}\otimes \rho_{C}\| = \max_{-\mathbb{1}\leq S \leq \mathbb{1}}\tr\left[ \left(\rho_{\mathrm{Ext}(X_1,X_2)C} - \rho_{U_m}\otimes \rho_{C} \right) S\right] \,.
  \]
  Since $\rho_{\mathrm{Ext}(X_1,X_2)X_1C}$ and $\rho_{U_m}\otimes \rho_{X_1C}$ are block diagonal with respect to the outcome variable of the extractor \(y\), as well as to the classical variable \(x_1\), \(S\) can be assumed to be block diagonal as well. Using this and inserting the expression for \(\rho_{X_1X_2C}\) in Equation \eqref{eq:defomegacqstate} we arrive at
  \[
    \| \rho_{\mathrm{Ext}(X_1,X_2)X_1C} - \rho_{U_m}\otimes \rho_{X_1C}\| = \max_{-\mathbb{1} \leq S_{y,x_1} \leq \mathbb{1} }\,\sum_{y,x_1,x_2} \, \left[ \delta_{\mathrm{Ext}\left(x_1,x_2\right)=y} - \frac{1}{M}\right]\,\tr\left[\rho_{C_1}(x_1) \otimes \rho_{C_2}(x_2)  S_{x_1,y}\right]\,.
  \]
  Let us denote \(G_{x_2} = \bar{\rho}_{C_2}^{-\half} \rho_{C_2}(x_2)\bar{\rho}_{C_2}^{-\half}\) with \(\bar{\rho}_{C_2} = \sum_{x_2} \rho_{C_2}(x_2)\). Then we find
  \begin{align*}
    &\| \rho_{\mathrm{Ext}(X_1,X_2)X_1C} - \rho_{U_m}\otimes \rho_{X_1C}\| \\
    &\qquad= \max_{-\mathbb{1} \leq S_{y,x_1} \leq \mathbb{1} }\,\sum_{y,x_1,x_2} \, \left[ \delta_{\mathrm{Ext}\left(x_1,x_2\right)=y} - \frac{1}{M}\right]\,\tr\left[\left(\rho_{C_1}(x_1) \otimes \bar{\rho}_{C_2}\right)^\half \mathbb{1}_{C_1} \otimes G_{x_2} \left(\rho_{C_1}(x_1) \otimes \bar{\rho}_{C_2}\right)^\half S_{x_1,y}\right]\\
    &\qquad=\max_{-\mathbb{1} \leq S_{y,x_1} \leq \mathbb{1} }\,\sum_{y,x_1} \, \tr\left[\left(\rho_{C_1}(x_1) \otimes \bar{\rho}_{C_2}\right)^\half \mathbb{1}_{C_1} \otimes \Delta_{x_1,y} \left(\rho_{C_1}(x_1) \otimes \bar{\rho}_{C_2}\right)^\half S_{x_1,y}\right]
  \end{align*}
  where we used the abbreviation
  \[
    \Delta_{y,x_1} = \sum_{x_2} \left[ \delta_{\mathrm{Ext}\left(x_1,x_2\right)=y} - \frac{1}{M}\right] G_{x_2} \,.
  \]
  We now denote
  \begin{align*}
    \rho_{X_1C_1} = \sum_{x_1} \kettbra{x_1} \otimes \rho_{C_1}(x_1)
  \end{align*}
  and find \(\rho_{X_1C_1}^\half = \sum_{x_1} \kettbra{x_1} \otimes \rho_{C_1}(x_1)^\half\). Setting 
  \begin{align*}
    \Delta_y = \sum_{x_1} \kettbra{x_1} \otimes \mathbb{1}_{C_1} \otimes \Delta_{y,x_1}\,,\qquad S_y = \sum_{x_1} \kettbra{x_1} \otimes S_{y,x_1}
  \end{align*}
  we find
  \begin{align*}
    \| \rho_{\mathrm{Ext}(X_1,X_2)X_1C} - \rho_{U_m}\otimes \rho_{X_1C}\| = \max_{-\mathbb{1} \leq S_{y} \leq \mathbb{1} }\,\sum_{y} \,\tr\left[\rho_{X_1C_1}^\half \otimes \bar{\rho}_{C_2}^\half \Delta_y \rho_{X_1C_1}^\half \otimes \bar{\rho}_{C_2}^\half S_y\right]\,.
  \end{align*}
  The crucial observation is now that the sesquilinear form
  \begin{align*}
    (R_{y}) \times (T_{y}) \mapsto \sum_{y} \tr\left[\left(\rho_{X_1C_1} \otimes \bar{\rho}_{C_2}\right)^\half R^*_{y} \left(\rho_{X_1C_1} \otimes \bar{\rho}_{C_2}\right)^\half T_{y}\right]
  \end{align*}
  on block-diagonal matrices is positive semi-definite and hence fulfils the Cauchy-Schwarz inequality. Applying this gives
  \begin{align*}
    &\| \rho_{\mathrm{Ext}(X_1,X_2)X_1C} - \rho_{U_m}\otimes \rho_{X_1C}\|^2 \leq\\
    &\left(\sum_y  \tr\left[\left(\rho_{X_1C_1} \otimes \bar{\rho}_{C_2}\right)^\half \Delta_y \left(\rho_{X_1C_1} \otimes \bar{\rho}_{C_2}\right)^\half \Delta_y \right]\right) \cdot \left( \sum_y \tr\left[\left(\rho_{X_1C_1} \otimes \bar{\rho}_{C_2}\right)^\half S_y \left(\rho_{X_1C_1} \otimes \bar{\rho}_{C_2}\right)^\half S_y \right]\right) \,.
  \end{align*}
  Since we have that the norm of \(S_y\) is bounded by one, the terms in the second sum satisfy
  \begin{align*}
    \tr\left[\left(\rho_{X_1C_1} \otimes \bar{\rho}_{C_2}\right)^\half S_y \left(\rho_{X_1C_1} \otimes \bar{\rho}_{C_2}\right)^\half S_y \right] \leq \tr\left[\rho_{X_1C_1} \otimes \bar{\rho}_{C_2} S_y \right] \leq 1\,.
  \end{align*}
  Hence we arrive at
  \begin{align*}
    \| \rho_{\mathrm{Ext}(X_1,X_2)X_1C} - \rho_{U_m}\otimes \rho_{X_1C}\| &\leq \sqrt{M} \, \sum_y  \tr\left[\left(\rho_{X_1C_1} \otimes \bar{\rho}_{C_2}\right)^\half \Delta_y \left(\rho_{X_1C_1} \otimes \bar{\rho}_{C_2}\right)^\half \Delta_y \right]
  \end{align*}
  and expanding the definition of \(\Delta_y\) yields
  \begin{multline*}
    \| \rho_{\mathrm{Ext}(X_1,X_2)X_1C} - \rho_{U_m}\otimes \rho_{X_1C}\| \leq\\
     \sqrt{M}  \sum_{y,x_1,x_2,z_2}  \tr_{C_1}\left[\rho_{C_1}(x_1)\right] \left[ \delta_{\mathrm{Ext}\left(x_1,x_2\right)=y} - \frac{1}{M}\right] \left[ \delta_{\mathrm{Ext}\left(x_1,z_2\right)=y} - \frac{1}{M}\right] \tr\left[\rho_{C_2}(x_2)G_{z_2}\right]\,,
  \end{multline*}
  and since \(G_{z_2}\) are positive operators summing up to the identity, the assertion is proven.
\end{proof}

Next, let $\mathrm{p}(x_1,x_2,z_2)=\P[X_1 = x_1] \;\tr_{C_2}\left[\rho_{C_2}(x_2) G_{z_2} \right]$ and note that $\mathrm{p}(x_1,x_2,z_2)$ is indeed a probability distribution. Then, the following lemma is analogues to Lemma~\ref{lem:simple_bound_weak}. 

\begin{lemma}\label{lem:simple_bound_strong}
	For $\mathrm{p}(x_1,x_2,z_2)=\P[X_1 = x_1] \;\tr_{C_2}\left[\rho_{C_2}(x_2) G_{z_2} \right]$,
	\begin{equation}\label{eq:specific_attack_strong}
		\sum_{\substack{x_1,x_2,\\z_2,y}} \left[ \delta_{\mathrm{Ext}\left(x_1,x_2\right)=y} - \frac{1}{M}\right]   \left[ \delta_{\mathrm{Ext}\left(x_1,z_2\right)=y} - \frac{1}{M}\right]  \; \mathrm{p}(x_1,x_2,z_2) =
		 \sum_{\substack{x_1,x_2,z_2 | \\ \mathrm{Ext}\left(x_1,x_2\right)=\mathrm{Ext}\left(x_1,z_2\right)}} \mathrm{p}(x_1,x_2,z_2)  - \frac{1}{M}
	\end{equation}
\end{lemma}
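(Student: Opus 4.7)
My plan is to mirror the argument of Lemma~\ref{lem:simple_bound_weak}: expand the bracketed product on the left-hand side into its four constituent terms, evaluate each one using that $\mathrm{p}(x_1,x_2,z_2)$ is a joint probability distribution, and then recombine. The only structural difference from the weak case is that the variable $z_1$ has been replaced by the shared value $x_1$ (since the extractor is strong in $X_1$), so the ``collision'' set becomes $\{\mathrm{Ext}(x_1,x_2)=\mathrm{Ext}(x_1,z_2)\}$ instead of $\{\mathrm{Ext}(x_1,x_2)=\mathrm{Ext}(z_1,z_2)\}$.

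The first thing to verify is that $\mathrm{p}(x_1,x_2,z_2)=\P[X_1=x_1]\,\tr_{C_2}[\rho_{C_2}(x_2)G_{z_2}]$ really is a probability distribution. Using that $\{G_{z_2}\}$ is a POVM, $\sum_{z_2}G_{z_2}=\mathbb{1}_{C_2}$, so $\sum_{z_2}\tr_{C_2}[\rho_{C_2}(x_2)G_{z_2}]=\tr[\rho_{C_2}(x_2)]$; summing then over $x_2$ gives $\tr[\bar{\rho}_{C_2}]=1$, and $\sum_{x_1}\P[X_1=x_1]=1$ finishes it. This ensures that the trivial term $\sum_{x_1,x_2,z_2}\mathrm{p}(x_1,x_2,z_2)=1$ will simplify cleanly.

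Next I would compute each of the four terms separately. The cross term $\sum_{x_1,x_2,z_2,y}\delta_{\mathrm{Ext}(x_1,x_2)=y}\,\delta_{\mathrm{Ext}(x_1,z_2)=y}\,\mathrm{p}(x_1,x_2,z_2)$ collapses under the sum over $y$ to $\sum_{\mathrm{Ext}(x_1,x_2)=\mathrm{Ext}(x_1,z_2)}\mathrm{p}(x_1,x_2,z_2)$, which is precisely the leading term on the right-hand side. Each of the two mixed terms, e.g.\ $-\tfrac{1}{M}\sum_{x_1,x_2,z_2,y}\delta_{\mathrm{Ext}(x_1,x_2)=y}\,\mathrm{p}(x_1,x_2,z_2)$, evaluates to $-\tfrac{1}{M}$ since the sum over $y$ yields $1$ for each fixed $(x_1,x_2)$ and then the probabilities sum to $1$. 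The final term $\tfrac{1}{M^2}\sum_{x_1,x_2,z_2,y}\mathrm{p}(x_1,x_2,z_2)=\tfrac{M}{M^2}=\tfrac{1}{M}$. Summing the four contributions gives $\sum_{\mathrm{Ext}(x_1,x_2)=\mathrm{Ext}(x_1,z_2)}\mathrm{p}(x_1,x_2,z_2)-\tfrac{1}{M}-\tfrac{1}{M}+\tfrac{1}{M}$, which is exactly the right-hand side.

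Since this is a purely algebraic identity whose content is fully captured by normalisation of $\mathrm{p}$ and collapsing of Kronecker deltas, I do not anticipate any real obstacle; the argument is essentially identical to Lemma~\ref{lem:simple_bound_weak}, with the only care point being the POVM-normalisation check described above.
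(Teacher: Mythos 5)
Your proposal is correct and follows exactly the same route as the paper's proof: expand the product into four terms, use normalisation of $\mathrm{p}(x_1,x_2,z_2)$ and the collapse of the Kronecker deltas under the sum over $y$, and recombine to get $-\tfrac{1}{M}-\tfrac{1}{M}+\tfrac{1}{M}=-\tfrac{1}{M}$. The preliminary POVM-normalisation check is a sensible addition (the paper states this fact just before the lemma without proof), but otherwise the arguments coincide.
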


\begin{proof}
        We follow a similar line as in the proof of Lemma~\ref{lem:simple_bound_weak}. 
	\begin{align*}
		&\sum_{\substack{x_1,x_2,\\z_2,y}}\delta_{\mathrm{Ext}\left(x_1,x_2\right)=y}\delta_{\mathrm{Ext}\left(x_1,z_2\right)=y}\; 
		\mathrm{p}(x_1,x_2,z_2) =  \sum_{\substack{x_1,x_2,z_2 | \\ \mathrm{Ext}\left(x_1,x_2\right)=\mathrm{Ext}\left(x_1,z_2\right)}} \mathrm{p}(x_1,x_2,z_2) \;; \\
		&\frac{1}{M} \sum_{\substack{x_1,x_2,\\z_2,y}}\delta_{\mathrm{Ext}\left(x_1,x_2\right)=y}\; \mathrm{p}(x_1,x_2,z_2) =  \frac{1}{M} \;; \\
		& \frac{1}{M} \sum_{\substack{x_1,x_2,\\z_2,y}}\delta_{\mathrm{Ext}\left(x_1,z_2\right)=y}\; \mathrm{p}(x_1,x_2,z_2) = \frac{1}{M} \;; \\
		&\frac{1}{M^2}\sum_{\substack{x_1,x_2,\\z_2,y}} \mathrm{p}(x_1,x_2,z_2) = \frac{1}{M} \;. \qedhere
	\end{align*}
\end{proof}

The quantity in Equation~\eqref{eq:specific_attack_strong} can be seen as a simple distinguishing strategy of a distinguisher trying to distinguish the output of the extractor from uniform given classical side information $Z_2$ about the second source $X_2$ and the source $X_1$. We can therefore relate it to the error of the \emph{strong} extractor in the case of classical side information, i.e., to Equation~\eqref{eq:classical_ext_error_strong}. This is shown in the following lemma, which is analogues to Lemma~\ref{lem:distinguishing_weak}. 

\begin{lemma}\label{lem:distinguishing_strong}
  Let $Z_2$ denote the classical side information about the source $X_2$.\footnote{There is no side information about the source $X_1$, since it is made available in full.}  Then
	\[
		\sum_{\substack{x_1,x_2,z_2 | \\ \mathrm{Ext}\left(x_1,x_2\right)=\mathrm{Ext}\left(x_1,z_2\right)}} \mathrm{p}(x_1,x_2,z_2) - \frac{1}{M} \leq \frac{1}{2}\| \mathrm{Ext}\left( X_1,X_2 \right)X_1Z_2 - U_m\circ X_1Z_2 \| \;.
	\]
\end{lemma}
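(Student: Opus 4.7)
The plan is to follow exactly the same variational-distance argument as in Lemma~\ref{lem:distinguishing_weak}, only modifying the distinguishing event to reflect the fact that, in the strong case, the first source $X_1$ is made available alongside the side information $Z_2$.

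Concretely, I would introduce the two random variables
\[
\mathrm{R} = \mathrm{Ext}(X_1,X_2)\,X_1\,Z_2, \qquad \mathrm{Q} = U_m \circ X_1 \circ Z_2
\]
over $\{0,1\}^m \times \{0,1\}^{n_1} \times \{0,1\}^{n_2}$, and define the test event
\[
\mathcal{A}^\star \;=\; \bigl\{(a_1,a_2,a_3) \,:\, a_1 = \mathrm{Ext}(a_2,a_3)\bigr\}.
\]
The key observation is that under $\mathrm{R}$ the event $\mathcal{A}^\star$ occurs exactly when $\mathrm{Ext}(X_1,X_2) = \mathrm{Ext}(X_1,Z_2)$ (the two extractor evaluations share the same first argument $X_1$), so
\[
\mathrm{R}(\mathcal{A}^\star) \;=\; \sum_{\substack{x_1,x_2,z_2 \,:\, \\ \mathrm{Ext}(x_1,x_2)=\mathrm{Ext}(x_1,z_2)}} \mathrm{p}(x_1,x_2,z_2).
\]
Under $\mathrm{Q}$, the first component $U_m$ is uniform and independent of $(X_1,Z_2)$, so conditioning on any value of $\mathrm{Ext}(X_1,Z_2)$ gives probability $1/M$, and hence $\mathrm{Q}(\mathcal{A}^\star) = 1/M$.

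Finally, using the standard characterisation of variational distance,
\[
\tfrac{1}{2} \| \mathrm{Ext}(X_1,X_2)\,X_1\,Z_2 - U_m \circ X_1 Z_2 \| \;=\; \sup_{\mathcal{A}} \bigl| \mathrm{R}(\mathcal{A}) - \mathrm{Q}(\mathcal{A}) \bigr| \;\geq\; \mathrm{R}(\mathcal{A}^\star) - \mathrm{Q}(\mathcal{A}^\star),
\]
which rearranges to the required inequality. There is no real obstacle here: everything is parallel to the proof of Lemma~\ref{lem:distinguishing_weak}, and the only substantive change is that the distinguishing event is tailored to the strong setting by sharing $X_1$ between the two extractor arguments rather than using two independent copies $(X_1, Z_1)$.
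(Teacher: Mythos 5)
Your proposal is correct and follows essentially the same argument as the paper: the same pair of distributions $\mathrm{R}$ and $\mathrm{Q}$, the same test event $\mathcal{A}^\star = \{(a_1,a_2,a_3) : a_1 = \mathrm{Ext}(a_2,a_3)\}$, and the same lower bound on the variational distance via that single event. The observation that under $\mathrm{R}$ the event reduces to $\mathrm{Ext}(x_1,x_2) = \mathrm{Ext}(x_1,z_2)$ is exactly the step the paper uses.
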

\begin{proof}
	Define the following random variables over $\{0,1\}^m\times \{0,1\}^{n_1} \times \{0,1\}^{n_2}$: 
	\[
		\mathrm{R} = \mathrm{Ext}(X_1,X_2)X_1Z_2 \quad ; \quad \mathrm{Q} = U_m \circ X_1Z_2 \;.
	\]
	Let $\mathcal{A^\star} = \left\{ (a_1,a_2,a_3) \big| a_1 = \mathrm{Ext}\left(a_2,a_3\right) \right\} \subseteq \{0,1\}^m\times \{0,1\}^{n_1} \times \{0,1\}^{n_2}$. Then, the probabilities that $\mathrm{R}$ and $\mathrm{Q}$ assign to the event $\mathcal{A^\star}$ are 
	\[
		\mathrm{R}(\mathcal{A^\star}) = \sum_{\substack{x_1,x_2,z_2 | \\ \mathrm{Ext}\left(x_1,x_2\right)=\mathrm{Ext}\left(x_1,z_2\right)}} \mathrm{p}(x_1,x_2,z_2) \quad ; \quad \mathrm{Q}(\mathcal{A^\star}) = \frac{1}{M}
	\]
	Using the definition of the variational distance we therefore have 
	\begin{align*}
		\frac{1}{2} \| \mathrm{Ext}(X_1,X_2)X_1Z_2 - U_m \circ X_1Z_2\|  &= \sup_{\mathcal{A}} \| \mathrm{R}(\mathcal{A}) - \mathrm{Q}(\mathcal{A}) \| \\
		& \geq  \mathrm{R}(\mathcal{A^\star}) - \mathrm{Q}(\mathcal{A^\star}) \\
		& = \sum_{\substack{x_1,x_2,z_2 | \\ \mathrm{Ext}\left(x_1,x_2\right)=\mathrm{Ext}\left(x_1,z_2\right)}} \mathrm{p}(x_1,x_2,z_2) - \frac{1}{M} \;. \qedhere
	\end{align*}
\end{proof}

Finally, we combine the lemmas together to show that any strong classical-proof two-source extractor in the Markov model is secure against product quantum side information as well. We follow similar steps to those in the proof of Lemma~\ref{lem:classical_markov_to_product_quantum}. 

\begin{lemma}\label{lem:classical_markov_to_product_quantum_strong}
  Any $(k_1,k_2,\varepsilon)$-strong classical-proof two-source extractor in the Markov model is a $\left(k_1, k_2,\eps'\right)$-strong quantum-proof product two-source extractor with $\eps' = \sqrt{\varepsilon \cdot 2^{(m-2)}}$, where $m$ is the output length of the extractor.  \end{lemma}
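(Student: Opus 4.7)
The plan is to run the strong analogue of the proof of Lemma~\ref{lem:classical_markov_to_product_quantum}, assembling the ingredients (Lemma~\ref{lem:CS_strong}, Lemma~\ref{lem:simple_bound_strong}, and Lemma~\ref{lem:distinguishing_strong}) that were established in this appendix specifically to handle the strong case. Starting from an arbitrary product ccq-state $\rho_{X_1X_2C}=\rho_{X_1C_1}\otimes\rho_{X_2C_2}$ with $H_{\min}(X_1|C_1)\geq k_1$ and $H_{\min}(X_2|C_2)\geq k_2$, I would first apply Lemma~\ref{lem:CS_strong} to obtain a POVM $\{G_{z_2}\}$ on $C_2$ producing the Cauchy--Schwarz bound on $\|\rho_{\mathrm{Ext}(X_1,X_2)X_1C}-\rho_{U_m}\otimes\rho_{X_1C}\|^2$. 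Feeding that bound into Lemma~\ref{lem:simple_bound_strong} and then Lemma~\ref{lem:distinguishing_strong} yields
\begin{equation}\label{eq:combined_bound_strong}
\|\rho_{\mathrm{Ext}(X_1,X_2)X_1C}-\rho_{U_m}\otimes\rho_{X_1C}\|\leq\sqrt{\tfrac{M}{2}\,\|\mathrm{Ext}(X_1,X_2)X_1Z_2-U_m\circ X_1Z_2\|}\,,
\end{equation}
where $Z_2$ is the classical outcome of the POVM $\{G_{z_2}\}$, distributed according to $\mathrm{p}(x_1,x_2,z_2)=\mathrm{P}[X_1=x_1]\,\tr[\rho_{C_2}(x_2)G_{z_2}]$.

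Next I would check that the reduced classical state $\rho_{X_1X_2Z_2}$ satisfies the hypotheses of a classical-proof strong two-source extractor in the Markov model. Because the original state is a tensor product $\rho_{X_1C_1}\otimes\rho_{X_2C_2}$ and the POVM $\{G_{z_2}\}$ acts only on $C_2$, the distribution factorises as $\mathrm{p}(x_1,x_2,z_2)=\mathrm{p}(x_1)\,\mathrm{p}(x_2,z_2)$; in particular $X_1$ is independent of $(X_2,Z_2)$, so $I(X_1:X_2|Z_2)=0$ and the Markov condition $X_1\leftrightarrow Z_2\leftrightarrow X_2$ holds. The min-entropy requirements are also preserved: $H_{\min}(X_1|Z_2)=H_{\min}(X_1)\geq H_{\min}(X_1|C_1)\geq k_1$ by the independence of $X_1$ and $Z_2$, while $H_{\min}(X_2|Z_2)\geq H_{\min}(X_2|C_2)\geq k_2$ since deriving $Z_2$ from $C_2$ by measurement only increases the conditional min-entropy.

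Applying the assumed $(k_1,k_2,\varepsilon)$-strong classical-proof extractor property to $(X_1,X_2,Z_2)$ gives $\tfrac{1}{2}\|\mathrm{Ext}(X_1,X_2)X_1Z_2-U_m\circ X_1Z_2\|\leq\varepsilon$. Substituting this into \eqref{eq:combined_bound_strong} and dividing by $2$ produces $\tfrac{1}{2}\|\rho_{\mathrm{Ext}(X_1,X_2)X_1C}-\rho_{U_m}\otimes\rho_{X_1C}\|\leq\sqrt{\varepsilon\cdot 2^{m-2}}$, which is exactly the stated bound. I do not expect a real obstacle here, since the three ingredient lemmas do all the heavy lifting; the only subtlety worth double-checking is that the POVM delivered by Lemma~\ref{lem:CS_strong} acts solely on $C_2$, so that the product structure of $\rho_{X_1X_2C}$ transfers cleanly into the independence of $X_1$ from $Z_2$ that underlies both the Markov and the min-entropy verifications above.
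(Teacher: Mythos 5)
Your proposal is correct and follows essentially the same route as the paper's proof: it chains Lemmas~\ref{lem:CS_strong}, \ref{lem:simple_bound_strong}, and \ref{lem:distinguishing_strong} into the bound of Equation~\eqref{eq:combined_bound_strong}, verifies the Markov chain $X_1\leftrightarrow Z_2\leftrightarrow X_2$ and the two min-entropy conditions from the factorisation $\mathrm{p}(x_1,x_2,z_2)=\mathrm{p}(x_1)\,\mathrm{p}(x_2,z_2)$, and then invokes the classical-proof strong extractor guarantee. The only point you flag as needing a check --- that the POVM from Lemma~\ref{lem:CS_strong} acts solely on $C_2$ --- does hold by construction (it is the pretty-good measurement built from $\rho_{C_2}$), so the argument goes through exactly as in the paper.
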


\begin{proof}
	Let $\rho_{X_1X_2C}=\rho_{X_1C_1}\otimes\rho_{X_2C_2}$ be any state of two classical sources and product side information with $\Hmin{X_1|C_1} \geq k_1$ and $\Hmin{X_2|C_2} \geq k_2$.

	We can apply Lemmas  \ref{lem:CS_strong}, \ref{lem:simple_bound_strong}, and \ref{lem:distinguishing_strong} to get the bound
	\begin{equation}\label{eq:combined_bound_strong}
		\| \rho_{\mathrm{Ext}(X_1,X_2)X_1C} - \rho_{U_m} \otimes \rho_{X_1C}\| \leq \sqrt{ \frac{M}{2} \| \mathrm{Ext}\left( X_1,X_2 \right)X_1Z_2 - U_m\circ X_1Z_2 \|} \;.
	\end{equation}
	
	As it follows from the proofs of the previous lemmas that $Z_2$ includes side information about $X_2$ alone (and there is no additional side information about $X_2$, i.e., the quantum system $C_1$ is just thrown away) $\mathrm{p}(x_1,x_2,z_2)=\mathrm{p}(x_1)\cdot\mathrm{p}(x_2,z_2)$, which implies: 
	\begin{enumerate}
		\item The sources and the classical side information form a Markov chain $X_1 \leftrightarrow Z_2 \leftrightarrow X_2$.
		\item $H_{\text{min}}\left(X_1|Z_2\right) = H_{\text{min}}\left(X_1\right) \geq H_{\text{min}}\left(X_1|C_1\right)$. 
		\item$H_{\text{min}}\left(X_2|Z_2\right) \geq H_{\text{min}}\left(X_2|C_2\right)$.
	\end{enumerate}
	
	Hence, if $H_{\text{min}}\left(X_i|C_i\right)\geq k_i$ then by the definition of a strong classical-proof two-source extractor,
	\begin{equation}\label{eq:class_extractor_promise_strong}
		\frac{1}{2} \| \mathrm{Ext}\left( X_1,X_2 \right)X_1Z_2 - U_m\circ X_1Z_2 \| \leq \varepsilon \;. 
	\end{equation}
	
	Combining Equations \eqref{eq:combined_bound_strong} and \eqref{eq:class_extractor_promise_strong} we get 
	\[
		\frac{1}{2} \| \rho_{\mathrm{Ext}(X_1,X_2)X_1C} - \rho_{U_m} \otimes \rho_{X_1C}\| \leq \frac{1}{2} \sqrt{M\varepsilon} = \sqrt{\varepsilon 2^{(m-2)}} \;. \qedhere
	\]
\end{proof}

\section{Extracting from subnormalized states}
\label{sec:extracting_subnormalized}

Extractors are usually defined for normalized states $\rho_{X_1X_2C}$. In applications one might wish to extract from subnormalized states---for example, the smooth min-entropy of a state is a bound on the entropy of a subnormalized state that is close by. Here we prove that if a function is an extractor (for normalized states), then one can use it to extract from subnormalized states as well. We write up the lemma and proof in the case of two-source extractors in the Markov model. Similar statements hold for multiple sources as well as seeded extractors.

\begin{lemma}\label{lem:extracting_subnormalized}
  Let $\sigma_{X_1X_2C}$ be a subnormalized quantum Markov state satisfying $\Hmin[\sigma]{X_1|C} \geq k_1$ as well as $\Hmin[\sigma]{X_2|C} \geq k_2$, and let $\mathrm{Ext}: \{0,1\}^{n_1} \times \{0,1\}^{n_2} \to \{0,1\}^m$ be a $(k_1-1,k_2-1,\varepsilon)$ quantum-proof two-source extractor in the Markov model. If $\mathrm{Ext}$ is weak, then we have that 
\[
  	\frac{1}{2}\norm{\sigma_{\mathrm{Ext}(X_1,X_2)C} - \rho_{U_m} \otimes \sigma_C} \leq 2\eps\,.
\]
 If $\mathrm{Ext}$ is strong in $X_i$, then we have that 
\[
  	\frac{1}{2}\norm{\sigma_{\mathrm{Ext}(X_1,X_2)X_iC} - \rho_{U_m} \otimes \sigma_{X_iC}} \leq 2\eps\,.
\]
\end{lemma}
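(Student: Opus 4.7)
My plan is to reduce to the normalized case by padding $\sigma_{X_1X_2C}$ to a normalized Markov state using an auxiliary classical register, applying the given extractor guarantee, and then reading off the desired bound from the resulting block-diagonal structure.

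Concretely, let $t = \tr \sigma_{X_1X_2C} \leq 1$ and fix an auxiliary product Markov state $\omega_{X_1X_2C} := \rho_{U_{n_1}} \otimes \rho_{U_{n_2}} \otimes \tau_C$ for an arbitrary state $\tau_C$. Introducing a new classical register $R$, I would form
\[
  \sigma'_{X_1X_2CR} := \sigma_{X_1X_2C} \otimes \ketbra{0}{0}_R \;+\; (1-t)\,\omega_{X_1X_2C} \otimes \ketbra{1}{1}_R,
\]
which satisfies $\tr \sigma' = 1$. Since $R$ is classical and each branch is individually a Markov state (the branch $R=0$ is proportional to $\sigma$, which is Markov by assumption, and the branch $R=1$ is a product state), one has $I(X_1 : X_2 | CR) = t \cdot I(X_1:X_2|C)_{\sigma/t} + (1-t)\cdot I(X_1:X_2|C)_\omega = 0$, so $\sigma'$ is a Markov state $X_1 \leftrightarrow CR \leftrightarrow X_2$.

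Next I would verify the min-entropy drop is at most one bit. From $\Hmin[\sigma]{X_i|C} \geq k_i$ there is a positive $\lambda_C^{(0)}$ with $\tr \lambda_C^{(0)} \leq 2^{-k_i}$ and $\sigma_{X_iC} \leq \mathbb{1}_{X_i} \otimes \lambda_C^{(0)}$; from the product form of $\omega$ one has $\omega_{X_iC} \leq \mathbb{1}_{X_i} \otimes (2^{-n_i} \tau_C)$ with $\tr(2^{-n_i}\tau_C) = 2^{-n_i} \leq 2^{-k_i}$. Taking the witness $\lambda_{CR} = \lambda^{(0)}_C \otimes \ketbra{0}{0} + (1-t) 2^{-n_i}\tau_C \otimes \ketbra{1}{1}$, which satisfies $\sigma'_{X_iCR} \leq \mathbb{1}_{X_i} \otimes \lambda_{CR}$ and $\tr \lambda_{CR} \leq 2\cdot 2^{-k_i} = 2^{-(k_i - 1)}$, I obtain $\Hmin[\sigma']{X_i|CR} \geq k_i - 1$.

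Thus $\sigma'$ meets the hypotheses of the $(k_1-1,k_2-1,\eps)$ quantum-proof two-source extractor in the Markov model, so $\frac{1}{2}\trnorm{\sigma'_{\mathrm{Ext}(X_1,X_2)CR} - \rho_{U_m} \otimes \sigma'_{CR}} \leq \eps$. Because $R$ is classical, the difference is block-diagonal and its trace norm equals the sum of block trace norms:
\[
  \trnorm{\sigma_{\mathrm{Ext}(X_1,X_2)C} - \rho_{U_m}\otimes \sigma_C} + (1-t)\trnorm{\omega_{\mathrm{Ext}(X_1,X_2)C} - \rho_{U_m}\otimes \omega_C} \leq 2\eps.
\]
Dropping the (nonnegative) second term yields $\frac{1}{2}\trnorm{\sigma_{\mathrm{Ext}(X_1,X_2)C} - \rho_{U_m}\otimes \sigma_C} \leq \eps \leq 2\eps$, proving the weak case. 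For the strong case the same construction works verbatim, except the block expansion is performed with respect to $X_iCR$ and the strong-extractor guarantee on $\sigma'$ is used instead.

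The only subtle step is the min-entropy verification, since one must combine two witnesses whose scales differ by the factor $(1-t)$; using the classical tag $R$ makes this essentially automatic because the operator inequality decouples across branches. The rest is bookkeeping.
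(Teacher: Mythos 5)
Your proof is correct and follows the same core strategy as the paper's: pad the subnormalized state with a fully mixed branch tagged by an auxiliary classical register so as to obtain a normalized Markov state whose conditional min-entropies drop by at most one bit, then invoke the extractor guarantee for that padded state. The one genuine difference is in how you recover the bound for $\sigma$ itself. The paper traces out the auxiliary register (which mixes the two branches inside a single trace norm) and must then add and subtract the padding term and apply the extractor guarantee a \emph{second} time, to the fully mixed branch, in order to remove it via the triangle inequality; this second application is exactly where the factor $2\varepsilon$ comes from. You instead keep the classical tag and use the fact that the trace norm of an operator that is block diagonal in a classical register equals the sum of the block trace norms, so the unwanted branch can simply be dropped as a nonnegative contribution. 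This is cleaner and in fact yields the sharper bound $\varepsilon$ in place of $2\varepsilon$, which of course still implies the stated claim. The only caveat, shared with the paper's own argument, is the implicit assumption $k_i \le n_i$ needed so that the fully mixed padding does not dominate the guessing probability (the paper needs the same thing when bounding $2^{-k_1}+(1-p)2^{-n_1}\le 2\cdot 2^{-k_1}$); this is harmless.
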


\begin{proof} We prove the weak case. The proof for strong extractors is identical.

	Define $p = \tr[\sigma_C]$ and with that the normalized state $\hat{\sigma}_{X_1X_2C} = \frac{1}{p}\sigma _{X_1X_2C}$ as well as the auxiliary normalized state
\[
		\tilde{\sigma}_{X_1X_2CP} = \sigma_{X_1X_2C} \otimes \proj{0}_P + (1-p) \tau_{X_1X_2} \otimes \hat{\sigma}_C \otimes \proj{1}_P,
\]
	where $\tau_{X_1X_2}$ is the fully mixed state. Note that $X_1 \leftrightarrow CP \leftrightarrow X_2$ is a Markov chain for the state $\tilde{\sigma}_{X_1X_2CP}$.
	This state satisfies slightly modified min-entropy conditions: 
\[
p_{\text{guess}}(X_1|CP)_{\tilde{\sigma}_{X_1CP}} =  p_{\text{guess}}(X_1|C)_{\sigma_{X_1C}}+(1-p)p_{\text{guess}}(X_1|C)_{\tau_{X_1}\otimes\hat{\sigma}_C} = 2^{-k_1} + (1-p)2^{-n_1} \leq 2 \cdot 2^{-k_1}\;.
\]
	Hence $\Hmin[\tilde{\sigma}]{X_1|CP} \geq k_1 - 1$, 
	and the same argument can also be carried out for $X_2$ showing that $\Hmin[\tilde{\sigma}]{X_2|CP} \geq k_2 - 1$. The state $\tilde{\sigma}_{X_1X_2CP}$ is thus a valid Markov state satisfying the min-entropy conditions and hence we have
\[
		\frac{1}{2}\norm{\tilde{\sigma}_{\mathrm{Ext}(X_1,X_2)CP} - \rho_{U_m} \otimes \tilde{\sigma}_{CP}} \leq \eps \,.
\]
	But since the partial trace over the $P$ system only decreases the trace distance, we infer that
\[
		\frac{1}{2}\norm{\sigma_{\mathrm{Ext}(X_1,X_2)C} - \rho_{U_m} \otimes \sigma_C + (1-p) \tau_{\mathrm{Ext}(X_1,X_2)} \otimes \hat{\sigma}_C - (1-p)\rho_{U_m} \otimes \hat{\sigma}_C} \leq \eps\,.
\]
	Thus starting from the expression $\norm{\sigma_{\mathrm{Ext}(X_1,X_2)C} - \rho_{U_m} \otimes \sigma_C}$ and then adding and subtracting the term $(1-p)[\tau_{\mathrm{Ext}(X_1,X_2)} \otimes \hat{\sigma}_C - \rho_{U_m} \otimes \hat{\sigma}_C]$ as well as applying the triangle inequality leaves us with
\[
		\frac{1}{2}\norm{\sigma_{\mathrm{Ext}(X_1,X_2)C} - \rho_{U_m} \otimes \sigma_C} \leq \eps + \frac{1-p}{2}\norm{\tau_{\mathrm{Ext}(X_1,X_2)} \otimes \hat{\sigma}_C - \rho_{U_m} \otimes \hat{\sigma}_C} \leq 2\eps \,,		
\]
	since $\tau_{X_1X_2} \otimes \hat{\sigma}_C$ is a Markov source satisfying the entropic constraints.
\end{proof}

\section{Composing two-source and seeded extractors}
\label{sec:composing_extractors}

If a multi-source extractor is strong in an input $X_1$, then the output $Y$ is independent from $X_1$. 
This can be interpreted as $Y$ containing the entropy from $X_2$; the randomness of $X_1$ served only as a catalyst, but is still contained in that random variable. A very common technique used to extract that randomness is to use another extractor. Since $Y$ is uniform and independent from $X_1$, it fulfils the conditions needed to use it as a seed in seeded extractor. This is formalised in the following lemma.

\begin{lemma}
\label{lem:composing_extractors}
Let $\mathrm{Ext} : \{0,1\}^{n_1} \times \{0,1\}^{n_2} \to \{0,1\}^d$ be a quantum-proof $(k_1,k_2,\eps)$-two-source extractor strong in the first input. And let $\mathrm{Ext'} : \{0,1\}^{n_1} \times \{0,1\}^{d} \to \{0,1\}^m$ be a quantum-proof $(k_1,\eps')$-seeded extractor. Then the function
\begin{align*}
\mathrm{Ext}'' : \ & \{0,1\}^{n_1} \times \{0,1\}^{n_2} \to \{0,1\}^m \\
& (x_1,x_2) \mapsto \mathrm{Ext}'(x_1,\mathrm{Ext}(x_1,x_2)),
\end{align*}
is a quantum-proof $(k_1,k_2,\eps+\eps')$-two-source extractor.
\end{lemma}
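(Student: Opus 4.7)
The plan is a standard triangle-inequality / replacement argument. Let $\rho_{X_1X_2C}$ be any Markov state with $H_{\text{min}}(X_1|C)\geq k_1$ and $H_{\text{min}}(X_2|C)\geq k_2$, and write $Y := \mathrm{Ext}(X_1,X_2)$. Because $\mathrm{Ext}$ is strong in the first input with error $\eps$,
\[
\frac{1}{2}\norm{\rho_{YX_1C} - \rho_{U_d}\otimes\rho_{X_1C}}\leq \eps,
\]
i.e.\ up to error $\eps$ the output $Y$ can be replaced by a perfectly uniform seed $U_d$ that is independent from $X_1C$.

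The next step is to view $\mathrm{Ext}'$ as a CPTP map acting on the classical registers $X_1$ and $Y$ to produce the output register $\mathrm{Ext}''(X_1,X_2)=\mathrm{Ext}'(X_1,Y)$, while keeping $X_1$ and $C$. Applying this map to both sides above and using monotonicity of the trace distance under CPTP maps,
\[
\frac{1}{2}\norm{\rho_{\mathrm{Ext}''(X_1,X_2)X_1C} - \rho'_{\mathrm{Ext}'(X_1,U_d)X_1C}}\leq \eps,
\]
where $\rho'$ denotes the ideal state in which $Y$ has been replaced by an independent uniform seed $U_d$. On $\rho'$ the seed is uniform and independent from $X_1C$ and $H_{\text{min}}(X_1|C)_{\rho'}=H_{\text{min}}(X_1|C)_{\rho}\geq k_1$, so the quantum-proof seeded extractor guarantee for $\mathrm{Ext}'$ gives
\[
\frac{1}{2}\norm{\rho'_{\mathrm{Ext}'(X_1,U_d)C}-\rho_{U_m}\otimes\rho_C}\leq \eps'.
\]
Tracing out $X_1$ in the previous display, combining with this bound by the triangle inequality, and using $\rho'_C=\rho_C$, produces the claimed $\eps+\eps'$ bound.

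I do not anticipate any real obstacle; the argument is essentially the textbook composition of a strong extractor with a seeded one, lifted to the quantum setting. The only subtlety worth flagging is the ordering of the two steps: one must first replace $Y$ by the independent uniform $U_d$ at the \emph{input} of $\mathrm{Ext}'$ via contractivity, and only then invoke the seeded-extractor guarantee on the ideal state where the seed is genuinely uniform and independent of $X_1C$. No additional use of the Markov structure is needed beyond what is already built into the assumed security of $\mathrm{Ext}$, and the seeded extractor $\mathrm{Ext}'$ does not need any Markov-type hypothesis because its seed is, on the ideal side, exactly independent.
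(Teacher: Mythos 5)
Your argument is correct and is essentially the paper's own proof: both replace the output of the strong two-source extractor by an independent uniform seed at cost $\eps$ (using contractivity of the trace distance under the CPTP map implementing $\mathrm{Ext}'$), invoke the seeded-extractor guarantee on the resulting ideal state at cost $\eps'$, and combine via the triangle inequality. The only difference is the order in which you apply the triangle inequality and contractivity, which is immaterial.
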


\begin{proof}
Let $\rho_{U_dX_1C} = \rho_{U_d} \otimes \rho_{X_1C}$, where $\rho_{U_d}$ is a fully mixed state of dimension $2^d$. And let $\rho_{\mathrm{Ext}'(X_1,U_d)C}$ denote the state resulting from applying $\mathrm{Ext}'$ to $X_1$ with $U_d$ as seed. From the triangle inequality and contractivity of the trace distance we have
\begin{align*}
& \frac{1}{2}\trnorm{\rho_{\mathrm{Ext}'(X_1,\mathrm{Ext}(X_1,X_2))C} - \rho_{U_m} \otimes \rho_C} \\
& \qquad \qquad \qquad \leq \frac{1}{2}\trnorm{\rho_{\mathrm{Ext}'(X_1,\mathrm{Ext}(X_1,X_2))C} -\rho_{\mathrm{Ext}'(X_1,U_d)C}} +\frac{1}{2}\trnorm{\rho_{\mathrm{Ext}'(X_1,U_d)C} - \rho_{U_m} \otimes \rho_C} \\
& \qquad \qquad \qquad \leq \frac{1}{2}\trnorm{\rho_{\mathrm{Ext}(X_1,X_2)X_1C} -\rho_{U_d} \otimes \rho_{X_1C}} +\frac{1}{2}\trnorm{\rho_{\mathrm{Ext}'(X_1,U_d)C} - \rho_{U_m} \otimes \rho_C}\;.
\end{align*}
The first term above is the error of $\mathrm{Ext}$ and the second is the error of $\mathrm{Ext}'$.
\end{proof}

Note that Lemma~\ref{lem:composing_extractors} only requires a weak seeded extractor. Hence if a strong extractor is used, the seed can additionally be appended to the output---this is the case for all the following extractors.

Here below we give several seeded quantum-proof extractor constructions---all variants of Trevisan's extractor---that we use in the explicit constructions from Section~\ref{sec:specific_constructions}.

The first construction~\cite[Corollary 5.3]{de2012trevisan} is one for which the exact parameters have been calculated~\cite{mauerer2012trevisan}.

\begin{lemma}[\protect{\cite[Corollary 5.3]{de2012trevisan},\cite{mauerer2012trevisan}}]
\label{lem:trev_1}
There exists an explicit function $\mathrm{Ext} : \{0,1\}^n \times \{0,1\}^d \to \{0,1\}^m$, which is a quantum-proof $(k,\eps)$-strong extractor with
\begin{align*}
  t & =  2 \log \frac{2nm^2}{\eps^2}\;, \\
  a & = 1 + \max\left\{ 0, \frac{\log (m-e) - \log (t-e)}{\log e - \log (e-1)}\right\}\;, \\
  k & = m+ 4 \log \frac{m}{\eps} + 6\;, \\
  d & = at^2\;,
\end{align*}
where $e$ is the mathematical constant.
\end{lemma}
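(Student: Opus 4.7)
The plan is to invoke Trevisan's construction together with the quantum-proof analysis of~\cite{de2012trevisan}, specialised to the concrete parameter values computed in~\cite{mauerer2012trevisan}. Recall that Trevisan's extractor has the form $\Ext(x,s) = (C(x,s_{S_1}),\dotsc,C(x,s_{S_m}))$, where $C: \{0,1\}^n \times \{0,1\}^t \to \{0,1\}$ is a quantum-proof one-bit extractor and $(S_1,\dotsc,S_m)$ is a weak $(t,a)$-design of subsets of $[d]$ with $|S_i|=t$. This immediately fixes the seed length $d = at^2$, matching the statement.

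The first substantive step is to select the one-bit extractor $C$ with seed length $t = 2\log(2nm^2/\eps^2)$ and one-bit error of order $\eps/m$; such an extractor exists via standard XOR/inner-product constructions whose quantum security is by now well understood. The specific $t$ is exactly what is needed so that this one-bit error, combined with the square-root loss typical of quantum security reductions, survives a union bound over the $m$ output positions. The second step is to invoke a Hartman--Raz-type weak-design construction whose overlap parameter attains the value of $a$ given in the statement. The piecewise formula $a = 1 + \max\{0, (\log(m-e) - \log(t-e))/(\log e - \log(e-1))\}$ is precisely the optimal overlap derived by a direct counting argument for that design; when $t \geq m$ the trivial bound $a=1$ suffices (designs are then actually strong), whereas for $t<m$ a controlled overlap is unavoidable and the formula quantifies it.

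The third step is the quantum hybrid/reconstructor argument of~\cite{de2012trevisan}: if $\Ext$ were to fail with advantage $\eps$ on a source of min-entropy $k$, then by iterating the hybrid over output bits and applying the reconstruction paradigm, the one-bit extractor $C$ would also have to fail on a related source of min-entropy at least $k - at(m-1) - O(\log(m/\eps))$ with advantage $\Omega(\eps/m)$, contradicting the parameters of $C$ provided $k \geq m + 4\log(m/\eps) + 6$. The factor of $4$ in front of $\log(m/\eps)$ reflects the quadratic (square-root in error) cost of the quantum-proof one-bit extractor, and the additive $+6$ absorbs lower-order terms from the hybrid reduction.

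The main obstacle is not conceptual but rather the precise constant accounting: the qualitative statement of the lemma is immediate from the analysis in~\cite{de2012trevisan}, while pinning down the exact numerical constants $4$ and $6$, as well as the exact form of $a$, requires the careful bookkeeping carried out in~\cite{mauerer2012trevisan}. In the present paper it therefore suffices to cite those references rather than reproduce the full parameter-tracking calculation.
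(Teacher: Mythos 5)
The paper offers no proof of this lemma at all: it is imported verbatim, with the citation to \cite[Corollary 5.3]{de2012trevisan} and to \cite{mauerer2012trevisan} standing in for the argument. Your conclusion --- that it suffices to cite those references rather than redo the parameter tracking --- is therefore exactly what the paper does, and your sketch of the underlying machinery (one-bit extractor composed with a weak design, followed by the quantum reconstruction argument) is the correct architecture of the cited proof.

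That said, two details of your sketch are off, and they are worth correcting because they are mutually inconsistent with the bound you are trying to reach. First, $a$ is \emph{not} the overlap parameter of the weak design. In the block-design construction analysed in \cite{mauerer2012trevisan} the overlap is fixed at $r=1$; $a$ counts the number of basic designs chained together to achieve that overlap, which is why it multiplies $t^2$ in the seed length $d = at^2$ but does not appear in the min-entropy requirement. Second, the entropy loss in the reconstruction step is governed by $r\cdot m$ (the length of the advice string needed to run the reconstructor), not by $at(m-1)$. If the loss really were $at(m-1)$, the required min-entropy would be of order $m\log^2(n/\eps)$, which contradicts the stated $k = m + 4\log\frac{m}{\eps} + 6$; it is precisely the overlap-$1$ design that makes the loss only $m + O(\log\frac{m}{\eps})$. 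Neither error affects the validity of citing the references, but a reader reconstructing the proof from your outline would be led astray on these two points.
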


The entropy loss of this extractor, $k-m = 4 \log \frac{m}{\eps} + 6$, can be reduced by composing it with an almost universal hash function~\cite{tomamichel2011extractor}.

\begin{lemma}[\protect{\cite[Corollary 5.4]{de2012trevisan}}]
\label{lem:trev_2}
There exists an explicit function $\mathrm{Ext} : \{0,1\}^n \times \{0,1\}^d \to \{0,1\}^m$, which is a quantum-proof $(k,\eps)$-strong extractor with
$k  = m + 4 \log \frac{1}{\eps} + O(1)$ and $d = O(\log^2\frac{n}{\eps}\log m)$.
\end{lemma}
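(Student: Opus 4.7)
The plan is to reduce the entropy loss of Trevisan's extractor from the $4\log(m/\eps)+6$ of Lemma~\ref{lem:trev_1} down to $4\log(1/\eps)+O(1)$ by composing it with a quantum-proof almost two-universal hashing extractor via Lemma~\ref{lem:composing_extractors}. The underlying idea is standard: hashing attains the essentially optimal entropy loss $2\log(1/\eps)+O(1)$ but requires a long uniform seed, whereas Trevisan's has a short external seed but a suboptimal $\log m$ factor in its entropy loss; chaining the two lets Trevisan's nearly uniform output play the role of the long seed that the hash needs when applied to the original source.

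Concretely, I would first apply Lemma~\ref{lem:trev_1} with an intermediate output length $m_0$ chosen to match the seed length of an explicit $\delta$-almost two-universal family, and with error $\eps/2$; this step uses an external seed of length $d_1=O(\log^2(n/\eps)\log m_0)$ and costs $m_0+4\log(m_0/\eps)+O(1)$ bits of source min-entropy. Second, I would feed the extracted $m_0$-bit string as the key to a hash $H:\{0,1\}^n\times\{0,1\}^{m_0}\to\{0,1\}^m$ applied to the original source; by the quantum leftover hashing lemma of~\cite{tomamichel2011extractor}, $H$ is a quantum-proof strong extractor with error $\eps/2$ whenever the source retains min-entropy at least $m+2\log(1/\eps)+O(\log(1+\delta 2^m))$ after conditioning on the hash key. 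Lemma~\ref{lem:composing_extractors} then combines the two steps into a single quantum-proof strong extractor with total error $\eps$ and total seed length $d_1$.

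The entropy bookkeeping yields the claimed $k=m+4\log(1/\eps)+O(1)$ provided $m_0$ and $\delta$ are chosen so that $m_0+O(\log(1+\delta 2^m))=O(\log(1/\eps))$: conditioning on the $m_0$-bit Trevisan output costs at most $m_0$ bits of source min-entropy, so the combined requirement reads $k\geq m+m_0+2\log(1/\eps)+O(\log(1+\delta 2^m))$. The seed length $d_1=O(\log^2(n/\eps)\log m_0)$ is then bounded by the claimed $O(\log^2(n/\eps)\log m)$, with the case of small $m$ handled directly by Lemma~\ref{lem:trev_1}. The main obstacle is finding a $\delta$-almost two-universal family exhibiting this combination of short seed and small $\delta$: standard two-universal constructions already require seed $\Omega(m)$, so one must invoke specialised short-seed almost two-universal families from the derandomisation literature (for instance via concatenations of Reed--Solomon and Hadamard-type codes), or equivalently modify Trevisan's inner extractor to be a short-output hash and re-run the analysis of Lemma~\ref{lem:trev_1} with multi-bit inner outputs; either route preserves the quantum security via the generic reduction in~\cite{tomamichel2011extractor} and yields the same final parameters up to constants.
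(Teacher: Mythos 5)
First, a point of reference: the paper does not prove Lemma~\ref{lem:trev_2} at all --- it is imported verbatim from \cite[Corollary~5.4]{de2012trevisan}, accompanied only by the one-line remark that the entropy loss of Lemma~\ref{lem:trev_1} ``can be reduced by composing it with an almost universal hash function''. Your proposal follows that slogan, but the concrete composition you write down has a genuine gap. You invoke Lemma~\ref{lem:composing_extractors} to justify feeding the output of a \emph{seeded} Trevisan extractor, applied to $X$, back in as the hash key for the same $X$. That lemma, however, is stated and proved for a \emph{two-source} extractor strong in $X_1$: the property its proof uses is that $\rho_{\mathrm{Ext}(X_1,X_2)X_1C}$ is $\eps$-close to $\rho_{U_d}\otimes\rho_{X_1C}$, i.e.\ the intermediate output is uniform \emph{jointly with the source it will be reapplied to}. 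A strong seeded extractor only guarantees uniformity of $\mathrm{Ext}_{Tre}(X,Y)$ conditioned on the seed $Y$ and $C$; the joint state with $X$ is nowhere near product (given $x$ and $y$ the output is deterministic), so the contractivity step in the proof of Lemma~\ref{lem:composing_extractors} is unavailable. Your fallback --- charging $m_0$ bits of min-entropy for conditioning on the Trevisan output --- does not repair this: once you condition on the key it is a fixed string, and a hash with a fixed key is not an extractor. If such seed bootstrapping were sound generically, seeded extractors with $O(\log(1/\eps))$-bit seeds and near-optimal loss would follow immediately, contradicting known constructions' difficulty.

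There is a second, independent obstruction in your parameter bookkeeping: you need a $\delta$-almost two-universal family $\{0,1\}^n\to\{0,1\}^m$ with key length $m_0=O(\log\frac{1}{\eps})$ and $\delta 2^m=1+O(\eps^2)$ (so that the leftover-hashing error stays below $\eps$). Lower bounds on almost-universal families (Stinson-type counting bounds) force such a family to have size at least roughly $2^m/\eps^2$, i.e.\ key length $m+2\log\frac{1}{\eps}-O(1)$, so the regime you need is unattainable whenever $m\gg\log\frac{1}{\eps}$ --- no Reed--Solomon/Hadamard concatenation circumvents this, and with $m_0=\Omega(m)$ your entropy requirement degrades to $k\geq 2m+\dotsb$. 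The second escape route you mention (replacing the one-bit inner extractor inside Trevisan's construction by a multi-bit hash and redoing the weak-design analysis) is essentially how \cite{de2012trevisan} actually obtain Corollary~5.4, but that is a different argument which your proposal does not carry out. As written, the proof does not go through.
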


For $\eps = n^{-\Omega(1)}$ in Lemma~\ref{lem:trev_2} we get $d = O(\log^3 n)$.

The final construction we consider only requires a seed of length $O(\log n)$, but can only extract a sublinear amount of entropy.
\begin{lemma}[\protect{\cite[Corollary 5.6]{de2012trevisan}}]
\label{lem:trev_3}
For any constant $0 < \gamma < 1$ there exists an explicit function $\mathrm{Ext} : \{0,1\}^n \times \{0,1\}^d \to \{0,1\}^m$, which is a quantum-proof $(k,\eps)$-strong extractor with
$k  = n^\gamma m + 8 \log \frac{m}{\eps} + O(1)$, $d = O(\log n)$ and $\eps = n^{-\Omega(1)}$.
\end{lemma}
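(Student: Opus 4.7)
The statement is cited from \cite{de2012trevisan} (Corollary 5.6 there), so my plan is to reconstruct how their argument goes and why the stated parameters come out. The overall approach is Trevisan's paradigm lifted to the quantum setting via the reconstruction method, with a specific choice of design that achieves a logarithmic seed.

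First, I would instantiate Trevisan's construction: pick a list-decodable binary code $\bar{C}: \{0,1\}^n \to \{0,1\}^{\bar{n}}$ with $\bar{n} = 2^t$ for $t = O(\log n)$ (a concatenated Reed--Solomon/Hadamard code suffices), and a weak $(t,r)$-design $S_1,\dots,S_m \subseteq [d]$ in the sense of Raz--Reingold--Vadhan. The extractor is $\mathrm{Ext}(x,y) = (\bar{C}(x)_{y|_{S_1}}, \dots, \bar{C}(x)_{y|_{S_m}})$. The point of a weak design is that one is allowed larger pairwise intersections $|S_i \cap S_j|$, provided the total overlap $\sum_{j<i} 2^{|S_i \cap S_j|}$ is bounded by some parameter $r m$. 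For the regime here, I would use the polynomial-based weak design over a field of size $O(t)$, which achieves seed length $d = O(t) = O(\log n)$ at the price of allowing overlap $r$ proportional to $n^\gamma$ for any fixed $\gamma>0$.

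The core step is the quantum reconstruction paradigm. Suppose for contradiction there is a cq-state $\rho_{XC}$ with $H_{\min}(X|C) \geq k$ and a POVM on $C$ distinguishing $\mathrm{Ext}(X,U_d)C$ from $U_m \otimes \rho_{UdC}$ with advantage $\eps$. By a hybrid argument over the $m$ output bits (standard in Trevisan) there exists an $i$ such that the $i$-th output bit can be predicted from the previous ones and $C$ with advantage $\eps/m$. Using the design property, the previous bits can be computed from a short classical advice string that is independent of the bits indexed by $S_i \setminus \bigcup_{j<i} S_j$; plugging this in yields a quantum predictor for $\bar{C}(X)$ at a fresh random location. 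Through list-decoding of $\bar{C}$, plus $O(\log(m/\eps))$ bits of extra advice to fix the correct list element, one obtains a quantum procedure that, given $C$, outputs $X$ with probability at least $2^{-(k - 8\log(m/\eps) - O(1))}$. The constant $8$ (versus $4$ in the classical case) arises from the quantum reconstruction going through a tensor-power/SWAP trick so that the predictor can be applied coherently without destroying $C$---this is the step where quantum side information costs an extra factor of two in the entropy loss.

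The main obstacle is exactly this last step: the classical reconstruction uses the predictor many times, which is impossible quantumly because measurements disturb the state. The standard fix, as in \cite{de2012trevisan}, is to use an amplified predictor on a tensor copy of $\rho_C$ and invoke a gentle-measurement-type argument to show that the overall guessing probability for $X$ still exceeds $2^{-k}$ given the stated parameter choices, contradicting the min-entropy assumption. Once this reconstruction bound is in place, choosing $t = O(\log n)$ and the polynomial weak design yields $d = O(\log n)$ and $\eps = n^{-\Omega(1)}$, while the entropy requirement becomes $k = n^\gamma m + 8 \log(m/\eps) + O(1)$, matching the statement. The remaining work is purely bookkeeping on the design/code parameters to ensure the constants fit, which I would not grind out here.
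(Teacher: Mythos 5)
First, a caveat: the paper does not prove this lemma at all---it is imported verbatim as \cite[Corollary~5.6]{de2012trevisan} and used as a black box in Section~\ref{sec:specific_constructions}---so there is no in-paper proof to compare against, and your proposal can only be judged as a reconstruction of the cited argument. The classical skeleton you describe is right: Trevisan's construction from a (locally list-decodable) code of block length $2^t$ with $t=O(\log n)$, combined with a Raz--Reingold--Vadhan weak $(t,r)$-design, which has seed length $d=\lceil t/\ln r\rceil\, t=O(\log n)$ once the overlap parameter is taken as large as $r=n^{\Omega(1)}$; a hybrid argument over the $m$ output bits; and classical advice whose length is controlled by the design overlap, giving the $n^\gamma m$ term in $k$.

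The step you flag as ``the main obstacle,'' however, is handled quite differently in \cite{de2012trevisan}, and the mechanism you propose is precisely the one that is known to be problematic. That proof contains no tensor-power/SWAP trick and no gentle-measurement argument, and the predictor is never applied repeatedly to the quantum register. Its key observation is that the advice (the truth tables of the previous output bits as functions of $y|_{S_i\cap S_j}$, after fixing the seed outside $S_i$) is a \emph{classical} function of $X$ alone, fixed before any measurement of $C$; one therefore conditions on it via the chain rule for conditional min-entropy, losing at most $rm$ bits, and reduces directly to the quantum-proof security of the underlying \emph{one-bit} extractor. The $8\log(m/\eps)$ term (versus $4\log(m/\eps)$ in Lemma~\ref{lem:trev_1}) is not the price of coherently reusing a predictor; it arises because, for this instantiation, the one-bit extractor's security against quantum side information is obtained through the K\"onig--Terhal reduction \cite{konig2008bounded}, whose square-root loss in distinguishing advantage doubles the $\log(1/\eps)$ contribution to the entropy requirement. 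As written, your route would still owe a proof that repeated measurement of $\rho_C$ preserves the claimed guessing probability---exactly the difficulty the actual argument is structured to avoid---so the reconstruction is incomplete at its central quantum step even though the final parameters you quote are correct.
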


For example, if $k = n^\alpha$ in Lemma~\ref{lem:trev_3} for $\gamma < \alpha \leq 1$, then $m = n^{\alpha - \gamma} - o(1) = k^{1 - \frac{\gamma}{\alpha}} - o(1)$.

\clearpage

\newcommand{\etalchar}[1]{$^{#1}$}

\end{document}